\newtheorem{theo}{Theorem}
\newtheorem{lemm}{Lemma}
\newtheorem{prop}{Proposition}
\newtheorem{coro}{Corollary}
\theoremstyle{definition}
\newtheorem{defi}{Definition}
\newtheorem{rem}{Remark}
\newtheorem{exam}{Example}
\newcommand{\defarrow}{\stackrel{\mathrm{def.}}{\Leftrightarrow}}
\newcommand{\cmplx}{\mathbb{C}}
\newcommand{\realn}{\mathbb{R}}
\newcommand{\natn}{\mathbb{N}}
\newcommand{\tr}{\mathrm{tr}}
\newcommand{\cL}{\mathcal{L}}
\newcommand{\cH}{\mathcal{H}}
\newcommand{\Hin}{\mathcal{H}_\mathrm{in}  }
\newcommand{\Hout}{\mathcal{H}_\mathrm{out}  }
\newcommand{\cK}{\mathcal{K}}
\newcommand{\LH}{\mathcal{L} (\mathcal{H}) }
\newcommand{\LHin}{\mathcal{L} (\mathcal{H}_\mathrm{in}  ) }
\newcommand{\LHout}{\mathcal{L} (\mathcal{H}_\mathrm{out}  ) }
\newcommand{\LHone}{\mathcal{L}  (\mathcal{H}_1  )}
\newcommand{\LK}{\mathcal{L}  (\mathcal{K})}
\newcommand{\LKone}{\mathcal{L}  (\mathcal{K}_1  )}
\newcommand{\SH}{\mathcal{S} (\mathcal{H}) }
\newcommand{\SHin}{\mathcal{S} (\mathcal{H}_{\mathrm{in}}) }
\newcommand{\Th}{\mathcal{T} (\mathcal{H}) }
\newcommand{\cstar}{$C^\ast$}
\newcommand{\A}{\mathcal{A}}
\newcommand{\B}{\mathcal{B}}
\newcommand{\C}{\mathcal{C}}
\newcommand{\D}{\mathcal{D}}
\newcommand{\Ain}{\mathcal{A}_{\mathrm{in}}}
\newcommand{\M}{\mathcal{M}}
\newcommand{\N}{\mathcal{N}}
\newcommand{\Min}{\mathcal{M}_{\mathrm{in}}}
\newcommand{\chset}[2]{ \mathbf{Ch}  (#1 \to #2 )  }
\newcommand{\ch}[1]{\mathbf{Ch}  (#1 )}
\newcommand{\nchset}[2]{ \mathbf{Ch}_\sigma  (#1 \to #2 )  }
\newcommand{\nch}[1]{\mathbf{Ch}_\sigma  (#1 )}
\newcommand{\Thmin}{\Theta_{\mathrm{min}}}
\newcommand{\Thmax}{\Theta_{\mathrm{max}}}
\newcommand{\E}{\mathcal{E}}
\newcommand{\I}{\mathcal{I}}
\newcommand{\vph}{\varphi}
\newcommand{\cS}{\mathcal{S}}
\newcommand{\Ss}{\mathcal{S}_{\sigma}}
\newcommand{\oM}{\mathsf{M}}
\newcommand{\oN}{\mathsf{N}}
\newcommand{\oP}{\mathsf{P}}
\newcommand{\atensor}{\otimes_{\mathrm{alg}}}
\newcommand{\ntensor}{\overline{\otimes}}
\newcommand{\maxtensor}{\otimes_{\mathrm{max}}}
\newcommand{\mintensor}{\otimes_{\mathrm{min}}}
\newcommand{\gtensor}{\otimes_{\gamma}}
\newcommand{\btensor}{\otimes_{\mathrm{bin}}}
\newcommand{\id}{\mathrm{id}}
\newcommand{\1}{\mathds{1}}
\newcommand{\cocp}{\preccurlyeq_{\mathrm{CP}}}
\newcommand{\eqcp}{\sim_{\mathrm{CP}}}
\newcommand{\concp}{\preccurlyeq_{ \mathrm{CP}_\sigma}}
\newcommand{\eqncp}{\sim_{\mathrm{CP}_\sigma}}
\newcommand{\postp}{\preceq}
\newcommand{\posteq}{\simeq}
\newcommand{\comp}{\bowtie}
\newcommand{\ncomp}{\comp_{\sigma}}
\newcommand{\maxcomp}{\comp_{\mathrm{max}}}
\newcommand{\mincomp}{\comp_{\mathrm{min}}}
\newcommand{\bcomp}{\comp_{\mathrm{bin}}}
\newcommand{\gcomp}{\comp_{\gamma}}
\newcommand{\norm}[1]{\lVert #1 \rVert}
\newcommand{\maxnorm}[1]{\norm{ #1 }_{\mathrm{max}}}
\newcommand{\minnorm}[1]{\norm{ #1 }_{\mathrm{min}}}
\newcommand{\bnorm}[1]{\norm{ #1 }_{\mathrm{bin}}}
\newcommand{\gnorm}[1]{\norm{ #1 }_{\gamma}}
\newcommand{\phth}{\varphi_\theta}
\newcommand{\psth}{\psi_\theta}
\newcommand{\F}{\mathcal{F}}
\newcommand{\seE}{(\M , \Theta , (\varphi_\theta)_{\theta \in \Theta} )}
\newcommand{\seF}{(\N , \Theta , (\psi_\theta)_{\theta \in \Theta} )}
\newcommand{\thin}{\theta \in \Theta}
\begin{document}
\preprint{}
\title[]{%
Quantum incompatibility of channels with general outcome operator algebras%
}
\author{Yui Kuramochi}
\affiliation{School of Physics and Astronomy, Sun Yat-Sen University (Zhuhai Campus),
Zhuhai 519082, China}
\email{yui.tasuke.kuramochi@gmail.com}
\date{}

\begin{abstract}
A pair of quantum channels are said to be incompatible if they cannot be realized
as marginals of a single channel.
This paper addresses the general structure of 
the incompatibility of completely positive channels
with a fixed quantum input space and with general outcome operator algebras.
We define a compatibility relation for such channels by identifying 
the composite outcome space as the maximal (projective) $C^\ast$-tensor product
of outcome algebras.
We show theorems that characterize this compatibility relation
in terms of the concatenation and conjugation of channels,
generalizing the recent result for channels with quantum outcome spaces.
These results are applied to the positive operator valued measures (POVMs)
by identifying each of them with the corresponding quantum-classical (QC) channel.
We also give a characterization of the maximality of a POVM 
with respect to the post-processing preorder
in terms of the conjugate channel of the QC channel.
We consider another definition of compatibility of normal channels 
by identifying the composite outcome space 
with the normal tensor product of the outcome von Neumann algebras.
We prove that for a given normal channel the class of 
normally compatible channels is upper bounded by
a special class of channels called tensor conjugate channels.
We show the inequivalence of the $C^\ast$- and normal 
compatibility relations for QC channels,
which originates from the possibility and impossibility of copying operations
for commutative von Neumann algebras in $C^\ast$- and normal 
compatibility relations, respectively.
\end{abstract}
\pacs{03.67.-a, 03.65.Ta, 02.30.Tb}
\keywords{quantum incompatibility, tensor products of operator algebras, 
concatenation relation, conjugate channel}
\maketitle

\section{Introduction} \label{sec:introduction}
Impossibility of simultaneous realizations of two quantum operations 
is a fundamental feature of quantum theory.
For example,
we cannot simultaneously measure the position and momentum of 
a quantum particle due to the celebrated uncertainty relation.
In general, a pair of quantum operations are said to be compatible 
if they can be realized as marginals of a single operation 
and incompatible if cannot.
Recently
the concept of incompatibility of 
completely positive (CP) quantum channels with input and outcome quantum spaces
has been introduced\cite{1751-8121-49-12-123001,1751-8121-50-13-135302},
which is applicable to the measurements on a pair of discrete observables.
In Ref.~\onlinecite{1751-8121-50-13-135302}, it is proved 
that for a finite-dimensional quantum channel $\Lambda ,$
its conjugate (or complementary) 
channel~\cite{devetak2005capacity,holevo2007complementary,king2007properties} 
$\Lambda^c$
is maximal in the concatenation preorder among the channels compatible with $\Lambda .$
In addition, it is also shown that
the class of channels compatible with a given channel $\Lambda$
characterizes the equivalence class of $\Lambda$
with respect to the concatenation equivalence relation.

Then one may ask, as a natural generalization, 
whether we have similar results for channels with a quantum input space 
and with general outcome operator algebras,
especially for measurements on continuous observables like position and momentum.
This paper addresses this generalization problem and 
gives an affirmative answer.

Now we outline the contents of this paper.
For the generalization to arbitrary outcome operator algebras, 
one of the main difficulties is the proper choice of 
the tensor product of the outcome algebras corresponding to the composite outcome system.
As known in the field of operator algebras, 
a given pair of \cstar-algebras have in general many inequivalent \cstar-tensor products.
We can also define normal tensor product if the outcome operator algebras 
are von Neumann algebras. 
These notions of tensor products lead to various definitions of compatibility relations
of channels,
which are consistent with the (normal) concatenation preorder relation for channels
(Sections~\ref{sec:prel} and \ref{sec:channel}).

The main finding of this paper is that,
if we define the compatibility relation by the maximal (or projective) tensor product
(Definition~\ref{defi:chcomp})
and generalize the concept of the conjugate channel to the commutant conjugate channel
(Definition~\ref{defi:cconj}),
then, 
for channels with an input quantum space and general outcome \cstar-algebras,
the commutant conjugate channel is maximal 
in the concatenation preorder
among the compatible channels
(Section~\ref{sec:main}, Theorems~\ref{theo:univbmax} and \ref{theo:univmax}).
We also establish that the class of compatible channels determines the concatenation 
equivalence class of a normal channel (Theorem~\ref{theo:equiv}).
As a by-product of the proof of this,
we find that the normal and non-normal concatenation relations coincide
for normal channels 
(Corollary~\ref{coro:nnonn})
and for statistical experiments (Corollary~\ref{coro:se}).

We also consider the quantum-classical (QC) channel,
which is a normal channel with a commutative outcome von Neumann algebra
and corresponds to a quantum measurement
described by a positive operator valued measure (POVM)
(Section~\ref{sec:povm}).
In this case, since \cstar-tensor product is unique if 
one of the algebras is commutative,
all the compatibility relations defined by \cstar-tensor products coincide.
We define the compatibility of a POVM and a channel by 
the existence of a joint instrument of them 
and show that this compatibility is consistent with the compatibility of the QC channel
(Theorem~\ref{theo:povmch}).
We also give a characterization of the maximality~\cite{Dorofeev1997349,ISI:000236487200002} 
of a POVM
with respect to the classical post-processing preorder
in terms of the conjugate channel of the QC channel.

Another natural choice of tensor product for normal channels
is the normal ($W^\ast$-) tensor product of outcome von Neumann algebras.
We show that, for a given normal channel,
the class of normally compatible channels is upper bounded in the concatenation preorder
by a special class of channels with quantum outcome spaces
called tensor-conjugate channels (Section~\ref{sec:normal}).
The tensor conjugate channels of a given normal channel are, in general, 
not concatenation equivalent to the commutant conjugate channel.
We also compare the \cstar- and normal compatibility relations focusing on
QC channels (Section~\ref{sec:comparison}).
We show that, for a commutative von Neumann algebra, 
the normal compatibility relation does not allow 
the universal copying (or broadcasting) operation 
unless the algebra
is atomic, whereas the \cstar-compatibility relation does.
This fact leads to an example of a POVM that is normally incompatible with itself,
which explicitly shows the inequivalence of the \cstar- and normal compatibility relations.
Finally we give some open problems related to our results
(Section~\ref{sec:conclusion}).

\section{Preliminaries on operator algebras and channels} \label{sec:prel}
In this section we introduce some preliminaries 
on the operator algebras and CP channels between them.
For general references we refer to 
Refs.~\onlinecite{takesakivol1,brown2008c,paulsen_2003}.

\subsection{\cstar-algebra}
Let $\A$ be a \cstar-algebra.
Throughout this paper we only consider unital \cstar-algebras
and the unit element of $\A$ is denoted by $\1_{\A} .$
A positive linear functional $\vph \in \A^\ast$ is called a state 
if $\vph (\1_{\A}) =1$ and the set of states on $\A$ is denoted
by $\cS (\A) .$
For each linear functional $\varphi \in \A^\ast$ and $A \in \A ,$
$\varphi (A)$ is also written as $\braket{ \varphi   ,  A  } .$

Let $\A$ and $\B$ be \cstar-algebras
and $\Lambda \colon \A \to \B$ be a linear map.
$\Lambda$ is called unital if $\Lambda (\1_\A) = \1_\B .$
$\Lambda$ is called positive if 
$\Lambda (A) \geq 0$
for $0 \leq A \in \A . $
For positive $\Lambda ,$
$\Lambda$ is called faithful if $\Lambda (A^\ast A) = 0$ implies
$A = 0$ for $A \in \A .$
$\Lambda $ is called CP
if 
\[
	\sum_{i,j =1}^n
	B^\ast_i 
	\Lambda (A^\ast_i A_j)
	B_j
	\geq 0
\]
for each $n \geq 1 , $
$A_i \in \A ,$
and $B_i \in \B $
($ 1\leq i \leq n$).
$\Lambda$ is called a channel (in the Heisenberg picture) 
if $\Lambda$ is unital and CP.
The set of channels from $\A$ to $\B$ is denoted by
$\chset{\A}{\B}$
and $\chset{\A}{\A}$ is written as $\ch{\A} .$
For each channel $\Gamma \in \chset{\A}{\B} ,$
the codomain $\B$ and the domain $\A$
are called the input and outcome spaces of $\Gamma ,$
respectively.
The identity channel on $\A$ is denoted by $\id_\A .$

Let $\cH$ be a Hilbert space.
The inner product on $\cH$ is denoted by 
$\braket{\psi|\vph}$
($\psi , \vph \in \cH$),
which is linear and antilinear with respect to 
$\vph$ and $\psi ,$ respectively.
We denote by $\LH ,$ $\Th ,$ and $\SH$ 
the sets of bounded, trace-class, and density operators on 
$\cH ,$ respectively.
The identity operator on $\cH$ is written as $\1_{\cH} .$

Let $\A$ be a \cstar-algebra,
let $\cH$ be a Hilbert space,
and let $\Lambda \in \chset{\A}{\LH}$ be a channel.
A triple $(\cK , \pi  , V)$ is called a Stinespring representation of $\Lambda$
if
$\cK$ is a Hilbert space,
$\pi \colon \A \to \LK$ is a representation 
(i.e.\ non-degenerate $\ast$-homomorphism)
from $\A$ to $\cK ,$
$V \colon \cH \to \cK$ is a linear isometry,
and $\Lambda$ is represented as
\[
	\Lambda (A) = V^\ast \pi (A) V
	,
	\quad
	(\forall A \in \A) .
\]
A Stinespring representation $(\cK , \pi  , V)$ of $\Lambda$
is called minimal if 
the closed linear span of
$\pi (\A) V \cH $
coincides with $\cK .$
According to Stinespring's dilation theorem~\cite{1955stinespring},
any channel $\Lambda \in \chset{\A}{\LH}$ has a 
minimal Stinespring representation
unique up to unitary equivalence.

For a channel 
$\Lambda \in \chset{\A}{\B} ,$
the set
\[
	M_\Lambda
	:=
	\Set{
	A  \in \A
	|
	\Lambda(A^\ast A)
	=
	\Lambda (A^\ast) \Lambda (A) , \,
	\Lambda (A A^\ast)
	=
	\Lambda (A) \Lambda (A^\ast) 
	}
\]
is called the multiplicative domain of $\Lambda .$
For $A \in \A ,$
$A \in M_\Lambda$
if and only if
$\Lambda (BA) = \Lambda (B) \Lambda (A) $
and
$\Lambda (AB) = \Lambda (A) \Lambda (B) $
for all $B \in \A .$

Let $\A$ be a \cstar-algebra.
The (universal) enveloping von Neumann algebra $\A^{\ast \ast }$
of $\A$ is the von Neumann algebra 
defined by $\pi_U (\A)^{\prime \prime} ,$
where $(\pi_U , \cK_U)$ is the direct sum of GNS-representations
taken over all the states on $\A .$
Here, 
for a Hilbert space $\cH $ and
a subset $S \subseteq \LH ,$
$S^\prime$ denotes the commutant of $S.$
By the faithfulness of the representation $\pi_U ,$
$\A$ can be regarded as a \cstar-subalgebra of $\A^{\ast \ast}$
ultraweakly dense in $\A^{\ast \ast} .$
The enveloping algebra $\A^{\ast \ast}$ satisfies the following universal property:
for each Hilbert space $\cK$ and each representation 
$\pi \colon \A \to \LK ,$ 
$\pi$ extends to a unique normal representation
$\tilde{\pi} \colon \A^{\ast \ast } \to \LK .$
The enveloping von Neumann algebra $\A^{\ast \ast}$ is, 
as a Banach space, 
isometrically isomorphic to the double dual of $\A ,$
which justifies the notation $\A^{\ast \ast } .$

\subsection{Von Neumann algebra}
Let $\M$ and $\N$ be von Neumann algebras.
A positive linear map
$\Lambda \colon \M \to \N$
is called normal if 
$\Lambda (\sup_\alpha A_\alpha)   =  \sup_\alpha \Lambda (A_\alpha)$
holds for any upper and lower bounded monotonically increasing net $A_\alpha$
on $\M .$
A positive linear map 
$\Lambda \colon \M \to \N$
is normal if and only if it is ultraweakly continuous.
A normal, unital, CP map $\Lambda \colon \M \to \N$
is called a normal channel.
The set of normal channels from $\M$ to $\N$
is denoted by
$\nchset{\M}{\N} $
and $\nchset{\M}{\M}$ by $\nch{\M} .$
If 
$(\cK , \pi , V)$ is a minimal Stinespring representation
of a normal channel $\Lambda \in \nchset{ \M  }{ \LH} ,$
$\pi$ is a normal representation of $\M .$

Let $\M$ be a von Neumann algebra.
The set of ultraweakly continuous elements of the dual space $\M^\ast$ is denoted by
$\M_\ast .$
$(\M_\ast)^\ast$ is identified with $\M$ by the correspondence
$\M \ni A \mapsto \Psi_A \in (\M_\ast)^\ast ,$
where $\Psi_A (\varphi) := \braket{ \varphi , A   },$
$(\varphi \in \M_\ast) .$
In this sense, $\M_\ast$ is the predual space of $\M .$
The set of normal states on $\M$ is denoted by $\Ss (\M) .$

Let $\M$ and $\N$ be von Neumann algebras and let $\Lambda \in \nchset{\N}{\M}$
be a normal channel.
Then there exists a unique positive linear map  
$\Lambda_\ast \colon \M_\ast \to \N_\ast$
such that
\begin{equation*}
	\braket{ \varphi , \Lambda(A)}
	=
	\braket{ \Lambda_\ast (\varphi)  ,  A}
	\quad
	(A \in \N , \varphi \in \M_\ast),
\end{equation*}
and $\Lambda_\ast$ is called the predual of $\Lambda .$
In the physical context, the predual of a channel corresponds to the Schr\"odinger picture.

For a Hilbert space $\cH ,$ $\LH$ is called a fully quantum space and
a channel with fully quantum input and outcome spaces is called a 
fully quantum channel.
The predual 
$\LH_\ast$ (respectively, the set of normal states $\Ss (\LH)$) is identified with
$\Th$ (respectively, $\SH$) 
by the correspondence
$\braket{T , A} = \tr [T A] ,$
$(T \in \Th , A \in \LH) .$

\subsection{Tensor products of operator algebras}
Let $\A$ and $\B$ be \cstar-algebras
and let $\A \atensor \B$ denote the algebraic tensor product of $\A$ and $\B .$
A \cstar-norm $\gnorm{\cdot}$ on $\A \atensor \B$
is a norm such that
$\gnorm{ XY }  \leq \gnorm{ X }  \gnorm{ Y } $
and 
$\gnorm{ X^\ast X } = \gnorm{ X }^2 $
for all $X , Y \in \A \atensor \B .$
The completion of $\A \atensor \B$ 
with respect to a \cstar-norm $\gnorm{\cdot} ,$
denoted by $\A \gtensor \B ,$
is a \cstar-algebra and called a \cstar-tensor product of $\A$ and $\B .$
For general $\A$ and $\B ,$
\cstar-norm is not unique.
The following minimal and maximal norms are important:
for each $X \in \A \atensor \B $ we define 
\begin{gather*}
	\minnorm{X}
	:=
	\sup 
	\set{ \norm{ \pi_1 \otimes \pi_2 ( X) }   | 
	\text{$\pi_1$ and $\pi_2$ are representations of $\A$ and $\B ,$ respectively}
	} ,
	\\
	\maxnorm{X}
	:=
	\sup 
	\set{ \norm{ \pi ( X) }   | 
	\text{$\pi$ is a representation of $\A \atensor \B $}
	} ,
\end{gather*}
Then $\minnorm{\cdot}$ and $\maxnorm{\cdot}$ are 
\cstar-norms.
The \cstar-algebras $\A \mintensor \B $ and $\A \maxtensor \B$
are called the minimal (or injective) and maximal (or projective) tensor products of 
$\A$ and $\B$, respectively.
For each \cstar-norm 
$\gnorm{\cdot}$
on $\A \atensor \B ,$
we have 
$
	\minnorm{X} 
	\leq 
	\gnorm{X}
	\leq
	\maxnorm{X}
$
for all $X \in \A \atensor \B  .$
The maximal tensor product can be characterized by the following universal property:
for any representations
$
\pi_1 \colon \A  \to \cL  (\cH)
$
and 
$
\pi_2 \colon \B \to \cL  (\cH)
$
with commuting ranges, 
there exists a unique representation
$
\pi_{12} \colon \A \maxtensor \B  \to \cL  (\cH)
$
such that 
$\pi_{12} (A \otimes B) = \pi_1 (A) \pi_2 (B)$
$(A \in \A , B \in \B ).$
For any \cstar-tensor product
$\A \gtensor \B$ there exist homomorphisms
$\pi_{\max} \colon \A \maxtensor \B \to \A \gtensor \B $
and 
$\pi_{\min} \colon   \A \gtensor \B \to  \A \mintensor \B  $
such that $\pi_{\max} (A \otimes B ) = A \otimes B $ and
$\pi_{\min} (A \otimes B ) = A \otimes B $
$(A \in \A , B \in \B) .$

A \cstar-algebra $\A$ is called nuclear if \cstar-norm on 
$\A \atensor \B$ is unique for any \cstar-algebra $\B .$
Any commutative \cstar-algebra is nuclear.

Let $\M$ and $\N$ be von Neumann algebras on Hilbert spaces
$\cH$ and $\cK ,$ respectively.
We define a \cstar-norm $\bnorm{\cdot}$ on $\M \atensor \N$ by
$
	\bnorm{X}
	:=
	\sup_{\pi_{12}} 
	\norm{ \pi_{12} ( X) }  ,
$
where the supremum is taken over the representation
$\pi_{12}$ such that the marginals 
\begin{gather*}
	\pi_1 \colon
	\M \ni A \mapsto \pi_{12} (A \otimes \1_\N) 
	\\
	\pi_2 \colon
	\N \ni B \mapsto \pi_{12} (\1_\M \otimes B)
\end{gather*}
are normal~\cite{EFFROS19771}.
The corresponding \cstar-tensor product 
$\M \btensor \N $
is called the binormal tensor product
of $\M$ and $\N .$
We can also define the normal tensor product of $\M$ and $\N$ as follows.
We embed $\M \atensor \N$ into $\cL  ( \cH  \otimes \cK)$
and define the normal tensor product $\M \ntensor \N$
by 
$(\M \atensor \N)^{\prime \prime} ,$
the von Neumann algebra generated by $\M \atensor \N$
in $\cL  ( \cH  \otimes \cK) .$
The norm $\norm{\cdot}$ on $\M \ntensor \N$
restricted to $\M \atensor \N$ coincides with
the minimal norm $\minnorm{\cdot} .$
Therefore, $\M \mintensor \N$ is the \cstar-subalgebra of 
$\M \ntensor \N$ given by the norm completion of $\M \atensor \N .$

The following proposition states that 
any algebraic product map of (normal) CP channels can be 
extended to a CP channel on the maximal, minimal, binormal or normal tensor product,
which is not true for general positive maps.

\begin{prop}[%
Ref.~\onlinecite{takesakivol1}, 
Propositions~IV.4.23 and
IV.5.13
]
\label{prop:cpconti}
Let $\A , \B , \C  $ and $\D$ be \cstar-algebras.

\begin{enumerate}
\item
Let $\Phi \in \chset{\A}{\C}$ and $\Psi \in \chset{\B}{\C}$
be channels. Suppose that the ranges $\Phi (\A)$ and $\Psi (\B)$
commute.
Then the map
$
	\Phi \atensor \Psi
	\colon 
	\A \atensor \B
	\to
	\C
$
defined by
$
\Phi \atensor \Psi 
\left(
\sum_j
A_j \otimes B_j
\right)
:=
\sum_j
\Phi (A_j) \Psi (B_j)
$
$(A_j \in \A , B_j \in \B)$
uniquely extends to a channel
$\Phi \maxtensor \Psi \in \chset{\A \maxtensor \B}{\C} .$
Furthermore, if $\A ,$ $\B,$ and $\C$ are von Neumann algebras
and $\Phi$ and $\Psi$ are normal,
then $\Phi \atensor \Psi$ uniquely extends to a channel
$\Phi \btensor \Psi \in \chset{\A \btensor \B}{\C} .$
\item
Let $\Phi \in \chset{\A}{\C}$ and $\Psi \in \chset{\B}{\D}$
be channels.
Then the algebraic tensor product map
$
	\Phi \atensor \Psi
	\colon
	\A \atensor \B
	\to 
	\C \atensor \D
$
uniquely extends to a channel
$\Phi \mintensor \Psi \in  \chset{ \A \mintensor \B  }{  \C \mintensor \D   } .  $
If we further assume that
$\A , \B , \C ,$ and $\D$ are von Neumann algebras and 
that $\Phi$ and $\Psi$ are normal,
then  $\Phi \atensor \Psi$ uniquely extends to
a normal channel
$
\Phi \ntensor \Psi 
\in
\nchset{ \A \ntensor \B  }{ \C \ntensor \D  } .
$
\end{enumerate}
\end{prop}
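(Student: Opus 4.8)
The plan is to treat both statements by realizing the extended map as the compression of a single representation of the relevant tensor product, so that complete positivity and unitality come for free and the only real work is to produce the representation and control its norm. Throughout I fix faithful representations of the codomains, so that in (i) I may assume $\C \subseteq \LH$ and in (ii) $\C \subseteq \cL(\cH_1)$, $\D \subseteq \cL(\cH_2)$. For (i) I define $\Theta_0 \colon \A \atensor \B \to \LH$ on elementary tensors by $\Theta_0(A \otimes B) := \Phi(A)\Psi(B)$; the goal is to show that $\Theta_0$ is completely positive and bounded for $\maxnorm{\cdot}$, since then it extends to a channel $\Phi \maxtensor \Psi \in \chset{\A \maxtensor \B}{\C}$ (unitality is immediate from $\Theta_0(\1 \otimes \1) = \Phi(\1)\Psi(\1) = \1$, and the range lands in the norm-closed algebra $\C$ by continuity).

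The crux is the complete positivity of $\Theta_0$, and this is where the commuting-ranges hypothesis enters. By the standard kernel criterion it suffices to show, for all finite families $A_i \in \A$, $B_i \in \B$, $\eta_i \in \cH$, that $\sum_{i,j} \braket{\eta_i | \Phi(A_i^\ast A_j)\Psi(B_i^\ast B_j) \eta_j} \geq 0$ (the general case of non-elementary $X_i \in \A \atensor \B$ reduces to this by absorbing the inner sum into a composite index). Now $[\Phi(A_i^\ast A_j)]_{ij}$ is positive in $M_n(\LH)$ by complete positivity of $\Phi$, with entries in the von Neumann algebra $\mathcal{R}_1 := \Phi(\A)^{\prime\prime}$, and likewise $[\Psi(B_i^\ast B_j)]_{ij}$ is positive with entries in $\mathcal{R}_2 := \Psi(\B)^{\prime\prime}$. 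The commuting-ranges assumption $\Psi(\B) \subseteq \Phi(\A)^\prime$ gives $\mathcal{R}_1 = \Phi(\A)^{\prime\prime} \subseteq \Psi(\B)^\prime = \mathcal{R}_2^\prime$, so $\mathcal{R}_1$ and $\mathcal{R}_2$ commute elementwise. Factoring the two positive matrices inside the $C^\ast$-algebras $M_n(\mathcal{R}_1)$ and $M_n(\mathcal{R}_2)$ as $\Phi(A_i^\ast A_j) = \sum_k S_{ki}^\ast S_{kj}$ and $\Psi(B_i^\ast B_j) = \sum_l T_{li}^\ast T_{lj}$ with $S_{ki} \in \mathcal{R}_1$, $T_{li} \in \mathcal{R}_2$, and using $S_{kj} T_{li}^\ast = T_{li}^\ast S_{kj}$, one rewrites $\Phi(A_i^\ast A_j)\Psi(B_i^\ast B_j) = \sum_{k,l} E_{(kl)i}^\ast E_{(kl)j}$ with $E_{(kl)i} := S_{ki} T_{li}$; the quadratic form then becomes $\sum_{k,l} \norm{\sum_i E_{(kl)i}\eta_i}^2 \geq 0$. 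Boundedness for $\maxnorm{\cdot}$ follows because the representation attached to this positive kernel is a genuine $\ast$-representation of $\A \atensor \B$, hence contractive for $\maxnorm{\cdot}$ by the defining universal property of the maximal norm.

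For the binormal statement I must upgrade ``bounded for $\maxnorm{\cdot}$'' to ``bounded for $\bnorm{\cdot}$'', i.e.\ check that the two marginal representations of the dilating representation $\Pi$ of $\Theta_0$ are normal. Writing $\pi_\A := \Pi(\cdot \otimes \1)$ and $\pi_\B := \Pi(\1 \otimes \cdot)$ on the GNS space, the commuting ranges give, for $\eta = \pi_\B(B)V\xi$ and $\zeta = \pi_\B(B^\prime)V\xi^\prime$, the matrix coefficient $\braket{\pi_\A(A)\eta | \zeta} = \braket{\Phi(A)\Psi(B^{\prime\ast}B)\xi | \xi^\prime}$, which is a normal functional of $A$ because $\Phi$ is normal. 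Since such vectors are total and the $\pi_\A(A_\alpha)$ stay uniformly bounded along ultraweakly convergent bounded nets, normality of these coefficients forces $\pi_\A$ (and symmetrically $\pi_\B$) to be ultraweakly continuous; hence $\Pi$ has normal marginals, $\norm{\Pi(X)} \leq \bnorm{X}$, and $\Theta_0$ extends to $\Phi \btensor \Psi \in \chset{\A \btensor \B}{\C}$.

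Part (ii) is more routine because the two factors no longer share a codomain, so commutativity is automatic in the spatial picture. I would take minimal Stinespring representations $(\cK_1,\pi,V)$ of $\Phi$ and $(\cK_2,\rho,W)$ of $\Psi$ and form the spatial tensor $\pi \otimes \rho$, which by the definition of $\minnorm{\cdot}$ extends to a representation of $\A \mintensor \B$ on $\cK_1 \otimes \cK_2$; the compression $X \mapsto (V \otimes W)^\ast (\pi \otimes \rho)(X)(V \otimes W)$ is then the desired channel $\Phi \mintensor \Psi$, since on elementary tensors it equals $\Phi(A) \otimes \Psi(B)$ and its range lies in $\C \mintensor \D$. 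In the von Neumann case the minimal Stinespring representations $\pi,\rho$ are normal, so $\pi \ntensor \rho$ is a normal representation of $\A \ntensor \B$, and compressing by the fixed isometry $V \otimes W$ yields a normal channel $\Phi \ntensor \Psi \in \nchset{\A \ntensor \B}{\C \ntensor \D}$. I expect the genuine obstacle of the whole proposition to be the complete positivity in part (i): without the commuting-ranges hypothesis the factorizations of $[\Phi(A_i^\ast A_j)]$ and $[\Psi(B_i^\ast B_j)]$ cannot be taken inside mutually commuting algebras and the Schur-type cancellation collapses, which is exactly the point at which merely positive (non-CP) maps fail to extend.
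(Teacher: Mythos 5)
This proposition is not proved in the paper at all --- it is imported directly from the cited reference (Takesaki, Propositions~IV.4.23 and IV.5.13) --- so there is no in-paper argument to compare against; what you have written is essentially a correct reconstruction of the standard proof. Your central computation for part~1 is sound and correctly isolates the role of the commuting-range hypothesis: the matrices $[\Phi(A_i^\ast A_j)]$ and $[\Psi(B_i^\ast B_j)]$ are positive with entries in the mutually commuting von Neumann algebras $\Phi(\A)^{\prime\prime}$ and $\Psi(\B)^{\prime\prime}$, and factoring each as $\sum_k S_{ki}^\ast S_{kj}$, $\sum_l T_{li}^\ast T_{lj}$ inside $M_n$ of the respective algebras makes the Schur-product kernel manifestly positive. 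Two steps are elided but standard and repairable. First, before invoking the universal property of $\maxnorm{\cdot}$ you must check that the GNS-type left-multiplication operators $\Pi(Z)$ are actually bounded on the quotient pre-Hilbert space; for an elementary tensor $Z = A \otimes B$ one writes $\|A\|^2\|B\|^2\,\1 - Z^\ast Z$ as a finite sum of elements $W^\ast W$ with $W \in \A \atensor \B$ and feeds this back into the positive kernel, then sums over the terms of a general $Z .$ Second, in the binormal step the vectors $\pi_\B(B)V\xi$ alone are not total in the dilation space, so normality of the matrix coefficients of $\pi_\A$ should be verified on the genuinely total set $\pi_\A(A^\prime)\pi_\B(B)V\xi ,$ where the same computation goes through because $A \mapsto A^{\prime\ast} A A^{\prime\prime}$ is ultraweakly continuous; this is exactly the argument the paper itself carries out in Lemma~\ref{lemm:normalrep}. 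For part~2, the compression of $\pi \otimes \rho$ along $V \otimes W$ is the standard route, but note that the assertion that the spatial tensor product of two normal representations extends to a normal representation of $\A \ntensor \B$ is itself the nontrivial ingredient supplied by the cited reference, so at that point you are invoking rather than reproving the key fact.
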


\subsection{Channel concatenation relations and minimal sufficient channel}
\begin{defi}[Channel concatenation relations~\cite{kuramochi2017minimal,1751-8121-50-13-135302}]
\label{defi:concatenation}
\begin{enumerate}
\item
Let $\A_1 , \A_2 ,$ and $\Ain$ be \cstar-algebras
and let $\Phi_1 \in \chset{\A_1}{\Ain}$
and
$\Phi_2 \in \chset{\A_2}{\Ain}$
be channels with the common input space $\Ain .$
\begin{enumerate}[(i)]
\item
$\Phi_1$ is a \emph{concatenation} of $\Phi_2 ,$
written as $\Phi_1 \cocp \Phi_2 ,$
if there exists a channel $\Psi \in \chset{\A_1}{\A_2}$
such that $\Phi_1 = \Phi_2 \circ \Psi .$
\item
$\Phi_1$ and $\Phi_2$ are said to be \emph{concatenation equivalent},
written as
$\Phi_1 \eqcp \Phi_2 ,$
if both $\Phi_1 \cocp \Phi_2$ and $\Phi_2 \cocp \Phi_1$ hold. 
\end{enumerate}

\item
Let $\M_1,$ $\M_2$ and $\Min$ be von Neumann algebras
and let
$\Lambda_1 \in \nchset{\M_1}{\Min}$
and
$\Lambda_2 \in \nchset{\M_2}{\Min}$
be normal channels
with the common input space $\Min .$
\begin{enumerate}[(i)]
\item
$\Lambda_1$ is a \emph{normal concatenation} of $\Lambda_2 ,$
written as $\Lambda_1 \concp \Lambda_2 ,$
if there exists a normal channel $\Gamma \in \nchset{\M_1}{\M_2}$
such that $\Lambda_1 = \Lambda_2 \circ \Gamma .$
\item
$\Lambda_1$ and $\Lambda_2$ are said to be \emph{normally concatenation equivalent},
written as
$\Lambda_1 \eqncp \Lambda_2 ,$
if both $\Lambda_1 \concp \Lambda_2$ and $\Lambda_2 \concp \Lambda_1$ hold. 
\item
$\Lambda_1$ and $\Lambda_2$ are said to be \emph{normally isomorphic},
written as
$\Lambda_1 \cong \Lambda_2 ,$
if there exists a normal isomorphism 
$\pi \colon \M_1 \to \M_2$ such that
$\Lambda_1 = \Lambda_2 \circ \pi .$
\end{enumerate}
\end{enumerate}
\end{defi}
Operationally, the relation $\Phi \cocp \Psi$ for channels $\Phi$ and $\Psi$
means that the quantum operation corresponding to $\Phi$ is obtained by 
a post-processing of $\Psi .$
Note that we are discussing the channels in the Heisenberg picture.

For a given input \cstar-algebra $\Ain$
(respectively, input von Neumann algebra $\Min $),
the relations $\cocp$ and $\eqcp$ 
(respectively, $\concp$ and $\eqncp$)
are binary preorder and equivalence relations for 
(respectively, normal) channels, respectively.
We will prove in Theorem~\ref{theo:equiv} and Corollary~\ref{coro:nnonn} that,
for normal channels,
the normal concatenation relations 
$\concp$ and $\eqncp$
in fact coincide with 
$\cocp$ and $\eqcp ,$
respectively.

Let $\Lambda_0 \in \nchset{\M_0}{\Min}$ be a normal channel 
with input and outcome von Neumann algebras $\Min$ and $\M_0 ,$
respectively.
The channel $\Lambda_0$ is called minimal sufficient~\cite{kuramochi2017minimal}
if $\Lambda_0 \circ \alpha = \Lambda_0$ implies
$\alpha = \id_{\M_0}$ for any $\alpha \in \nch{\M_0} .$
For any normal channel $\Lambda \in \nchset{\M}{\Min}$
with an arbitrary outcome von Neumann algebra $\M ,$
there exists a minimal sufficient channel $\Lambda_0$
satisfying $\Lambda_0 \eqncp \Lambda$ and such $\Lambda_0$
is unique up to normal isomorphism.
If the outcome space $\M$ of $\Lambda$ is commutative,
$\M_0$ is also commutative.
Every minimal sufficient channel is faithful.

\section{Compatibility relations for channels} \label{sec:channel}

Now we introduce compatibility relations for channels as follows.

\begin{defi}[Compatibility relations for channels]
\label{defi:chcomp}
\begin{enumerate}
\item
Let $\A , \B ,$ and $\C$ be \cstar-algebras,
let $\Phi \in \chset{\A}{\C}$ and 
$\Psi \in \chset{\B}{\C}$ be channels,
and let $\gamma $ be $\min$ or $\max .$
$\Phi$ and $\Psi$ are \emph{$\gamma$-compatible},
written as
$
	\Phi \gcomp \Psi ,
$
if there exists a channel 
$\Theta_\gamma \in \chset{ \A \gtensor \B }{ \C }$
such that $\Theta_\gamma (A \otimes \1_\B) = \Phi (A)$
and
$\Theta_\gamma (\1_\A \otimes B) = \Psi (B) $
for all $A \in \A$ and for all $B \in \B .$
Such a channel $\Theta_\gamma$ is called a
$\gamma$-joint channel of $\Phi$ and $\Psi .$

\item
Let $\M, \N ,$ and $\Min$ be von Neumann algebras
and let $\Lambda \in \nchset{\M}{\Min}$ and 
$\Gamma \in \nchset{\N}{\Min}$
be normal channels.
\begin{enumerate}[(i)]
\item
$\Lambda$ and $\Gamma $ are \emph{binormally compatible},
written as $\Lambda \bcomp \Gamma ,$
if there exists a channel 
$\Theta_{\mathrm{bin}} \in \chset{\M \btensor \N}{\Min}$
such that 
$\Theta (A \otimes \1_\N) = \Lambda (A)$
and 
$\Theta (\1_\M \otimes B) = \Gamma (B)$
for all $A \in \M$ and for all $B \in \N .$
Such a channel $\Theta_{\mathrm{bin}}$
is called a $\mathrm{bin}$-joint channel of $\Lambda$ and $\Gamma .$

\item
$\Lambda$ and $\Gamma$ are \emph{normally compatible,}
written as $\Lambda \ncomp \Gamma,$ 
if there exists a normal channel
$\Theta \in \nchset{ \M \ntensor \N  }{ \Min}$
such that  
$\Theta (A \otimes \1_\N) = \Lambda (A)$
and
$\Theta (\1_\M \otimes B) = \Gamma (B)$
for all $A \in \M$ and for all $B \in \N .$
Such $\Theta$ is called a normal joint channel of $\Lambda$ and $\Gamma .$
\end{enumerate}
\end{enumerate}
\end{defi}

In Definition~\ref{defi:chcomp}, 
$\min$- and $\max$-compatibility relations coincide
if one of the outcome spaces of the channels is nuclear.
In the operational language,
two channels are compatible if they can be realized as marginals of a joint channel.

\begin{lemm}
\label{lemm:cimplication}
\begin{enumerate}
\item
Let $\A , \B , \C,
\Phi \in \chset{\A}{\C} ,$ and 
$
\Psi \in \chset{\B}{\C}
$
be the same as in Definition~\ref{defi:chcomp}.1.
Then
$
\Phi \mincomp \Psi
$
implies
$
\Phi \maxcomp \Psi .
$

\item
Let 
$\M , \N , \Min , 
\Lambda \in \nchset{\M}{\Min}, 
$
and 
$
\Gamma \in \nchset{\N}{\Min}
$
be the same as in Definition~\ref{defi:chcomp}.2.
Then the following implications hold:
\[
	\Lambda \ncomp \Gamma
	\implies
	\Lambda \mincomp \Gamma
	\implies
	\Lambda \bcomp \Gamma 
	\implies
	\Lambda \maxcomp \Gamma .
\]
\end{enumerate}
\end{lemm}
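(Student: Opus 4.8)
The plan is to reduce every implication to a single mechanism: comparison of $C^\ast$-norms on the algebraic tensor product. Recall from the preliminaries that if $\alpha$ and $\beta$ are two $C^\ast$-norms on $\A \atensor \B$ with $\alpha \leq \beta$ pointwise, then the identity map on $\A \atensor \B$ is norm non-increasing as a map from the $\beta$-completion to the $\alpha$-completion, and being a unital $\ast$-homomorphism it extends to a unital $\ast$-homomorphism $q_{\beta\alpha} \colon \A \otimes_\beta \B \to \A \otimes_\alpha \B$ fixing every elementary tensor $A \otimes B$. A unital $\ast$-homomorphism is CP and hence a channel, so $q_{\beta\alpha} \in \chset{\A \otimes_\beta \B}{\A \otimes_\alpha \B}$. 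Given an $\alpha$-joint channel $\Theta_\alpha$, I would then set $\Theta_\beta := \Theta_\alpha \circ q_{\beta\alpha}$, which is a composite of channels and hence a channel; evaluating on $A \otimes \1_\B$ and $\1_\A \otimes B$ and using $q_{\beta\alpha}(A \otimes \1_\B) = A \otimes \1_\B$ and $q_{\beta\alpha}(\1_\A \otimes B) = \1_\A \otimes B$ shows $\Theta_\beta$ has the prescribed marginals. Thus whenever $\alpha \leq \beta$, $\alpha$-compatibility implies $\beta$-compatibility.

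Part~1 is then immediate: since $\minnorm{\cdot} \leq \maxnorm{\cdot}$ on $\A \atensor \B$ (the two-sided bound recorded in the preliminaries, whose relevant quotient map is the homomorphism $\pi_{\min}$ described there), taking $\alpha = \min$ and $\beta = \max$ yields $\Phi \mincomp \Psi \Rightarrow \Phi \maxcomp \Psi$.

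For Part~2 I would treat the three rightmost implications identically. Because the binormal norm $\bnorm{\cdot}$ is a $C^\ast$-norm on $\M \atensor \N$, the universal bound gives $\minnorm{\cdot} \leq \bnorm{\cdot} \leq \maxnorm{\cdot}$; applying the mechanism above with $(\alpha,\beta) = (\min,\mathrm{bin})$ and then $(\mathrm{bin},\max)$ gives $\Lambda \mincomp \Gamma \Rightarrow \Lambda \bcomp \Gamma \Rightarrow \Lambda \maxcomp \Gamma$. For the leftmost implication $\Lambda \ncomp \Gamma \Rightarrow \Lambda \mincomp \Gamma$ the mechanism is slightly different: here I would use that $\M \mintensor \N$ is a unital $C^\ast$-subalgebra of $\M \ntensor \N$ (as recorded in the preliminaries, being the norm completion of $\M \atensor \N$ inside $\M \ntensor \N$). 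Given a normal joint channel $\Theta \in \nchset{\M \ntensor \N}{\Min}$, I would simply restrict it to $\M \mintensor \N$; the restriction is unital and CP, hence a channel, and its marginals coincide with $\Lambda$ and $\Gamma$ since the marginal conditions only test elements of $\M \atensor \N \subseteq \M \mintensor \N$.

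There is no serious obstacle here; the argument is bookkeeping once the norm comparison $\minnorm{\cdot} \leq \bnorm{\cdot} \leq \maxnorm{\cdot}$ and the inclusion $\M \mintensor \N \subseteq \M \ntensor \N$ are in hand. The only points demanding care are that each $q_{\beta\alpha}$ is genuinely unital (so that the composite is a channel rather than merely CP), which holds because $q_{\beta\alpha}$ fixes $\1_\A \otimes \1_\B$, and that in the normal step the restricted map need not be normal---but this is harmless, since $\min$-compatibility in Definition~\ref{defi:chcomp} is defined through an ordinary, not necessarily normal, channel.
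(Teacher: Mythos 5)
Your proposal is correct and is essentially the paper's own argument: the paper likewise obtains each implication by composing the given joint channel with the connecting $\ast$-homomorphism between tensor-product completions (recorded in the preliminaries as $\pi_{\max}$ and $\pi_{\min}$, arising from the universality of $\maxtensor$ and the minimality of $\minnorm{\cdot}$), and handles $\ncomp \implies \mincomp$ by restricting the normal joint channel to the subalgebra $\M \mintensor \N \subseteq \M \ntensor \N$. Your uniform packaging via the pointwise comparison of $C^\ast$-norms, and your remark that normality of the restriction is not needed, are both accurate but do not change the substance.
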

\begin{proof}
\begin{enumerate}
\item
Assume $\Phi \mincomp \Psi$ and
let $\Thmin \in \chset{\A \mintensor \B}{\C}$ be a 
$\min$-joint channel of $\Phi$ and $\Psi .$
From the universality of the maximal tensor product,
there exists a representation
$\pi \colon \A \maxtensor \B \to \A \mintensor \B $
such that 
$\pi (A \otimes B) = A \otimes B$
$(A \in \A , B \in \B ) .$
Then $\Thmax := \Thmin \circ \pi $ is a $\max$-joint channel
of $\Phi$ and $\Psi ,$
which proves
$\Phi \maxcomp \Psi .$
\item
Assume $\Lambda \ncomp \Gamma $
and let $\overline{\Theta} \in \nchset{\M \ntensor \N }{ \Min }$
be a normal joint channel of $\Lambda$ and $\Gamma .$
Since $\M \mintensor \N$ is the \cstar-subalgebra of 
$\M \ntensor \N$ generated by $\M \atensor \N ,$
$\Thmin := \overline{\Theta} \rvert_{\M \mintensor \N} ,$
the restriction of $\overline{\Theta}$ to $\M \mintensor \N ,$
is a $\min$-joint channel of $\Lambda$ and $\Gamma .$
Thus we have $\Lambda \mincomp \Gamma . $

Assume $\Lambda \mincomp \Gamma $
and let $\Thmin \in \chset{\M \mintensor \N}{\Min}$ 
be a $\min$-joint channel of $\Lambda$ and $\Gamma .$
From the minimality of the norm $\minnorm{\cdot} ,$
there exists a representation 
$\pi_{\min} \colon \M \btensor \N \to \M \mintensor \N$
such that $\pi_{\min} (A \otimes B) = A \otimes B$
$(A \in \M , B \in \N ) .$
Then 
$\Theta_{\mathrm{bin}}:= \Theta_{\min} \circ \pi_{\min}
\in \chset{\M \btensor \N}{\Min}
$
is a $\mathrm{bin}$-joint channel of $\Lambda$ and $\Gamma ,$
which proves $\Lambda \bcomp \Gamma .$

The implication 
$
	\Lambda \bcomp \Gamma 
	\implies
	\Lambda \maxcomp \Gamma
$
can be shown in the same way as in the claim 1.
\qedhere
\end{enumerate}
\end{proof}

The following lemma guarantees that the compatibility relations 
in Definition~\ref{defi:chcomp} are consistent with the 
concatenation relations for channels.

\begin{lemm}
\label{lemm:ccconsis}
\begin{enumerate}
\item
Let $\A_1 , \A_2 , \B_1 , \B_2 , $ and $\C$ be 
\cstar algebras, 
let
$\Phi_j \in \chset{\A_j}{\C}$
and
$\Psi_j \in \chset{\B_j}{\C}$
$(j=1,2)$
be channels,
and let $\gamma$ be $\min$ or $\max .$
\begin{enumerate}[(i)]
\item
Suppose that we have
$\Phi_2 \cocp \Phi_1$
and
$\Psi_2 \cocp \Psi_1 .$
Then 
$\Phi_1 \gcomp \Psi_1$ implies
$\Phi_2 \gcomp \Psi_2 .$
\item
Suppose that we have
$\Phi_2 \eqcp \Phi_1$
and
$\Psi_2 \eqcp \Psi_1 .$
Then 
$\Phi_1 \gcomp \Psi_1$ 
if and only if
$\Phi_2 \gcomp \Psi_2 .$
\end{enumerate}
\item
Let $\M_1 , \M_2 , \N_1 , \N_2 , $ and $\Min$ be 
von Neumann algebras
and let
$\Lambda_j \in \nchset{\M_j}{\Min}$
and
$\Gamma_j \in \nchset{\N_j}{\Min}$
$(j=1,2)$
be normal channels.
\begin{enumerate}[(i)]
\item
Suppose that we have
$\Lambda_2 \concp \Lambda_1$
and
$\Gamma_2 \concp \Gamma_1 .$
Then 
$\Lambda_1 \bcomp \Gamma_1$
(respectively, $\Lambda_1 \ncomp \Gamma_1$)
implies
$\Lambda_2 \bcomp \Gamma_2  $
(respectively, $\Lambda_2 \ncomp \Gamma_2  $).
\item
Suppose that we have
$\Lambda_1 \eqncp \Lambda_2$
and
$\Gamma_1 \eqncp \Gamma_2 .$
Then 
$\Lambda_1 \bcomp \Gamma_1$
(respectively, $\Lambda_1 \ncomp \Gamma_1$)
if and only if
$\Lambda_2 \bcomp \Gamma_2 $
(respectively, $\Lambda_2 \ncomp \Gamma_2 .$).
\end{enumerate}
\end{enumerate}
\end{lemm}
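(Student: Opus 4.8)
The plan is to establish part~(i) of each item and then to obtain part~(ii) for free by applying~(i) in both directions, since $\eqcp$ (resp.\ $\eqncp$) is by definition the conjunction of the two one-sided concatenations. For~(i) the uniform mechanism is to precompose a given joint channel with a \emph{functorial} tensor-product map of the post-processings. Unfolding the hypotheses of item~1, $\Phi_2 \cocp \Phi_1$ and $\Psi_2 \cocp \Psi_1$ supply channels $\Xi \in \chset{\A_2}{\A_1}$ and $\Upsilon \in \chset{\B_2}{\B_1}$ with $\Phi_2 = \Phi_1 \circ \Xi$ and $\Psi_2 = \Psi_1 \circ \Upsilon$; item~2 is identical with normal channels $\alpha,\beta$ in place of $\Xi,\Upsilon$. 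If $\Theta$ denotes a $\gamma$-joint channel of $\Phi_1$ and $\Psi_1$, then, granting a channel $F_\gamma \colon \A_2 \gtensor \B_2 \to \A_1 \gtensor \B_1$ with $F_\gamma(A \otimes B) = \Xi(A) \otimes \Upsilon(B)$, I would set $\Theta' := \Theta \circ F_\gamma$. Unitality of $\Xi$ and $\Upsilon$ then gives $\Theta'(A \otimes \1) = \Theta(\Xi(A) \otimes \1) = \Phi_1(\Xi(A)) = \Phi_2(A)$ and likewise $\Theta'(\1 \otimes B) = \Psi_2(B)$, so $\Theta'$ witnesses $\Phi_2 \gcomp \Psi_2$. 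Everything thus reduces to constructing $F_\gamma$.

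For $\gamma = \min$ and for the normal tensor product the map $F_\gamma$ is handed to us directly by the second clause of Proposition~\ref{prop:cpconti}, which produces $\Xi \mintensor \Upsilon$ and $\alpha \ntensor \beta$ as (normal) channels with the required action on elementary tensors. For $\gamma = \max$ I would use the first clause instead: compose $\Xi$ and $\Upsilon$ with the canonical unital $\ast$-homomorphisms $\A_1 \ni A \mapsto A \otimes \1 \in \A_1 \maxtensor \B_1$ and $\B_1 \ni B \mapsto \1 \otimes B \in \A_1 \maxtensor \B_1$. The two resulting channels take values in the commuting subalgebras $\A_1 \otimes \1$ and $\1 \otimes \B_1$ of the \cstar-algebra $\A_1 \maxtensor \B_1$, so the first clause of Proposition~\ref{prop:cpconti} extends their algebraic product to a channel $F_{\max} \in \chset{\A_2 \maxtensor \B_2}{\A_1 \maxtensor \B_1}$ with $F_{\max}(A \otimes B) = \Xi(A) \otimes \Upsilon(B)$.

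The binormal case is the one I expect to be the main obstacle, because the clause of Proposition~\ref{prop:cpconti} that produces a $\btensor$-extension requires a von Neumann codomain, whereas $\M_1 \btensor \N_1$ is merely a \cstar-algebra. To get around this I would realize $\M_1 \btensor \N_1$ concretely: by the definition of $\bnorm{\cdot}$ as a supremum over representations of $\M_1 \atensor \N_1$ with normal marginals, a sufficiently large direct sum yields commuting normal representations $\rho_1 \colon \M_1 \to \cL(\cK)$ and $\rho_2 \colon \N_1 \to \cL(\cK)$ whose associated representation $\pi$, determined by $\pi(A \otimes B) = \rho_1(A)\rho_2(B)$, is isometric for $\bnorm{\cdot}$ and hence extends to a faithful representation of $\M_1 \btensor \N_1$. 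Now $\rho_1 \circ \alpha$ and $\rho_2 \circ \beta$ are normal channels into the von Neumann algebra $\cL(\cK)$ with commuting ranges, so the first clause of Proposition~\ref{prop:cpconti} gives a channel $G \colon \M_2 \btensor \N_2 \to \cL(\cK)$ with $G(A \otimes B) = \pi(\alpha(A) \otimes \beta(B))$. Since $\pi(\M_1 \btensor \N_1)$ is a closed \cstar-subalgebra and the elementary tensors span a dense subalgebra, $G$ takes values in $\pi(\M_1 \btensor \N_1)$, and faithfulness of $\pi$ makes $F_{\mathrm{bin}} := \pi^{-1} \circ G$ a well-defined channel into $\M_1 \btensor \N_1$ acting correctly on elementary tensors. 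With all four maps $F_\gamma$ in hand, the marginal identities of the first paragraph finish part~(i), and part~(ii) follows by symmetry.
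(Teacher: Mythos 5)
Your proof is correct and follows the same strategy as the paper's: precompose a joint channel of $\Phi_1$ and $\Psi_1$ with a channel $\A_2 \gtensor \B_2 \to \A_1 \gtensor \B_1$ acting as $\alpha \otimes \beta$ on elementary tensors, obtained from Proposition~\ref{prop:cpconti}, and read off the marginals using unitality of the post-processings. Your explicit construction of this map in the binormal case (via a faithful representation of $\M_1 \btensor \N_1$ with normal marginals) legitimately fills in a step the paper dismisses with ``can be shown similarly,'' since the binormal clause of Proposition~\ref{prop:cpconti} requires a von Neumann algebra as codomain, which $\M_1 \btensor \N_1$ is not.
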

\begin{proof}
We first show the claim~1.(i).
Assume 
$\Phi_2 \cocp \Phi_1, $
$\Psi_2 \cocp \Psi_1 ,$
and 
$\Phi_1 \gcomp \Psi_1 .$
Then we can take channels 
$\alpha \in \chset{\A_2}{\A_1} ,$
$\beta \in \chset{\B_2}{\B_1} ,$
$\Theta_1 \in \chset{\A_1 \gtensor \B_1}{\C}$
such that
$\Phi_2 = \Phi_1 \circ \alpha ,$
$\Psi_2 = \Psi_1 \circ \beta ,$
and 
$\Theta_1$ is a $\gamma$-joint channel of $\Phi_1$ and $\Psi_1 .$
From Proposition~\ref{prop:cpconti},
$\alpha \atensor \beta$ extends to a CP channel
$\alpha \gtensor \beta \in \chset{\A_2 \gtensor \B_2}{\A_1 \gtensor \B_1} .$
Then the channel 
$\Theta_2 := \Theta_1 \circ (\alpha \gtensor \beta)$
is a $\gamma$-joint channel of $\Phi_2$ and $\Psi_2 ,$
which proves $\Phi_2 \gcomp \Psi_2 .$

The claim~1.(ii) is immediate from~1.(i).

The claim~2 can be shown similarly by using  Proposition~\ref{prop:cpconti}.
Here the post-processing channels $\alpha$ and $\beta$ are taken to be normal.
\end{proof}

\section{Universality of conjugate channel} \label{sec:main}
In this section we consider (normal) channels with a fully quantum input space
$\LHin  $ for a fixed Hilbert space $\Hin .$

\subsection{Universality of commutant conjugate channels for normal channels}

Generalizing the conjugate channel for finite-dimensional quantum-quantum 
channel~\cite{holevo2007complementary,king2007properties,1751-8121-50-13-135302},
we introduce the concept of the commutant conjugate channel 
for channels with quantum input and general outcome \cstar-algebras as follows.
\begin{defi}[Commutant conjugate channel]
\label{defi:cconj}
Let $\A$ be a \cstar-algebra
and let $\Phi \in \chset{\A}{\LHin}$ be a channel.
Take a Stinespring representation 
$(\cK, \pi , V)$ of $\Phi .$
We define the \emph{commutant conjugate channel}
of $\Phi$ associated with $(\cK, \pi , V)$
by the normal channel
$\Psi \in \nchset{\pi(\A)^\prime}{ \LHin }$
given by 
$\Psi ( B ) := V^\ast B V$
$(B \in \pi (A)^\prime) .$
\end{defi}

For a given channel, 
its commutant conjugate channel is unique up to normal concatenation equivalence 
as shown in the following proposition.

\begin{prop}
\label{prop:cconj1}
Let 
$\A,$ 
$\Hin , $
and $\Phi \in \chset{ \A }{\LHin}$
be the same as in Definition~\ref{defi:cconj}.
Then any commutant conjugate channels of $\Phi$ are
mutually normally concatenation equivalent.
\end{prop}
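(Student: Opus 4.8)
The plan is to reduce everything to the minimal Stinespring representation, whose uniqueness up to unitary equivalence is the only nontrivial input. Concretely, I would prove two things: (a) the commutant conjugate channel built from an arbitrary Stinespring representation $(\cK,\pi,V)$ of $\Phi$ is normally concatenation equivalent to the one built from the minimal subrepresentation sitting inside it; and (b) the commutant conjugates of two minimal Stinespring representations are normally isomorphic, hence $\eqncp$. Since $\eqncp$ is an equivalence relation, chaining (a), (b), (a) across the two given representations yields the claim.

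For (a), fix a Stinespring representation $(\cK,\pi,V)$ with commutant conjugate $\Psi\in\nchset{\pi(\A)^\prime}{\LHin}$, $\Psi(B)=V^\ast B V$. Let $\cK_0:=\cspan\,\pi(\A)V\Hin$ and let $P\in\LK$ be the projection onto $\cK_0$. Since $\cK_0$ is $\pi(\A)$-invariant and $\pi(\A)$ is $\ast$-closed, $P\in\pi(\A)^\prime$, and $(\cK_0,\pi_0,V_0)$ with $\pi_0(A):=\pi(A)|_{\cK_0}$ and $V_0:=V$ is the minimal Stinespring representation; write $\Psi_0(C)=V_0^\ast C V_0$ for its commutant conjugate. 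To get $\Psi\concp\Psi_0$ I would use the compression $\Xi\colon\pi(\A)^\prime\to\pi_0(\A)^\prime$, $\Xi(B):=PBP|_{\cK_0}$. A short computation using $P\in\pi(\A)^\prime$ shows $\Xi(B)$ commutes with every $\pi_0(A)$, so $\Xi$ indeed lands in $\pi_0(\A)^\prime$, and it is plainly normal, unital and CP. Because $V\Hin\subseteq\cK_0$ gives $PV=V=V_0$, one checks $\Psi_0(\Xi(B))=V_0^\ast PBPV_0=V^\ast B V=\Psi(B)$, i.e.\ $\Psi=\Psi_0\circ\Xi$.

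For the reverse relation $\Psi_0\concp\Psi$, decompose $\cK=\cK_0\oplus\cK_0^\perp$; invariance of $\cK_0$ makes $\pi=\pi_0\oplus\pi_1$ block diagonal, where $\pi_1:=\pi|_{\cK_0^\perp}$. Fix any normal state $\varphi$ on $\pi_0(\A)^\prime$ and define $\Gamma'\colon\pi_0(\A)^\prime\to\pi(\A)^\prime$ by $\Gamma'(C):=C\oplus\varphi(C)\1_{\cK_0^\perp}$. The block-diagonal operator $\Gamma'(C)$ commutes with $\pi_0(A)\oplus\pi_1(A)$ (the first block since $C\in\pi_0(\A)^\prime$, the second because it is scalar), so $\Gamma'$ maps into $\pi(\A)^\prime$; it is normal, CP, and unital, the completion $\varphi(C)\1_{\cK_0^\perp}$ being exactly what is needed to make $\Gamma'(\1)=\1_\cK$. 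Writing $V=V_0\oplus 0$ gives $\Psi(\Gamma'(C))=V_0^\ast C V_0=\Psi_0(C)$, so $\Psi_0=\Psi\circ\Gamma'$ and hence $\Psi\eqncp\Psi_0$. I expect this backward direction to be the only real subtlety: a naive embedding $C\mapsto C\oplus 0$ fails to be unital, and one must fill in the $\cK_0^\perp$-block by a normal state, which is harmless for the conjugate channel precisely because $V$ is supported on $\cK_0$.

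Finally, for (b), let $(\cK_0,\pi_0,V_0)$ and $(\cK_0',\pi_0',V_0')$ be the minimal Stinespring representations of $\Phi$ obtained from the two given representations. By the uniqueness part of Stinespring's theorem there is a unitary $U\colon\cK_0\to\cK_0'$ with $U\pi_0(A)=\pi_0'(A)U$ and $UV_0=V_0'$. Then $\beta(B):=UBU^\ast$ is a normal $\ast$-isomorphism $\pi_0(\A)^\prime\to\pi_0'(\A)^\prime$, and $U^\ast V_0'=V_0$ gives $\Psi_0'(\beta(B))=(V_0')^\ast UBU^\ast V_0'=V_0^\ast B V_0=\Psi_0(B)$, so the two minimal commutant conjugates are normally isomorphic, in particular $\eqncp$. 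Combining this with step (a) applied to each of the two original representations and using transitivity of $\eqncp$ completes the proof.
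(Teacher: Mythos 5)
Your proof is correct and follows essentially the same route as the paper: both arguments rest on the intertwining isometry into the minimal representation, with the backward map given by compression $W^\ast(\cdot)W$ and the forward map by $W(\cdot)W^\ast$ completed with a normal state on the complementary block to restore unitality. You merely repackage this by factoring through the internal minimal subrepresentation and the unitary uniqueness of minimal Stinespring dilations, which changes nothing essential.
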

\begin{proof}
Let 
$(\cK_0 , \pi_0 , V_0)$
be a minimal Stinespring representation of $\Phi ,$
let
$(\cK_1 , \pi_1 , V_1)$
be an arbitrary Stinespring representation of $\Phi ,$
and let 
$\Psi_0 \in \nchset{ \pi_0 (\A)^\prime  }{ \LHin  }$
and
$\Psi_1 \in \nchset{ \pi_1 (\A)^\prime  }{ \LHin  }$
be the corresponding commutant conjugate channels, respectively.
It is sufficient to show $\Psi_0 \eqncp \Psi_1 .$
From the minimality of 
$(\cK_0 , \pi_0 , V_0) ,$
there exists an isometry $W \colon \cK_0 \to \cK_1$
such that
$WV_0 = V_1 $
and
$W \pi_0 (A) = \pi_1 (A) W$
$(A \in \A) .$
Then for any 
$A \in \A ,$ 
$B \in \pi_0 (\A)^\prime ,$
and $C \in \pi_1 (\A)^\prime , $
we have
\begin{gather*}
	\pi_1 (A) WBW^\ast 
	=
	W \pi_0 (A) BW^\ast
	=
	W B  \pi_0 (A) W^\ast
	=
	W B W^\ast \pi_1 (A) , 
	\\
	W^\ast C W \pi_0 (A)
	=
	W^\ast C  \pi_1 (A) W
	=
	W^\ast   \pi_1 (A)  C W
	=
	\pi_0 (A) W^\ast C W ,
\end{gather*}
which implies 
$WBW^\ast \in \pi_1 (\A)^\prime$
and
$W^\ast C W \in \pi_0 (\A)^\prime .$
Thus we may define normal channels 
$\Theta^W \in \nchset{ \pi_0 (\A)^\prime  }{  \pi_1 (\A)^\prime }$
and
$\Gamma^W \in \nchset{ \pi_1 (\A)^\prime  }{  \pi_0 (\A)^\prime }$
by
\begin{gather*}
	\Theta^W (B)
	:=
	W B W^\ast 
	+ 
	\phi_0 (B) 
	(\1_{\cK_1}  - WW^\ast ) ,
	\quad
	(B \in  \pi_0 (\A)^\prime) ,
	\\
	\Gamma^W 
	(C)
	:=
	W^\ast C W ,
	\quad
	(C \in \pi_1 (\A)^\prime ) ,
\end{gather*}
where $\phi_0$ is a fixed normal state on $\pi_0 (\A)^\prime .$
Then for any 
$B \in  \pi_0 (\A)^\prime $
and
$C \in \pi_1 (\A)^\prime ,$
it holds that
\begin{align*}
	\Psi_1
	\circ
	\Theta^W 
	(B)
	&=
	V_1^\ast
	WBW^\ast V_1
	+
	\phi_0 (B)
	V_1^\ast
	(\1_{\cK_1}  - WW^\ast )
	V_1
	\\
	&=
	V_0^\ast B V_0
	+ 
	\phi_0 (B)
	( V_1^\ast V_1 - V_0^\ast V_0 )
	\\
	&=
	\Psi_0 (B) ,
	\\
	\Psi_0
	\circ
	\Gamma^W 
	(C)
	&=
	V_0^\ast W^\ast C W V_0
	=
	V_1^\ast C V_1 
	=
	\Psi_1 (C) .
\end{align*}
Therefore we have $\Psi_0 \eqncp \Psi_1$
and the claim holds.
\end{proof}

In most of the following discussions,
particular choice of commutant conjugate channel of a channel
$\Psi \in \chset{\A}{\LHin} $ is irrelevant, 
and from now on we denote by $\Psi^c$ \textit{the} commutant conjugate channel
of $\Psi .$

The following three lemmas are needed for the proof of Theorem~\ref{theo:univbmax}.

\begin{lemm}
\label{lemm:cciscomp}
Let $\A$ be a \cstar-algebra,
let $\Phi \in \chset{\A}{\LHin}$ be a channel
with a minimal Stinespring representation 
$(\cK , \pi  , V),$
and let 
$\Phi^c \in \nchset{\pi (\A )^\prime}{\LHin}$
be the commutant conjugate channel of $\Phi .$
Then $\Phi \maxcomp \Phi^c .$
If we further assume that $\A$ is a von Neumann algebra 
and $\Phi$ is normal, then $\Phi \bcomp \Phi^c .$
\end{lemm}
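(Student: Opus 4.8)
The plan is to manufacture a single joint channel by fusing the representation $\pi$ with the tautological inclusion of the commutant into $\LK$, and then compressing by the Stinespring isometry $V$. The key observation is that $\pi(\A)$ and $\pi(\A)^\prime$ commute by the very definition of the commutant, so the two ingredients have commuting ranges and can be amalgamated via the universal property of the maximal tensor product.

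Concretely, I would first regard the inclusion $\iota \colon \pi(\A)^\prime \hookrightarrow \LK$ as a representation (it is a unital $\ast$-homomorphism, and $\pi(\A)^\prime$ contains $\1_\cK$). Since $\pi$ and $\iota$ are representations into $\LK$ with commuting ranges, the universal property of $\maxtensor$ recorded in Section~\ref{sec:prel} yields a unique representation $\rho \colon \A \maxtensor \pi(\A)^\prime \to \LK$ with $\rho(A \otimes B) = \pi(A) B$ for $A \in \A$, $B \in \pi(\A)^\prime$. I would then set $\Thmax(X) := V^\ast \rho(X) V$ for $X \in \A \maxtensor \pi(\A)^\prime$. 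As $\rho$ is a $\ast$-homomorphism and $V$ an isometry, this Stinespring-type compression is CP, and it is unital because $V^\ast \1_\cK V = V^\ast V = \1_{\Hin}$; hence $\Thmax$ is a channel. Evaluating on the two copies of the unit, and using $\pi(\1_\A) = \1_\cK$, gives $\Thmax(A \otimes \1_{\pi(\A)^\prime}) = V^\ast \pi(A) V = \Phi(A)$ and $\Thmax(\1_\A \otimes B) = V^\ast B V = \Phi^c(B)$, so $\Thmax$ is a $\max$-joint channel and $\Phi \maxcomp \Phi^c$.

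For the binormal refinement, with $\A = \M$ a von Neumann algebra and $\Phi$ normal, the extra ingredient is normality of the marginals of $\rho$. Here I would use that $(\cK, \pi, V)$ is a \emph{minimal} Stinespring representation of the normal channel $\Phi$, which (as noted in Section~\ref{sec:prel}) forces $\pi$ to be a normal representation of $\M$; the inclusion $\iota$ is ultraweakly continuous and hence normal as well. Therefore the restriction $\rho\rvert_{\M \atensor \pi(\M)^\prime}$ is one of the representations with normal marginals appearing in the supremum defining the binormal norm, so $\norm{\rho(X)} \leq \bnorm{X}$ on $\M \atensor \pi(\M)^\prime$. Consequently $\rho$ extends by continuity to a representation $\rho_{\mathrm{bin}} \colon \M \btensor \pi(\M)^\prime \to \LK$, and $\Theta_{\mathrm{bin}}(X) := V^\ast \rho_{\mathrm{bin}}(X) V$ is a $\mathrm{bin}$-joint channel by the identical marginal computation, yielding $\Phi \bcomp \Phi^c$.

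Both constructions are essentially immediate from the respective universal/defining properties of $\maxtensor$ and $\btensor$. The only step that genuinely needs care, and the main obstacle in the argument, is verifying that the marginals of $\rho$ are normal in the binormal case; this is precisely where the minimality of the Stinespring representation (guaranteeing that $\pi$ is normal) is indispensable, and without it the extension of $\rho$ across the binormal norm could fail.
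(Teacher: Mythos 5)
Your proof is correct, and it reaches the same basic construction as the paper --- a product representation obtained from the commuting ranges of $\pi$ and the inclusion of the commutant, compressed by the Stinespring isometry $V$ --- but it is organized differently. The paper does not build the joint channel on $\A \maxtensor \pi(\A)^\prime$ directly; instead it introduces the auxiliary normal channel $\Lambda_0 \in \nchset{\pi(\A)^{\prime\prime}}{\LHin}$, $\Lambda_0(A) := V^\ast A V$, constructs a $\mathrm{bin}$-joint channel of $\Lambda_0$ and $\Phi^c$ on $\pi(\A)^{\prime\prime} \btensor \pi(\A)^\prime$ (where both marginals of the product representation are identity inclusions, so their normality is automatic), and then descends from $\Lambda_0$ to $\Phi = \Lambda_0 \circ \pi$ via Lemmas~\ref{lemm:cimplication} and \ref{lemm:ccconsis}. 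Your route avoids that detour and the appeal to the consistency lemmas, at the cost of having to invoke the fact that $\pi$ is normal for a minimal Stinespring representation of a normal channel in order to get the binormal extension of $\rho$; the paper's route makes the binormal step cost-free but needs the same normality of $\pi$ anyway to conclude $\Phi \concp \Lambda_0$ in the von Neumann case, so the two arguments ultimately rest on the same ingredients. Both are valid; yours is the more self-contained, the paper's fits more uniformly into the chain of reductions it reuses elsewhere.
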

\begin{proof}
We define a normal channel 
$\Lambda_0 \in \nchset{\pi (\A)^{\prime \prime}}{ \LHin}$
by
$\Lambda_0 (A) := V^\ast A V $
$(A \in \pi (\A)^{\prime \prime}) .$
Since we have 
$\Phi =  \Lambda_0 \circ \pi \cocp \Lambda_0, $ 
or 
$\Phi =  \Lambda_0 \circ \pi \concp \Lambda_0 $
if $\Phi$ is normal,
from Lemmas~\ref{lemm:cimplication} and \ref{lemm:ccconsis}
it is sufficient to show
$\Lambda_0 \bcomp \Phi^c .$
Since $\pi (\A)^{\prime \prime}$ and $\pi (\A)^\prime$ commute on $\LK ,$
there exists a representation
$\tilde{\pi} \colon \pi (\A)^{\prime \prime} \btensor \pi (\A)^\prime \to \LK$
such that
$\tilde{\pi} (A \otimes B) = AB$
$(A \in \pi (\A)^{\prime \prime} , B \in \pi (\A)^\prime ) .$
Define $\Theta \in \chset{  \pi (\A)^{\prime \prime} \btensor \pi (\A)^\prime }{ \LHin}$
by
$\Theta (C) := V^\ast \tilde{\pi} (C)  V$
$(C \in \pi (\A)^{\prime \prime} \btensor \pi (\A)^\prime ) .$
Then for each $A \in \pi (\A)^{\prime \prime}  $ and each
$ B \in \pi (\A)^\prime ,$ we have
\begin{gather*}
	\Theta (A \otimes \1_{\cK})
	=
	V^\ast A V
	=
	\Lambda_0 (A) ,
	\\
	\Theta (\1_{\cK} \otimes B)
	=
	V^\ast B V
	=
	\Phi^c (B) .
\end{gather*}
Therefore $\Theta$ is a $\mathrm{bin}$-joint channel of 
$\Lambda_0$ and $\Phi^c,$
which proves $\Lambda_0 \bcomp \Phi^c .$
\end{proof}

\begin{lemm}
\label{lemm:mbnormal}
Let $\M \gtensor \N $ 
be a \cstar-tensor product of von Neumann algebras
$\M $ and $\N $ 
and
let $\Theta \colon \M \gtensor \N \to \LHin$
be a positive linear map.
Suppose that the positive maps
\begin{gather*}
	\Lambda
	\colon 
	\M \ni A 
	\mapsto
	\Theta (A \otimes \1_\N)
	\in \LHin ,
	\\
	\Gamma 
	\colon
	\N \ni B 
	\mapsto
	\Theta (\1_\M \otimes B)
	\in \LHin
\end{gather*}
are normal.
Then 
$\Theta$ is binormal,
i.e.\
the maps
\begin{gather*}
	\Lambda_{B_1}
	\colon 
	\M \ni A 
	\mapsto
	\Theta (A \otimes B_1)
	\in \LHin ,
	\\
	\Gamma_{A_1}
	\colon
	\N \ni B 
	\mapsto
	\Theta (A_1 \otimes B)
	\in \LHin
\end{gather*}
are ultraweakly continuous for each $A_1 \in \M$
and each $B_1 \in \N .$
\end{lemm}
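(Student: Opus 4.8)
The plan is to reduce, by linearity in the fixed variable, to the case of a \emph{positive} fixed element, and then to exploit the positivity of $\Theta$ through a Cauchy--Schwarz estimate that is controlled by the normality of the marginal $\Lambda$ that we are already given. Since every element of $\N$ is a linear combination of four positive elements, it suffices to prove that $\Lambda_{B_1}$ is ultraweakly continuous when $0 \leq B_1 \in \N$; the assertion for $\Gamma_{A_1}$ is then entirely symmetric, obtained by interchanging the roles of $\M , \N$ and of $\Lambda , \Gamma$. For such a $B_1$ the map $\Lambda_{B_1} \colon A \mapsto \Theta (A \otimes B_1)$ is positive, because $A \geq 0$ forces $A \otimes B_1 \geq 0$ in $\M \gtensor \N$ and $\Theta$ is positive. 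As a positive linear map between von Neumann algebras is normal exactly when it preserves suprema of bounded increasing nets, it is enough to show that $\sup_\alpha \Theta (A_\alpha \otimes B_1) = \Theta (A \otimes B_1)$ for every bounded increasing net $A_\alpha \uparrow A$ in $\M .$

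Fix such a net and set $D_\alpha := A - A_\alpha \geq 0 ,$ a bounded net decreasing to $0 .$ The net $\Theta (A_\alpha \otimes B_1)$ is increasing and bounded above by $\Theta (A \otimes B_1) ,$ so it has an ultraweak limit $S$ with $S \leq \Theta (A \otimes B_1) ;$ what must be shown is $S = \Theta (A \otimes B_1) ,$ equivalently that $\braket{ \psi | \Theta (D_\alpha \otimes B_1) | \psi} \to 0$ for every $\psi \in \Hin$ (a self-adjoint operator is determined by its diagonal quadratic form, so no polarization is required). For fixed $\psi$ the functional $\omega_\psi (X) := \braket{ \psi | \Theta (X) | \psi}$ is positive on $\M \gtensor \N ,$ hence obeys the Cauchy--Schwarz inequality $| \omega_\psi (Y^\ast X) |^2 \leq \omega_\psi (Y^\ast Y) \, \omega_\psi (X^\ast X) .$ Applying it with $Y = D_\alpha^{1/2} \otimes \1_\N$ and $X = D_\alpha^{1/2} \otimes B_1$ — so that $Y^\ast X = D_\alpha \otimes B_1 ,$ $Y^\ast Y = D_\alpha \otimes \1_\N ,$ and $X^\ast X = D_\alpha \otimes B_1^2 \leq \norm{B_1}^2 \, (D_\alpha \otimes \1_\N)$ — yields
\[
	| \braket{ \psi | \Theta (D_\alpha \otimes B_1) | \psi} |^2
	\leq
	\norm{B_1}^2 \, \braket{ \psi | \Lambda (D_\alpha) | \psi }^2 ,
\]
where I have used $\Theta (D_\alpha \otimes \1_\N) = \Lambda (D_\alpha)$ and the monotonicity $D_\alpha \otimes B_1^2 \leq \norm{B_1}^2 (D_\alpha \otimes \1_\N) ,$ valid because $D_\alpha \otimes (\norm{B_1}^2 \1_\N - B_1^2)$ is a tensor product of positive elements and so is positive in any \cstar-tensor product.

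Because $\Lambda$ is normal by hypothesis and $D_\alpha \downarrow 0 ,$ we have $\braket{ \psi | \Lambda (A_\alpha) | \psi} \uparrow \braket{ \psi | \Lambda (A) | \psi } ,$ so the right-hand side above tends to $0 .$ Hence $\braket{ \psi | \Theta (D_\alpha \otimes B_1) | \psi} \to 0$ for all $\psi ,$ which forces $\braket{ \psi | S | \psi} = \braket{ \psi | \Theta (A \otimes B_1) | \psi}$ for all $\psi$ and therefore $S = \Theta (A \otimes B_1) .$ This gives $\sup_\alpha \Lambda_{B_1} (A_\alpha) = \Lambda_{B_1} (A) ,$ establishing the normality of $\Lambda_{B_1}$ and, by symmetry, of $\Gamma_{A_1} .$ The anticipated main obstacle is precisely the Cauchy--Schwarz step: one must choose the factorization so that both diagonal terms $\omega_\psi (Y^\ast Y)$ and $\omega_\psi (X^\ast X)$ are dominated by $\braket{\psi | \Lambda (D_\alpha) | \psi} ,$ and one must ensure that the positivity and monotonicity manipulations (in particular $X^\ast X \leq \norm{B_1}^2 \, Y^\ast Y$) are legitimate in the abstract \cstar-tensor product $\M \gtensor \N$ and not merely in a concrete representation.
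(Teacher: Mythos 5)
Your proof is correct and follows essentially the same route as the paper: reduce to positive $B_1$ and dominate $\Theta(D_\alpha\otimes B_1)$ by $\norm{B_1}\,\Lambda(D_\alpha)$, which tends to $0$ by the assumed normality of $\Lambda .$ The only difference is that the paper obtains this domination directly from the operator inequality $0\leq D_\alpha\otimes B_1\leq \norm{B_1}(D_\alpha\otimes \1_\N)$ and positivity of $\Theta ,$ so your Cauchy--Schwarz step is a harmless but unnecessary detour.
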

\begin{proof}
We show the ultraweak continuity of $\Lambda_{B_1} $
for $B_1 \in \N .$
Since $B_1$ is a linear combination of positive elements of $\N ,$
it is sufficient to show the normality of $\Lambda_{B_1}$ when $B_1 \geq 0 .$
We take an arbitrary monotonically decreasing net $A_\alpha \downarrow 0$ in $\M .$
Then we have
\[
	0 \leq
	\Lambda_{B_1} (A_\alpha)
	=
	\Theta (A_\alpha \otimes B_1)
	\leq
	\Theta (A_\alpha \otimes \norm{B_1} \1_{\N})
	=
	\norm{B_1}
	\Lambda (A_\alpha)
	\downarrow 0 ,
\]
which implies $\Lambda_{B_1} (A_\alpha ) \downarrow 0.$
Therefore $\Lambda_{B_1}$ is normal.
The ultraweak continuity of $\Gamma_{A_1} $ can be shown
in the same way.
\end{proof}

\begin{lemm}
\label{lemm:normalrep}
Let 
$\M , \N , \Min , 
\Lambda \in \nchset{\M}{\Min}, 
$
and 
$
\Gamma \in \nchset{\N}{\Min}
$
be the same as in Definition~\ref{defi:chcomp}.2
and let $\M \gtensor \N$ be a \cstar-tensor product. 
Suppose that 
$\Min$ acts on a Hilbert space $\Hin$ and that
$\Theta \in \chset{\M \gtensor \N }{\Min}
\subseteq 
\chset{\M \gtensor \N }{\LHin}
$
is a joint channel of $\Lambda$ and $\Gamma$
with a minimal Stinespring representation
$(\cK_{\Theta} , \pi_{\Theta} , V_{\Theta}) .$
Then the representations 
\begin{gather}
	\pi_\Lambda 
	\colon
	\M \ni A \mapsto 
	\pi_\Theta (A \otimes \1_\N)
	\in \cL  (\cK_{\Theta}) ,
	\label{eq:pil}
	\\
	\pi_\Gamma 
	\colon
	\N \ni B \mapsto 
	\pi_\Theta (\1_\M \otimes B)
	\in \cL  (\cK_{\Theta})
	\label{eq:pig} 
\end{gather}
are normal.
\end{lemm}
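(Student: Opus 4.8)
The plan is to combine the minimal Stinespring representation of $\Theta$ with the binormality established in Lemma~\ref{lemm:mbnormal}. First I would observe that, since the marginals of $\Theta$ are exactly $\Lambda$ and $\Gamma$, which are normal by hypothesis, Lemma~\ref{lemm:mbnormal} applies and $\Theta$ is binormal; in particular, the map $A \mapsto \Theta(A \otimes C)$ is ultraweakly continuous on $\M$ for \emph{every} fixed $C \in \N$, not merely for $C = \1_\N$. Next, by minimality of $(\cK_{\Theta} , \pi_{\Theta} , V_{\Theta})$ together with the norm-density of $\M \atensor \N$ in $\M \gtensor \N$ and the norm-continuity of $\pi_\Theta$, the set $\mathcal{D} := \lspan \Set{ \pi_\Theta (A \otimes B) V_\Theta \psi | A \in \M , B \in \N , \psi \in \Hin }$ is dense in $\cK_{\Theta}$.

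The core computation is to evaluate, for $\xi = \pi_\Theta (A_1 \otimes B_1) V_\Theta \psi_1$ and $\eta = \pi_\Theta (A_2 \otimes B_2) V_\Theta \psi_2$ in $\mathcal{D}$, the matrix element $\braket{\xi | \pi_\Lambda (A') \eta}$ as a function of $A' \in \M .$ Using that $\pi_\Theta$ is a $\ast$-homomorphism and $\pi_\Lambda (A') = \pi_\Theta (A' \otimes \1_\N)$, the triple product collapses to a single elementary tensor, yielding
\[ \braket{\xi | \pi_\Lambda (A') \eta} = \braket{\psi_1 | \Theta \left( (A_1^\ast A' A_2) \otimes (B_1^\ast B_2) \right) \psi_2} . \]
I would then argue that the right-hand side is ultraweakly continuous in $A'$ as a composition of three ultraweakly continuous maps: the map $A' \mapsto A_1^\ast A' A_2$ on $\M$, the map $\tilde A \mapsto \Theta (\tilde A \otimes B_1^\ast B_2)$ (ultraweakly continuous by the binormality of $\Theta$ applied to the fixed element $B_1^\ast B_2 \in \N$), and the normal vector functional $X \mapsto \braket{\psi_1 | X \psi_2}$ on $\LHin .$ Hence $A' \mapsto \braket{\xi | \pi_\Lambda (A') \eta}$ is ultraweakly continuous for all $\xi , \eta \in \mathcal{D} .$

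Finally I would upgrade this to normality of $\pi_\Lambda .$ Given a bounded increasing net $A_\alpha \uparrow A$ in $\M ,$ positivity of $\pi_\Lambda$ gives an increasing net $\pi_\Lambda (A_\alpha)$ bounded above by $\pi_\Lambda (A) ,$ with weak limit $B := \sup_\alpha \pi_\Lambda (A_\alpha) \leq \pi_\Lambda (A) .$ Since $A_\alpha \to A$ ultraweakly, the continuity just established gives $\braket{\xi | B \eta} = \braket{\xi | \pi_\Lambda (A) \eta}$ for all $\xi , \eta \in \mathcal{D} ;$ density of $\mathcal{D}$ and boundedness of $B$ and $\pi_\Lambda (A)$ then force $B = \pi_\Lambda (A) ,$ so $\pi_\Lambda$ preserves suprema and is therefore normal. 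The identical argument with the roles of $\M$ and $\N$ interchanged handles $\pi_\Gamma .$

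I expect the main obstacle to be this last step, namely passing from ultraweak continuity of the matrix elements on the dense subspace $\mathcal{D}$ to genuine normality of $\pi_\Lambda$ on all of $\cK_{\Theta} .$ The subtlety is that ultraweak continuity of $A' \mapsto \braket{\xi | \pi_\Lambda (A') \eta}$ is only immediate for $\xi , \eta \in \mathcal{D} ,$ so one must invoke the increasing-net criterion to produce the weak limit $B ,$ which a priori agrees with $\pi_\Lambda (A)$ only on $\mathcal{D} ,$ and then use density to force equality everywhere. A secondary technical point is that binormality must be applied to the possibly non-positive element $B_1^\ast B_2 \in \N ,$ which is legitimate precisely because Lemma~\ref{lemm:mbnormal} is stated for arbitrary fixed elements of $\N .$
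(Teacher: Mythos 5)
Your proposal is correct and follows essentially the same route as the paper's proof: invoke Lemma~\ref{lemm:mbnormal} for binormality of $\Theta ,$ collapse the matrix element $\braket{\xi | \pi_\Lambda (A') \eta}$ on vectors of the form $\pi_\Theta (A_i \otimes B_i) V_\Theta \psi_i$ to $\braket{\psi_1 | \Theta ((A_1^\ast A' A_2) \otimes (B_1^\ast B_2)) \psi_2} ,$ and then use minimality (density of the span) together with boundedness of the net to pass to weak operator convergence. Your final step is spelled out in slightly more detail than the paper's, but the argument is the same.
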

\begin{proof}
We only prove the normality of $\pi_\Lambda;$
that of $\pi_\Gamma$ can be shown similarly.
For this, it is sufficient to show that 
for an arbitrary monotonically increasing bounded net
$ 0 \leq A_\alpha \uparrow A$ in $\M ,$
$\pi_{\Lambda} (A_\alpha)$ converges to $\pi_\Lambda (A)$
in the weak operator topology.
Since $\Theta$ is binormal from Lemma~\ref{lemm:mbnormal},
for each $\phi_1 , \phi_2 \in \Hin ,$ 
each $ A_1 , A_2 \in \M ,$
and each $B_1 , B_2 \in \N , $ we have
\begin{align*}
	&\braket{  
	\pi_\Theta (A_1 \otimes B_1)  
	V_{\Theta}  
	\phi_1
	|
	\pi_\Lambda (A_\alpha)
	\pi_\Theta (A_2 \otimes B_2)  
	V_{\Theta}  
	\phi_2
	}
	\\
	&=
	\braket{    
	\phi_1
	|
	V_\Theta^\ast
	\pi_\Theta (A_1^\ast A_\alpha A_2 \otimes B_1^\ast B_2)    
	V_\Theta
	\phi_2
	}
	\\
	&=
	\braket{    
	\phi_1
	|
	\Theta (A_1^\ast A_\alpha A_2 \otimes B_1^\ast B_2)    
	\phi_2
	}
	\\
	&\to
	\braket{    
	\phi_1
	|
	\Theta (A_1^\ast A A_2 \otimes B_1^\ast B_2)    
	\phi_2
	}
	\\
	&=
	\braket{  
	\pi_\Theta (A_1 \otimes B_1)  
	V_{\Theta}  
	\phi_1
	|
	\pi_\Lambda (A)
	\pi_\Theta (A_2 \otimes B_2)  
	V_{\Theta}  
	\phi_2
	} .
\end{align*}
Since $\pi_\Theta (\M \atensor \N) V_\Theta \Hin$ generates $\cK_\Theta$
and $(\pi_\Lambda (A_\alpha))$ is a bounded net,
this shows that $\pi_\Lambda (A_\alpha)$ 
converges to $\pi_\Lambda (A)$ 
in the weak operator topology.
\end{proof}

The following Theorems~\ref{theo:univbmax} and \ref{theo:equiv} are 
the first main results of this paper,
generalizing the result of Ref.~\onlinecite{1751-8121-50-13-135302}.

\begin{theo}
\label{theo:univbmax}
Let $\M$ and $\N$ be von Neumann algebras
and let $\Lambda \in \nchset{\M}{\LHin}$
and
$\Gamma \in \nchset{\N}{\LHin}$ be
normal channels.
Then the following conditions are equivalent.
\begin{enumerate}[(i)]
\item \label{enum:1-1}
$\Lambda \bcomp \Gamma ;$
\item \label{enum:1-2}
$\Lambda \maxcomp \Gamma ;$
\item \label{enum:1-3}
$\Gamma \concp \Lambda^c ;$
\item \label{enum:1-4}
$\Gamma \cocp \Lambda^c ;$
\item \label{enum:1-5}
$\Lambda \concp \Gamma^c ;$
\item \label{enum:1-6}
$\Lambda \cocp \Gamma^c .$
\end{enumerate}
\end{theo}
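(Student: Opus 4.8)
The plan is to establish the equivalence of the four conditions \ref{enum:1-1}--\ref{enum:1-4} by the cycle \ref{enum:1-1} $\Rightarrow$ \ref{enum:1-2} $\Rightarrow$ \ref{enum:1-3} $\Rightarrow$ \ref{enum:1-1}, supplemented by \ref{enum:1-3} $\Rightarrow$ \ref{enum:1-4} $\Rightarrow$ \ref{enum:1-2}; the two remaining conditions \ref{enum:1-5} and \ref{enum:1-6} then drop out by exchanging the roles of $\Lambda$ and $\Gamma$ (equivalently $\M$ and $\N$), since \ref{enum:1-1} and \ref{enum:1-2} are symmetric under this exchange while \ref{enum:1-3} and \ref{enum:1-4} turn into \ref{enum:1-5} and \ref{enum:1-6}.

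Most of the implications are assembled from the preceding lemmas. The implication \ref{enum:1-1} $\Rightarrow$ \ref{enum:1-2} is Lemma \ref{lemm:cimplication}.2, and \ref{enum:1-3} $\Rightarrow$ \ref{enum:1-4} is immediate because a normal channel is in particular a channel. For \ref{enum:1-3} $\Rightarrow$ \ref{enum:1-1} I would note that $\Lambda \bcomp \Lambda^c$ by Lemma \ref{lemm:cciscomp} (using the normality of $\Lambda$) and then propagate compatibility along the post-processing $\Gamma \concp \Lambda^c$ via Lemma \ref{lemm:ccconsis}.2.(i), with $\Lambda \concp \Lambda$ trivial, to obtain $\Lambda \bcomp \Gamma$. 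The auxiliary implication \ref{enum:1-4} $\Rightarrow$ \ref{enum:1-2} is the $\max$-analogue: Lemma \ref{lemm:cciscomp} gives $\Lambda \maxcomp \Lambda^c$, and since $\Gamma \cocp \Lambda^c$, Lemma \ref{lemm:ccconsis}.1.(i) yields $\Lambda \maxcomp \Gamma$.

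The substance of the theorem is \ref{enum:1-2} $\Rightarrow$ \ref{enum:1-3}. Given a $\max$-joint channel $\Theta \in \chset{\M \maxtensor \N}{\LHin}$ of $\Lambda$ and $\Gamma$, I take a minimal Stinespring representation $(\cK_\Theta, \pi_\Theta, V_\Theta)$ of $\Theta$ and put $\pi_\Lambda(A) := \pi_\Theta(A \otimes \1_\N)$ and $\pi_\Gamma(B) := \pi_\Theta(\1_\M \otimes B)$. Lemma \ref{lemm:normalrep} ensures that $\pi_\Lambda$ and $\pi_\Gamma$ are normal representations, and they have commuting ranges since $A \otimes \1_\N$ and $\1_\M \otimes B$ commute, so $\pi_\Gamma(\N) \subseteq \pi_\Lambda(\M)'$. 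The first marginal condition reads $\Lambda(A) = V_\Theta^\ast \pi_\Lambda(A) V_\Theta$, so $(\cK_\Theta, \pi_\Lambda, V_\Theta)$ is a (not necessarily minimal) Stinespring representation of $\Lambda$ whose associated commutant conjugate channel is $C \mapsto V_\Theta^\ast C V_\Theta$ on $\pi_\Lambda(\M)'$. Regarding $\pi_\Gamma$ as a normal channel from $\N$ into $\pi_\Lambda(\M)'$, the second marginal condition becomes $\Gamma(B) = V_\Theta^\ast \pi_\Gamma(B) V_\Theta$, which realizes $\Gamma$ as a normal concatenation of this commutant conjugate channel. By Proposition \ref{prop:cconj1} the latter is normally concatenation equivalent to the canonical $\Lambda^c$, so $\Gamma \concp \Lambda^c$.

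The main obstacle is exactly this last implication. In contrast to the fully quantum case, the joint channel is defined on an abstract maximal tensor product carrying no spatial structure, so one must first manufacture a single dilation in which both marginals are spatially represented and the second marginal lands inside the commutant of the first. The normality furnished by Lemma \ref{lemm:normalrep}---which in turn rests on the binormality of $\Theta$ from Lemma \ref{lemm:mbnormal}---is what upgrades $\pi_\Gamma$ from a mere $C^\ast$-post-processing to a genuine normal post-processing, so that the resulting factorization yields the normal concatenation \ref{enum:1-3} rather than only the weaker \ref{enum:1-4}.
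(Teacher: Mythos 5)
Your proposal is correct and follows essentially the same route as the paper: the same cycle of implications assembled from Lemmas~\ref{lemm:cimplication}, \ref{lemm:ccconsis}, and \ref{lemm:cciscomp}, the same core argument for (\ref{enum:1-2})$\implies$(\ref{enum:1-3}) via the minimal Stinespring representation of the $\max$-joint channel, Lemma~\ref{lemm:normalrep} for the normality of the marginal representations, and Proposition~\ref{prop:cconj1} to identify the resulting commutant conjugate channel with the canonical one. The symmetry argument for (\ref{enum:1-5}) and (\ref{enum:1-6}) matches the paper's ``shown in the same way'' remark.
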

\begin{proof}
(\ref{enum:1-1})$\implies$(\ref{enum:1-2}) follows from Lemma~\ref{lemm:cimplication}.

(\ref{enum:1-2})$\implies$(\ref{enum:1-3}).
Assume~(\ref{enum:1-2}) and let
$\Theta \in \chset{\M \maxtensor \N}{\LHin}$
be a $\max$-joint channel of $\Lambda$ and $\Gamma$ 
with a minimal Stinespring representation 
$(\cK_\Theta , \pi_\Theta , V_\Theta ) .$
We define representations 
$\pi_\Lambda$
and 
$\pi_\Gamma$ 
by \eqref{eq:pil} and \eqref{eq:pig},
which are normal by Lemma~\ref{lemm:normalrep}.
From 
$\Lambda (A) 
= \Theta (A \otimes \1_\N) 
=V_\Theta^\ast \pi_\Lambda (A ) V_\Theta $
$(A \in \M) ,$
$(\cK_\Theta , \pi_\Lambda , V_\Theta)$ is a Stinespring representation of $\Lambda .$
Therefore from Proposition~\ref{prop:cconj1},
we can define $\Lambda^c \in \nchset{ \pi_\Lambda (\M)^\prime  }{\LHin}$
by
$\Lambda^c (C) := V_\Theta^\ast C V_\Theta$
$(C \in \pi_\Lambda (\M)^\prime ) .$
Since $\pi_\Gamma (\N) \subseteq \pi_\Lambda (\M)^\prime$
and $\pi_\Gamma$ is normal,
we have 
$
\Gamma = \Lambda^c \circ \pi_\Gamma 
\concp
\Lambda^c ,
$
proving~(\ref{enum:1-3}).

(\ref{enum:1-3})$\implies$(\ref{enum:1-4}) is obvious.

(\ref{enum:1-4})$\implies$(\ref{enum:1-2}) 
and
(\ref{enum:1-3})$\implies$(\ref{enum:1-1})
follow from Lemmas~\ref{lemm:ccconsis} and \ref{lemm:cciscomp}.

The equivalence 
(\ref{enum:1-1})$\iff$(\ref{enum:1-2})$\iff$(\ref{enum:1-5})$\iff$(\ref{enum:1-6})
can be shown in the same way.
\end{proof}

\begin{theo}
\label{theo:equiv}
Let $\M$ and $\N$ be von Neumann algebras
and let $\Lambda \in \nchset{\M}{\LHin}$
and
$\Gamma \in \nchset{\N}{\LHin }$ be normal channels.
Then the following conditions are equivalent.
\begin{enumerate}[(i)]
\item
$\Lambda \concp \Gamma ;$
\item
$\Lambda \cocp \Gamma ;$
\item
$\Gamma^c \concp \Lambda^c ;$
\item
$\Gamma^c \cocp \Lambda^c .$
\end{enumerate}
\end{theo}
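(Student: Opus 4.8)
The plan is to prove the equivalence by closing the cycle of implications (i)$\implies$(ii)$\implies$(iii)$\implies$(iv)$\implies$(i). The implications (i)$\implies$(ii) and (iii)$\implies$(iv) are immediate: a normal channel is in particular a channel, so any witnessing normal post-processing also witnesses the corresponding non-normal concatenation. All of the real content is then packaged into a single general implication, which I would isolate first: for any normal channels $P$ and $Q$ with fully quantum input space $\LHin$, the relation $P \cocp Q$ implies $Q^c \concp P^c$. To prove this claim, I would start from $Q \maxcomp Q^c$, which holds by Lemma~\ref{lemm:cciscomp}. Since $P \cocp Q$, an application of Lemma~\ref{lemm:ccconsis}.1.(i) (with the two post-processings taken to be the degradation $P = Q \circ \alpha$ and the identity on $Q^c$) upgrades this to $P \maxcomp Q^c$. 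Finally, applying the equivalence (\ref{enum:1-2})$\iff$(\ref{enum:1-3}) of Theorem~\ref{theo:univbmax} to the pair $(P, Q^c)$ converts $P \maxcomp Q^c$ into $Q^c \concp P^c$, as claimed.

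Applying this general claim with $(P,Q) = (\Lambda, \Gamma)$ gives (ii)$\implies$(iii) at once. To close the cycle with (iv)$\implies$(i), I would invoke the same claim for the conjugate channels, taking $(P,Q) = (\Gamma^c, \Lambda^c)$: the hypothesis $\Gamma^c \cocp \Lambda^c$ then yields $(\Lambda^c)^c \concp (\Gamma^c)^c$. The remaining ingredient is a double-conjugate lemma asserting $(\Lambda^c)^c \eqncp \Lambda$ (and likewise $(\Gamma^c)^c \eqncp \Gamma$); granting it, transitivity of $\concp$ along the chain $\Lambda \concp (\Lambda^c)^c \concp (\Gamma^c)^c \concp \Gamma$ delivers $\Lambda \concp \Gamma$, which is (i).

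For the double-conjugate lemma I would fix a minimal Stinespring representation $(\cK, \pi, V)$ of the normal channel $\Lambda$, so that $\pi$ is normal and hence $\pi(\M)$ is a von Neumann algebra with $\pi(\M)^{\prime\prime} = \pi(\M)$. Realizing $\Lambda^c$ through the inclusion representation $\pi(\M)^\prime \hookrightarrow \LK$ with isometry $V$, a direct computation gives $(\Lambda^c)^c \in \nchset{\pi(\M)}{\LHin}$ with $(\Lambda^c)^c(D) = V^\ast D V$; by Proposition~\ref{prop:cconj1} this concrete choice is harmless up to $\eqncp$. One direction, $\Lambda \concp (\Lambda^c)^c$, is immediate from $\Lambda = (\Lambda^c)^c \circ \pi$ with $\pi$ a normal channel onto $\pi(\M)$. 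The opposite direction requires a normal, unital, CP section of $\pi$: writing $\ker \pi = (\1_\M - z)\M$ for a central projection $z$, the restriction $\pi|_{z\M}$ is a normal isomorphism onto $\pi(\M)$, and setting $\beta(D) := (\pi|_{z\M})^{-1}(D) + \phi(D)(\1_\M - z)$ for a fixed normal state $\phi$ on $\pi(\M)$ produces a normal channel with $\pi \circ \beta = \id$, whence $(\Lambda^c)^c = \Lambda \circ \beta \concp \Lambda$. I expect this section construction, namely correctly handling the possible non-faithfulness of $\pi$, to be the main obstacle of the whole argument; once it is in place, everything else reduces mechanically to Lemmas~\ref{lemm:cciscomp} and \ref{lemm:ccconsis} together with Theorem~\ref{theo:univbmax}.
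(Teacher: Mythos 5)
Your proof is correct, and its overall skeleton coincides with the paper's: the cycle (i)$\implies$(ii)$\implies$(iii)$\implies$(iv)$\implies$(i), with (ii)$\implies$(iii) obtained exactly as you describe from Lemma~\ref{lemm:cciscomp}, Lemma~\ref{lemm:ccconsis}, and the equivalence (\ref{enum:1-2})$\iff$(\ref{enum:1-3}) of Theorem~\ref{theo:univbmax}, and with (iv)$\implies$(i) reduced to the double-conjugate equivalence $\Lambda \eqncp \Lambda^{cc}$ (stated in the paper as Corollary~\ref{coro:cc}). The one place where you genuinely diverge is the proof of that double-conjugate lemma. The paper handles the possible non-faithfulness of $\Lambda$ (equivalently, non-injectivity of $\pi$) by importing the minimal-sufficient-channel machinery of Ref.~\onlinecite{kuramochi2017minimal} (Lemma~2): it replaces $\Lambda$ by a faithful $\Lambda_0 \eqncp \Lambda$, observes that $\pi_0$ is then a normal isomorphism onto $\pi_0(\M_0)$ so that $\Lambda_0 \cong \Lambda_0^{cc}$, and transports the equivalence back using the already-proved implication (ii)$\implies$(iii). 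You instead work directly with $\Lambda$ itself and build a normal unital CP section $\beta$ of $\pi$ from the central projection $z$ with $\ker\pi = (\1_\M - z)\M$, giving $(\Lambda^c)^c = \Lambda \circ \beta$ outright. Your route is more self-contained (it avoids the external faithful-reduction lemma) and isolates the real issue --- the non-faithful case --- in an explicit algebraic construction; the paper's route is shorter on the page because it reuses machinery already needed elsewhere. Both are valid, and your section construction checks out: $\pi(z) = \1_\cK$ forces $\pi(\1_\M - z) = 0$, so $\pi \circ \beta = \id_{\pi(\M)}$, and $\beta$ is normal, unital, and CP as a sum of a normal $\ast$-isomorphism and a state-times-positive-element map.
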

\begin{proof}
(i)$\implies$(ii) and (iii)$\implies$(iv) are obvious.

(ii)$\implies$(iii).
Assume (ii).
From $\Gamma \maxcomp \Gamma^c$
and Lemma~\ref{lemm:ccconsis},
we have $\Lambda \maxcomp \Gamma^c .$
Therefore Theorem~\ref{theo:univbmax} implies
$\Gamma^c \concp \Lambda^c .$

(iv)$\implies$(i).
Assume (iv) and
let $(\cK_\Lambda , \pi_\Lambda , V_\Lambda )$
and 
$(\cK_\Gamma , \pi_\Gamma  , V_\Gamma  )$
be minimal Stinespring representations of 
$\Lambda $ and $\Gamma ,$ respectively.
Then the conjugate channels $\Lambda^c $ and $\Gamma^c$
are the maps given by
\begin{gather*}
	\Lambda^c (C)
	=
	V_\Lambda^\ast C V_\Lambda 
	\quad
	(C \in \pi_\Lambda (\M)^\prime) ,
	\\
	\Gamma^c (D)
	=
	V_\Gamma^\ast D V_\Gamma 
	\quad
	(D \in \pi_\Gamma (\N)^\prime) .
\end{gather*}
Define 
$\Lambda^{cc} \in \nchset{\pi( \M ) }{\LHin}$
and
$\Gamma^{cc} \in \nchset{\pi( \N ) }{\LHin}$
by
\begin{gather*}
	\Lambda^{cc} (A)
	=
	V_\Lambda^\ast A V_\Lambda 
	\quad
	(A \in \pi_\Lambda (\M)) ,
	\\
	\Gamma^{cc} (B)
	=
	V_\Gamma^\ast B V_\Gamma 
	\quad
	(B \in \pi_\Gamma (\N)) .
\end{gather*}
Then $\Lambda^{cc}$ and $\Gamma^{cc}$ are commutant conjugate channels of
$\Lambda^c $ and $\Gamma^c ,$
respectively.
Hence, from the implication (ii)$\implies$(iii),
we obtain $\Lambda^{cc} \concp \Gamma^{cc} .$
Thus it is sufficient to prove $\Lambda \eqncp \Lambda^{cc}$
and $\Gamma \eqncp \Gamma^{cc} .$

Now we show 
$\Lambda \eqncp \Lambda^{cc} .$
By following the construction given in Ref.~\onlinecite{kuramochi2017minimal}
(Lemma~2),
there exist a von Neumann algebra $\M_0$
and a normal channel 
$\Lambda_0 \in \nchset{\M_0}{\LHin}$
such that
$\Lambda \eqncp \Lambda_0$
and
$\Lambda_0$ is faithful.
We take a minimal Stinespring representation 
$(\cK_0 , \pi_0 , V_0)$ of $\Lambda_0$
and define conjugate channels 
$\Lambda_0^c \in \nchset{\pi_0 (\M_0)^\prime}{\LHin}$
and
$\Lambda_0^{cc} \in \nchset{\pi_0 (\M_0)}{\LHin}$
in the same way as $\Lambda^c$ and $\Lambda^{cc} .$
Then from the implication (ii)$\implies$(iii)
and $\Lambda \eqncp \Lambda_0 ,$
we have
$\Lambda^c \eqncp \Lambda_0^c$
and
$\Lambda^{cc} \eqncp \Lambda_0^{cc} .$
On the other hand, 
since $\pi_0$ is a normal isomorphism 
from $\M_0$ onto $\pi_0 (\M_0)$
due to the faithfulness of $\Lambda_0 ,$
$\Lambda_0$ and $\Lambda_0^{cc}$ are isomorphic, which implies
$\Lambda_0 \eqncp \Lambda_0^{cc} .$
Therefore we obtain $\Lambda \eqncp \Lambda_0 \eqncp \Lambda_0^{cc} \eqncp \Lambda^{cc} .$
We can show
$\Gamma \eqncp \Gamma^{cc} $
in the same way,
which completes the proof of (iv)$\implies$(i).
\end{proof}

From the above proof, we obtain
\begin{coro}
\label{coro:cc}
Let $\M$ be a von Neumann algebra, 
let $\Lambda \in \nchset{\M}{\LHin}$
be a normal channel,
and let $\Lambda^{cc}$ be a conjugate channel
of $\Lambda^c .$
Then $\Lambda \eqncp \Lambda^{cc} .$
\end{coro}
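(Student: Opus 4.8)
The plan is to make $\Lambda^{cc}$ explicit, settle the easy half of the equivalence by hand, and then reduce the hard half to the faithful case. First I would fix a minimal Stinespring representation $(\cK,\pi,V)$ of $\Lambda$, so that $\pi$ is a normal representation of $\M$ and $\Lambda^c(C)=V^\ast C V$ for $C\in\pi(\M)^\prime$. Since $\pi$ is normal, $\pi(\M)$ is itself a von Neumann algebra, so the inclusion $\iota\colon\pi(\M)^\prime\hookrightarrow\cL(\cK)$ makes $(\cK,\iota,V)$ a Stinespring representation of $\Lambda^c$, and the associated commutant conjugate channel acts on $(\pi(\M)^\prime)^\prime=\pi(\M)$ by $A\mapsto V^\ast A V$. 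By Proposition~\ref{prop:cconj1} every commutant conjugate of $\Lambda^c$ is normally concatenation equivalent, so it suffices to prove the claim for this particular $\Lambda^{cc}\in\nchset{\pi(\M)}{\LHin}$, $\Lambda^{cc}(A)=V^\ast A V$.

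The easy direction is immediate: since $\Lambda(A)=V^\ast\pi(A)V=\Lambda^{cc}(\pi(A))$ and $\pi\colon\M\to\pi(\M)$ is a surjective normal $\ast$-homomorphism, hence a normal channel, we have $\Lambda=\Lambda^{cc}\circ\pi$, which gives $\Lambda\concp\Lambda^{cc}$.

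The reverse concatenation $\Lambda^{cc}\concp\Lambda$ is the crux, and the obstacle is that $\pi$ need not be injective, so one cannot simply invert it to build a post-processing $\pi(\M)\to\M$. I would remove this obstacle by passing to a faithful channel: by the minimal sufficiency theory recalled in Section~\ref{sec:prel}, there is a faithful (minimal sufficient) channel $\Lambda_0\in\nchset{\M_0}{\LHin}$ with $\Lambda\eqncp\Lambda_0$. For a faithful channel the minimal Stinespring map $\pi_0$ is injective, since $\pi_0(A)=0$ forces $\Lambda_0(A^\ast A)=V_0^\ast\pi_0(A)^\ast\pi_0(A)V_0=0$ and hence $A=0$; thus $\pi_0$ is a normal isomorphism of $\M_0$ onto $\pi_0(\M_0)$. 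Consequently $\Lambda_0$ and its double conjugate $\Lambda_0^{cc}$ are normally isomorphic via $\pi_0$, so $\Lambda_0\eqncp\Lambda_0^{cc}$.

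It then remains to transport this equivalence back to $\Lambda$. Using Theorem~\ref{theo:equiv}, whose content is exactly that normal concatenation equivalence is preserved under passage to the commutant conjugate, $\Lambda\eqncp\Lambda_0$ yields $\Lambda^c\eqncp\Lambda_0^c$ and hence $\Lambda^{cc}\eqncp\Lambda_0^{cc}$, so the chain $\Lambda\eqncp\Lambda_0\eqncp\Lambda_0^{cc}\eqncp\Lambda^{cc}$ closes the argument. A cleaner alternative that bypasses the explicit reduction is to invoke the two main theorems twice: Lemma~\ref{lemm:cciscomp} gives $X\bcomp X^c$ for any normal channel $X$, and the equivalence of conditions (i) and (v) in Theorem~\ref{theo:univbmax} upgrades this to the general fact $X\concp X^{cc}$. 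Applying it to $X=\Lambda$ gives one half again, while applying it to $X=\Lambda^c$ gives $\Lambda^c\concp(\Lambda^{cc})^c$; feeding the latter into the equivalence of conditions (i) and (iii) of Theorem~\ref{theo:equiv} for the pair $(\Lambda^{cc},\Lambda)$ yields $\Lambda^{cc}\concp\Lambda$. Either way the genuine difficulty is the backward concatenation, and it is resolved through the faithful/minimal-sufficient machinery, either directly or as already encapsulated in Theorem~\ref{theo:equiv}.
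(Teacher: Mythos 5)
Your proposal is correct and follows essentially the same route as the paper: the corollary is extracted verbatim from the proof of Theorem~\ref{theo:equiv} (iv)$\implies$(i), which reduces to a faithful channel $\Lambda_0$ with $\Lambda \eqncp \Lambda_0$, observes that faithfulness makes $\pi_0$ a normal isomorphism onto its image so that $\Lambda_0 \cong \Lambda_0^{cc}$, and transports the equivalence back via the already-proved implication (ii)$\implies$(iii). Your explicit identification of $\Lambda^{cc}$ on $\pi(\M)$ and the injectivity argument for $\pi_0$ are correct supporting details, and your alternative via Theorem~\ref{theo:univbmax}(i)$\iff$(v) is also valid since Theorem~\ref{theo:equiv} is fully established before the corollary is stated.
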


\begin{rem}
\label{rem:class}
For a normal channel 
$\Lambda \in \nchset{\M}{\LHin},$
we denote by
$\mathfrak{C}_\Lambda^{\max}$
the class of normal channels $\max$-compatible with $\Lambda .$
Then Theorem~\ref{theo:univbmax} states that
$\Gamma \in \mathfrak{C}_\Lambda^{\max}$ 
if and only if 
$\Gamma \concp \Lambda^c $
for a normal channel $\Gamma .$
In this sense, $\Lambda^c$ is the maximal element of 
$\mathfrak{C}^{\max}_\Lambda$
with respect to the concatenation preorder.
Theorem~\ref{theo:equiv}, on the other hand,
implies that
$\Lambda \concp \Gamma$
if and only if
$
\mathfrak{C}^{\max}_\Gamma 
\subseteq 
\mathfrak{C}^{\max}_\Lambda 
$
for normal channels $\Lambda$ and $\Gamma .$
Thus the equivalence class of a normal channel $\Lambda$
with respect to the concatenation equivalence relation $\eqncp$ or $\eqcp$
is uniquely characterized by $\mathfrak{C}^{\max}_\Lambda .$
\end{rem}

The equivalence of the relations
$\cocp$ and $\concp$
(respectively, $\eqcp$ and $\eqncp$)
for normal channels in Theorem~\ref{theo:equiv}
can be established for general input von Neumann algebras 
as in the following corollary.
\begin{coro}
\label{coro:nnonn}
Let $\M ,$ $\N , $ and $\Min$
be von Neumann algebras
and 
let 
$\Lambda \in \nchset{\M}{\Min}$
and
$\Gamma \in \nchset{\N}{\Min}$
be normal channels.
Then 
$\Lambda \cocp \Gamma$
(respectively, $\Lambda \eqcp \Gamma$)
if and only if
$\Lambda \concp \Gamma$
(respectively, $\Lambda \eqncp \Gamma$).
\end{coro}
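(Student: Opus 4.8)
The plan is to reduce the statement to the fully quantum input case already settled in Theorem~\ref{theo:equiv}. Only the implications $\Lambda \cocp \Gamma \implies \Lambda \concp \Gamma$ and $\Lambda \eqcp \Gamma \implies \Lambda \eqncp \Gamma$ require proof, since a normal channel is in particular a channel, so $\concp$ trivially implies $\cocp$ and $\eqncp$ implies $\eqcp$. The essential content is therefore to upgrade a possibly non-normal post-processing into a normal one, which is exactly what Theorem~\ref{theo:equiv} accomplishes when the input algebra is $\LHin$.

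First I would fix a faithful normal representation $\pi \colon \Min \to \LHin$ on some Hilbert space $\Hin$, which exists because every von Neumann algebra admits a faithful normal representation. Viewing $\pi$ as a normal channel in $\nchset{\Min}{\LHin}$, the compositions $\pi \circ \Lambda \in \nchset{\M}{\LHin}$ and $\pi \circ \Gamma \in \nchset{\N}{\LHin}$ are normal channels with fully quantum input. If $\Lambda \cocp \Gamma$, say $\Lambda = \Gamma \circ \Psi$ with $\Psi \in \chset{\M}{\N}$, then applying $\pi$ on the left of both sides gives $\pi \circ \Lambda = (\pi \circ \Gamma) \circ \Psi$, so that $\pi \circ \Lambda \cocp \pi \circ \Gamma$.

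Next I would apply Theorem~\ref{theo:equiv} to the pair $\pi \circ \Lambda$, $\pi \circ \Gamma$: the equivalence of conditions (i) and (ii) there yields a normal channel $\Gamma' \in \nchset{\M}{\N}$ with $\pi \circ \Lambda = (\pi \circ \Gamma) \circ \Gamma' = \pi \circ (\Gamma \circ \Gamma')$. Since $\pi$ is injective on $\Min$ and both $\Lambda$ and $\Gamma \circ \Gamma'$ take values in $\Min$, cancelling $\pi$ gives $\Lambda = \Gamma \circ \Gamma'$ with $\Gamma'$ normal, i.e.\ $\Lambda \concp \Gamma$. The equivalence-relation statement then follows by applying this to both orderings: $\Lambda \eqcp \Gamma$ means $\Lambda \cocp \Gamma$ and $\Gamma \cocp \Lambda$, which by the above is equivalent to $\Lambda \concp \Gamma$ and $\Gamma \concp \Lambda$, that is, $\Lambda \eqncp \Gamma$.

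The only point demanding care is the cancellation of $\pi$, which hinges on its injectivity; this is guaranteed by choosing $\pi$ faithful. Beyond that, the argument is a purely formal reduction, with all the analytic difficulty---producing a normal post-processing from a non-normal one---already encapsulated in Theorem~\ref{theo:equiv}.
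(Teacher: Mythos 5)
Your proposal is correct and follows essentially the same route as the paper: the paper simply lets $\Min$ act faithfully on a Hilbert space $\Hin$, regards $\Lambda$ and $\Gamma$ as elements of $\nchset{\M}{\LHin}$ and $\nchset{\N}{\LHin}$, and invokes Theorem~\ref{theo:equiv}; your version merely makes the faithful normal representation $\pi$ and the (valid) injectivity-based cancellation explicit.
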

\begin{proof}
Suppose that $\Min$ acts on a Hilbert space $\Hin .$
Then we may regard $\Lambda$ and $\Gamma$ as normal channels
in $\nchset{\M}{\LHin}$
and 
$\nchset{\N}{\LHin} ,$
respectively.
Thus the claim is immediate from Theorem~\ref{theo:equiv}.
\end{proof}
We can further generalize this equivalence to statistical experiments
with an arbitrary parameter set as follows.

A triple $\E = \seE$ is called 
a (quantum) statistical 
experiment~\cite{jencovapetz2006,gutajencova2007,kuramochi2017minimal}
if $\M$ is a von Neumann algebra called the outcome space of $\E$,
$\Theta$ is a set called the parameter set of $\E$,
and $(\phth)_{\thin} \in \Ss (\M)^\Theta$
is a family of normal states on $\M$ parametrized by $\Theta .$
For a pair of statistical experiments
$\E = \seE$ and $\F = \seF $ with a common parameter set $\Theta ,$
we define the following coarse-graining preorder and equivalence relations:
\begin{align*}
	\E \cocp \F
	&:\defarrow
	\exists \Gamma 
	\in
	\chset{\M}{\N}
	\text{ s.t.\ }
	\left[
	\phth = \psth \circ \Gamma 
	\, (\forall \thin)
	\right] ;
	\\
	\E \concp \F
	&:\defarrow
	\exists \Gamma 
	\in
	\nchset{\M}{\N}
	\text{ s.t.\ }
	\left[
	\phth = \psth \circ \Gamma 
	\, (\forall \thin)
	\right] ;
	\\
	\E \eqcp \F
	&:\defarrow
	\text{$\E \cocp \F$ and $\F \cocp \E ;$}
	\\
	\E \eqncp \F
	&:\defarrow
	\text{$\E \concp \F$ and $\F \concp \E .$}
\end{align*}
For each statistical experiment
$\E = \seE ,$
we define the associated normal channel 
$\Lambda_{\E} \in \nchset{\M}{\cL  (\ell^2 (\Theta))} $
by
\[
	\Lambda_\E (A)
	:=
	\sum_{\thin }
	\phth (A)
	\ket{\delta_\theta}
	\bra{\delta_\theta} ,
\]
where $\ell^2 (\Theta)$ is the Hilbert space of square-summable 
complex-valued functions on $\Theta$
equipped with the inner product
\[
	\braket{f|g}
	:=
	\sum_{\thin}
	\overline{f(\theta)} g(\theta )
	\quad
	(f,g \in \ell^2 (\Theta))
\]
and $(\delta_\theta)_{\thin}$ is an orthonormal basis given by
\[
	\delta_\theta (\theta^\prime)
	:=
	\begin{cases}
		1 , & (\theta^\prime = \theta) ; \\
		0 , & (\theta^\prime \neq \theta) .
	\end{cases}
\]
\begin{coro}
\label{coro:se}
Let $\E = \seE$ and $\F = \seF$ be statistical experiments
with a common parameter set $\Theta .$
Then the following conditions are equivalent.
\begin{enumerate}[(i)]
\item
$\E \cocp \F$
(respectively, $\E \eqcp \F$);
\item
$\E \concp \F$
(respectively, $\E \eqncp \F$);
\item
$\Lambda_\E \cocp \Lambda_\F$
(respectively, $\Lambda_\E \eqcp \Lambda_\F$);
\item
$\Lambda_\E \concp \Lambda_\F$
(respectively, $\Lambda_\E \eqncp \Lambda_\F$).
\end{enumerate}
\end{coro}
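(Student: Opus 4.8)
The plan is to transfer the equivalence between the coarse-graining relations of the experiments and those of the associated channels, and then to invoke Corollary~\ref{coro:nnonn} at the channel level. The main device is the observation that $\Lambda_\E$ stores the entire family $(\varphi_\theta)_{\thin}$ as the diagonal of a fixed orthonormal basis. Concretely, I would first prove the following bridge: for any channel $\Psi \in \chset{\M}{\N}$ one has $\Lambda_\E = \Lambda_\F \circ \Psi$ if and only if $\varphi_\theta = \psi_\theta \circ \Psi$ for every $\thin$, and the same statement holds verbatim with $\chset{\M}{\N}$ replaced by $\nchset{\M}{\N}$. To see this, note that both $\Lambda_\E(A)$ and $\Lambda_\F(\Psi(A))$ are diagonal operators in the basis $(\delta_\theta)_{\thin}$, with
\[
	\bra{\delta_{\theta'}} \Lambda_\E(A) \ket{\delta_{\theta'}} = \varphi_{\theta'}(A),
	\qquad
	\bra{\delta_{\theta'}} \Lambda_\F(\Psi(A)) \ket{\delta_{\theta'}} = \psi_{\theta'}(\Psi(A)),
\]
so the operator identity $\Lambda_\E = \Lambda_\F \circ \Psi$ is equivalent to the coincidence of all diagonal entries, i.e.\ to $\varphi_\theta = \psi_\theta \circ \Psi$ for all $\thin$.

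Granting the bridge, the admissible post-processings in the experiment picture and in the channel picture are drawn from exactly the same class --- $\chset{\M}{\N}$ for the non-normal relations and $\nchset{\M}{\N}$ for the normal ones --- and a single $\Psi$ certifies $\E \cocp \F$ precisely when it certifies $\Lambda_\E \cocp \Lambda_\F$, and likewise for $\concp$. This delivers the equivalences (i)$\iff$(iii) and (ii)$\iff$(iv) at once. The corresponding statements for the equivalence relations follow by applying these with the roles of $\E$ and $\F$ interchanged, since $\E \eqcp \F$ unfolds into $\E \cocp \F$ together with $\F \cocp \E$, and similarly for $\eqncp$.

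It then remains to link (iii) and (iv). Because $\Lambda_\E \in \nchset{\M}{\cL(\ell^2(\Theta))}$ and $\Lambda_\F \in \nchset{\N}{\cL(\ell^2(\Theta))}$ are normal channels with the common von Neumann algebra input space $\cL(\ell^2(\Theta))$, Corollary~\ref{coro:nnonn} immediately yields $\Lambda_\E \cocp \Lambda_\F \iff \Lambda_\E \concp \Lambda_\F$, and the analogous equivalence for $\eqcp$ and $\eqncp$; that is, (iii)$\iff$(iv). Chaining (i)$\iff$(iii)$\iff$(iv)$\iff$(ii) closes the cycle and, in particular, produces the genuinely new content (i)$\iff$(ii): for statistical experiments the coarse-graining preorder built from arbitrary channels coincides with the one built from normal channels. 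The only step requiring care --- and the modest obstacle here --- is justifying the bridge in full generality: one must check that $\Lambda_\E(A)$ is a bona fide bounded diagonal element of $\cL(\ell^2(\Theta))$ whose entries recover $\varphi_\theta(A)$ uniformly in $\theta$. Boundedness is clear from $|\varphi_\theta(A)| \le \norm{A}$, and the diagonal form makes the entrywise comparison rigorous, so the reduction carries no genuine analytic difficulty; all the substantive work is already packed into Corollary~\ref{coro:nnonn}.
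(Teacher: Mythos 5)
Your proposal is correct and follows essentially the same route as the paper: the diagonal-entry comparison establishing that $\Lambda_\E = \Lambda_\F \circ \Gamma$ holds iff $\phth = \psth \circ \Gamma$ for all $\thin$ is exactly the paper's argument for (i)$\iff$(iii) and (ii)$\iff$(iv). The only cosmetic difference is that you cite Corollary~\ref{coro:nnonn} for (iii)$\iff$(iv) where the paper invokes Theorem~\ref{theo:equiv} directly (applicable here since the common input space $\cL(\ell^2(\Theta))$ is fully quantum), which is immaterial.
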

\begin{proof}
(i)$\iff$(iii) and (ii)$\iff$(iv). For a linear map 
$\Gamma \colon \M \to \N ,$
we have the following equivalences:
\begin{align*}
	\Lambda_\E = \Lambda_\F \circ \Gamma
	&\iff
	\sum_{\thin}
	\phth (A) 
	\ket{\delta_\theta} \bra{\delta_\theta}
	=
	\sum_{\thin}
	\psth \circ \Gamma (A) 
	\ket{\delta_\theta} \bra{\delta_\theta} ,
	\quad (\forall A \in \M)
	\\
	&\iff
	\phth (A) = \psth \circ \Gamma (A) ,
	\quad (\forall \thin , \forall A \in \M)
	\\
	&\iff
	\phth = \psth \circ \Gamma
	, \quad
	(\forall \thin) .
\end{align*}
Therefore we obtain the equivalences
(i)$\iff$(iii) and (ii)$\iff$(iv).

(iii)$\iff$(iv) is immediate from Theorem~\ref{theo:equiv}.
\end{proof}

\subsection{%
Universality of the commutant conjugate channels for channels with outcome \cstar-algebras%
}
Now we consider the $\max$-compatibility
relation for channels with arbitrary outcome \cstar-algebras.

We first define the normal extension of a channel
whose existence and uniqueness are assured in the following lemma.
\begin{lemm}
\label{lemm:normalextension}
Let $\A$ be a \cstar-algebra,
let $\Min$ be a von Neumann algebra on $\Hin ,$
and let $\Phi \in \chset{\A}{\Min}$ be a channel.
Then $\Phi$ uniquely extends to a normal channel
$\overline{\Phi} \in \nchset{\A^{\ast \ast} }{\Min} .$
The channel $\overline{\Phi}$ is called the 
normal extension of $\Phi .$
\end{lemm}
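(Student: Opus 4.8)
The plan is to build $\overline{\Phi}$ from a Stinespring dilation of $\Phi$ together with the universal property of the enveloping von Neumann algebra. Since $\Min$ acts on $\Hin ,$ I regard $\Phi$ as a channel in $\chset{\A}{\LHin}$ and take a Stinespring representation $(\cK , \pi , V)$ of $\Phi ,$ so that $\Phi (A) = V^\ast \pi (A) V$ for all $A \in \A .$ By the universal property of $\A^{\ast \ast}$ recalled in Section~\ref{sec:prel}, the representation $\pi \colon \A \to \LK$ extends uniquely to a normal representation $\tilde{\pi} \colon \A^{\ast \ast} \to \LK .$ I then \emph{define} the candidate extension by $\overline{\Phi} (X) := V^\ast \tilde{\pi} (X) V$ for $X \in \A^{\ast \ast} .$ This is completely positive as a compression of the $\ast$-homomorphism $\tilde{\pi} ,$ and unital because $\overline{\Phi} (\1_{\A^{\ast \ast}}) = V^\ast \1_{\cK} V = V^\ast V = \1_{\Hin}$ by isometry of $V .$ It is normal since $\tilde{\pi}$ is normal and the compression $T \mapsto V^\ast T V$ on $\LK$ is ultraweakly continuous, so the composite is ultraweakly continuous; and it restricts to $\Phi$ on $\A$ because $\tilde{\pi}$ agrees with $\pi$ there, giving $\overline{\Phi} (A) = V^\ast \pi (A) V = \Phi (A) .$

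The step that needs the most care is showing that the range of $\overline{\Phi}$ lies in $\Min$ rather than merely in $\LHin .$ For this I would use that $\A$ is ultraweakly dense in $\A^{\ast \ast} ,$ that $\overline{\Phi}$ is ultraweakly continuous, and that $\Min$ is ultraweakly closed in $\LHin .$ Concretely, given $X \in \A^{\ast \ast}$ I choose a net $A_\alpha \in \A$ converging to $X$ ultraweakly; then $\overline{\Phi} (A_\alpha) = \Phi (A_\alpha) \in \Min$ converges ultraweakly to $\overline{\Phi} (X) ,$ and closedness of $\Min$ forces $\overline{\Phi} (X) \in \Min .$ Here it is worth noting that no boundedness of the net is needed: ultraweak continuity of $\overline{\Phi}$ means $\varphi \circ \overline{\Phi}$ is a normal functional on $\A^{\ast \ast}$ for every normal functional $\varphi$ on $\LHin ,$ which is exactly what makes the convergence $\overline{\Phi} (A_\alpha) \to \overline{\Phi} (X)$ valid for an arbitrary ultraweakly convergent net (so one need not invoke Kaplansky's density theorem). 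This yields $\overline{\Phi} \in \nchset{\A^{\ast \ast}}{\Min} .$

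For uniqueness, suppose $\overline{\Phi}^\prime \in \nchset{\A^{\ast \ast}}{\Min}$ is another normal channel with $\overline{\Phi}^\prime |_{\A} = \Phi .$ Then $\overline{\Phi}$ and $\overline{\Phi}^\prime$ are two ultraweakly continuous maps agreeing on the ultraweakly dense subalgebra $\A ,$ hence they coincide on all of $\A^{\ast \ast}$ by the same density-and-continuity argument as above. The main obstacle is precisely this interplay between the ultraweak density of $\A$ in $\A^{\ast \ast}$ and the ultraweak continuity built into normality; once that is handled cleanly, the remaining verifications (complete positivity, unitality, and the restriction property) are routine properties of the Stinespring compression $X \mapsto V^\ast \tilde{\pi}(X) V .$
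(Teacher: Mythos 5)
Your proof is correct and follows essentially the same route as the paper's: extend the Stinespring representation $\pi$ to a normal representation of $\A^{\ast\ast}$ via the universal property, compress by $V ,$ and use ultraweak density of $\A$ in $\A^{\ast\ast}$ together with ultraweak closedness of $\Min$ and normality to get the range condition and uniqueness. The extra detail you supply (e.g.\ that no boundedness of the approximating net is needed) is accurate but does not change the argument.
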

\begin{proof}
Let $(\cK , \pi , V)$ be a minimal Stinespring representation of 
$\Phi .$
From the universality of the enveloping von Neumann algebra
$\A^{\ast \ast} ,$
$\pi$ extends to a normal representation
$\overline{\pi} \colon \A^{\ast \ast} \to \LK .$
Then we define $\overline{\Phi} \colon \A^{\ast \ast } \to \LHin$
by
$\overline{\Phi} (A) = V^\ast \overline{\pi} (A) V$
$(A \in \A^{\ast \ast}),$
which is a normal extension of $\Phi .$
Since $\A$ is ultraweakly dense in $\A^{\ast \ast}$
and $\overline{\Phi} (\A) = \Phi (\A) \subseteq \Min ,$
$\overline{\Phi}$ is a unique channel in $\nchset{\A^{\ast \ast}}{\Min}$
that is an extension of $\Phi .$
\end{proof}

The following lemma states that
the normal extension $\overline{\Phi}$ of a channel $\Phi$
is the smallest normal channel greater than $\Phi$
with respect to the concatenation preorder.

\begin{lemm}
\label{lemm:minex}
Let $\A , \Min , \Phi \in \chset{\A}{\Min} ,$
and $\overline{\Phi} \in \nchset{\A^{\ast \ast }}{ \Min}$
be the same as in Lemma~\ref{lemm:normalextension},
let $\M$ be a von Neumann algebra,
and let $\Lambda \in \nchset{\M}{\Min}$ be a normal channel.
Then $\Phi \cocp \Lambda$ if and only if 
$\overline{\Phi} \concp \Lambda .$
\end{lemm}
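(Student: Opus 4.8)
The plan is to handle the two implications separately, in both directions leaning on the \emph{uniqueness} clause of Lemma~\ref{lemm:normalextension} rather than on any explicit computation.

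For the ``only if'' direction, I would start from $\Phi \cocp \Lambda$, which by Definition~\ref{defi:concatenation} gives a channel $\Psi \in \chset{\A}{\M}$ with $\Phi = \Lambda \circ \Psi$. Since $\M$ is a von Neumann algebra, it acts on some Hilbert space, so Lemma~\ref{lemm:normalextension} applies to $\Psi$ and produces a normal extension $\overline{\Psi} \in \nchset{\A^{\ast \ast}}{\M}$ with $\overline{\Psi}\rvert_\A = \Psi$. I would then take $\Gamma := \overline{\Psi}$ as the candidate witness for $\overline{\Phi} \concp \Lambda$. The composite $\Lambda \circ \overline{\Psi}$ lies in $\nchset{\A^{\ast \ast}}{\Min}$ (normality is stable under composition, being equivalent to ultraweak continuity), and its restriction to $\A$ is $\Lambda \circ \Psi = \Phi$. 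Hence $\Lambda \circ \overline{\Psi}$ is a normal channel on $\A^{\ast \ast}$ extending $\Phi$, and the uniqueness part of Lemma~\ref{lemm:normalextension} forces $\Lambda \circ \overline{\Psi} = \overline{\Phi}$. This yields $\overline{\Phi} = \Lambda \circ \Gamma$ with $\Gamma$ normal, i.e.\ $\overline{\Phi} \concp \Lambda$.

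For the ``if'' direction, I would start from $\overline{\Phi} \concp \Lambda$, giving a normal channel $\Gamma \in \nchset{\A^{\ast \ast}}{\M}$ with $\overline{\Phi} = \Lambda \circ \Gamma$, and simply restrict to $\A$: set $\Psi := \Gamma\rvert_\A \in \chset{\A}{\M}$. Because the normal extension satisfies $\overline{\Phi}\rvert_\A = \Phi$, I obtain $\Phi = \overline{\Phi}\rvert_\A = (\Lambda \circ \Gamma)\rvert_\A = \Lambda \circ \Psi$, so $\Phi \cocp \Lambda$. This direction is immediate once one notes that composing with $\Lambda$ commutes with restriction to the subalgebra $\A$.

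The one step deserving genuine care, and the crux of the argument, is the identification $\Lambda \circ \overline{\Psi} = \overline{\Phi}$ in the ``only if'' direction. This is not established by a direct calculation but by invoking uniqueness of the normal extension, and to do so legitimately I must first verify that $\Lambda \circ \overline{\Psi}$ is genuinely normal as a map $\A^{\ast \ast} \to \Min$ and that it agrees with $\Phi$ on the ultraweakly dense subalgebra $\A \subseteq \A^{\ast \ast}$; the remaining verifications are routine.
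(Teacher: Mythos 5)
Your proposal is correct and follows essentially the same route as the paper: the forward direction extends the intermediary channel $\Psi$ to $\overline{\Psi}$ via Lemma~\ref{lemm:normalextension} and identifies $\Lambda \circ \overline{\Psi}$ with $\overline{\Phi}$ by ultraweak density of $\A$ in $\A^{\ast\ast}$ (which is exactly the content of the uniqueness clause you invoke), while the converse is the observation that $\Phi \cocp \overline{\Phi}$. The one point you flag for care — normality of the composite and agreement with $\Phi$ on $\A$ — is handled identically in the paper.
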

\begin{proof}
Assume $\Phi \cocp \Lambda $
and let $\alpha \in \chset{\A}{\M}$ be a channel satisfying
$\Phi = \Lambda \circ \alpha .$
Then Lemma~\ref{lemm:normalextension} implies that
$\alpha$ uniquely extends to
a normal channel 
$\overline{\alpha} \in \nchset{\A^{\ast \ast} }{\M} .$
Thus
$
\overline{\Phi}_1 := \Lambda \circ \overline{\alpha}
\in \nchset{\A^{\ast \ast}}{\Min}
$
is a normal channel such that 
$\overline{\Phi}_1 (A) = \Lambda \circ \alpha (A) = \Phi(A) = \overline{\Phi}(A)$
for all $A \in \A .$
Since $\A$ is ultraweakly dense in $\A^{\ast \ast} ,$
this implies
$\overline{\Phi} =\overline{\Phi}_1 
= \Lambda \circ \overline{\alpha}
\concp \Lambda .
$
The converse implication is immediate from 
$\Phi \cocp \overline{\Phi} .$
\end{proof}

\begin{coro}
\label{coro:nexn}
Let $\M$ be a von Neumann algebra, 
let $\Lambda \in \nchset{\M}{\LHin}$
be a normal channel,
and let $\overline{\Lambda} \in \nchset{\M^{\ast \ast}}{\LHin}$ be the
normal extension of $\Lambda .$
Then we have $\Lambda \eqncp \overline{\Lambda}  .$
\end{coro}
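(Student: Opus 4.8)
The plan is to verify the two defining concatenations $\Lambda \concp \overline{\Lambda}$ and $\overline{\Lambda} \concp \Lambda$ separately, the only delicate point being that the canonical inclusion $\M \hookrightarrow \M^{\ast\ast}$ is a unital $\ast$-homomorphism but is \emph{not} ultraweakly continuous.

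First I would obtain $\overline{\Lambda} \concp \Lambda$ directly from Lemma~\ref{lemm:minex}. Viewing the normal channel $\Lambda \in \nchset{\M}{\LHin}$ merely as a channel $\Phi := \Lambda \in \chset{\M}{\LHin}$ of the \cstar-algebra $\M$, its normal extension in the sense of Lemma~\ref{lemm:normalextension} is exactly $\overline{\Lambda} \in \nchset{\M^{\ast\ast}}{\LHin}$. I would then apply Lemma~\ref{lemm:minex} with this $\Phi$ and with the normal channel in that lemma chosen to be $\Lambda$ itself: since $\Lambda = \Lambda \circ \id_\M$ gives $\Lambda \cocp \Lambda$, the lemma immediately yields $\overline{\Lambda} \concp \Lambda$.

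For the reverse relation I would first record the \emph{non-normal} concatenation $\Lambda \cocp \overline{\Lambda}$: the inclusion $\iota \colon \M \hookrightarrow \M^{\ast\ast}$ is a channel in $\chset{\M}{\M^{\ast\ast}}$, and since $\overline{\Lambda}$ restricts to $\Lambda$ on $\M$ one has $\overline{\Lambda} \circ \iota = \Lambda$. As $\Lambda$ and $\overline{\Lambda}$ are both normal channels sharing the fully quantum input space $\LHin$, Corollary~\ref{coro:nnonn} promotes $\Lambda \cocp \overline{\Lambda}$ to $\Lambda \concp \overline{\Lambda}$. Together with the previous paragraph this gives $\Lambda \eqncp \overline{\Lambda}$.

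The step I expect to carry the real content is the second one: because $\iota$ is not normal, there is no evident normal post-processing $\M \to \M^{\ast\ast}$ realizing $\Lambda \concp \overline{\Lambda}$ by hand, and it is precisely the coincidence of the normal and non-normal concatenation preorders (Corollary~\ref{coro:nnonn}) that closes this gap. Should one wish to bypass Corollary~\ref{coro:nnonn}, one could instead construct an explicit normal channel $\M \to \M^{\ast\ast}$ using the central projection $z \in \M^{\ast\ast}$ carrying the normal part, setting $A \mapsto \rho(A) + \phi(A)(\1 - z)$ with $\rho$ the normal inverse of the canonical normal surjection $\M^{\ast\ast} \to \M$ and $\phi$ any fixed normal state; but routing through Corollary~\ref{coro:nnonn} is considerably cleaner.
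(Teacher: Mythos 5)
Your proof is correct and follows essentially the same route as the paper: Lemma~\ref{lemm:minex} applied to $\Lambda \cocp \Lambda$ gives $\overline{\Lambda} \concp \Lambda$, and the trivial non-normal relation $\Lambda \cocp \overline{\Lambda}$ is upgraded to a normal one by the coincidence of the two concatenation preorders. The only cosmetic difference is that the paper invokes Theorem~\ref{theo:equiv} directly (both channels already have the fully quantum input $\LHin$) where you cite its restatement Corollary~\ref{coro:nnonn}; since that corollary precedes this one in the text, there is no circularity and the two are interchangeable here.
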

\begin{proof}
From $\Lambda \cocp \Lambda $ and Lemma~\ref{lemm:minex}, 
we have $\overline{\Lambda} \concp \Lambda .$
On the other hand, we have $\Lambda \cocp \overline{\Lambda} $ by definition,
and hence Theorem~\ref{theo:equiv} implies $\Lambda \concp \overline{\Lambda} .$
Thus we have $\Lambda \eqncp \overline{\Lambda}  .$
\end{proof}

The universality of the commutant conjugate channel is 
still valid for channels with general outcome \cstar-algebras
as in the following theorem.

\begin{theo}
\label{theo:univmax}
Let $\A $ and $\B$ be \cstar-algebras,
let $\Phi \in \chset{\A}{\LHin}$ and 
$\Psi \in \chset{\B}{\LHin}$ be channels,
and let 
$\overline{\Phi} \in \nchset{\A^{\ast \ast}}{\LHin}$
and
$\overline{\Psi} \in \nchset{\B^{\ast \ast}}{\LHin}$
be the normal extensions of $\Phi$ and $\Psi ,$
respectively.
Then the following conditions are equivalent.
\begin{enumerate}[(i)]
\item \label{enum:3-1}
$\Phi \maxcomp \Psi ;$
\item \label{enum:3-2}
$\overline{\Phi} \maxcomp \overline{\Psi} ;$
\item \label{enum:3-3}
$\overline{\Phi} \bcomp \overline{\Psi} ;$
\item \label{enum:3-4}
$\Psi \cocp \Phi^c ;$
\item \label{enum:3-5}
$\overline{\Psi} \concp \Phi^c ;$
\item \label{enum:3-6}
$\Phi \cocp \Psi^c ;$
\item \label{enum:3-7}
$\overline{\Phi} \concp \Psi^c .$
\end{enumerate}
\end{theo}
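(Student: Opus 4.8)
The plan is to reduce the statement to the already-established von Neumann algebra case, Theorem~\ref{theo:univbmax}, applied to the normal extensions $\overline{\Phi} \in \nchset{\A^{\ast\ast}}{\LHin}$ and $\overline{\Psi} \in \nchset{\B^{\ast\ast}}{\LHin}$ furnished by Lemma~\ref{lemm:normalextension}. The conceptual linchpin is that the commutant conjugate channel is insensitive to the passage from a channel to its normal extension. Indeed, if $(\cK, \pi, V)$ is a minimal Stinespring representation of $\Phi$ and $\overline{\pi} \colon \A^{\ast\ast} \to \LK$ is the normal extension of $\pi$, then $(\cK, \overline{\pi}, V)$ is a Stinespring representation of $\overline{\Phi}$, while normality and ultraweak density give $\overline{\pi}(\A^{\ast\ast}) = \pi(\A)^{\prime\prime}$, hence $\overline{\pi}(\A^{\ast\ast})^\prime = \pi(\A)^\prime$. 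The commutant conjugate of $\overline{\Phi}$ associated with $(\cK, \overline{\pi}, V)$ is therefore literally the map $\pi(\A)^\prime \ni B \mapsto V^\ast B V$, i.e.\ it coincides with $\Phi^c$; by Proposition~\ref{prop:cconj1} I may thus take $\overline{\Phi}^c = \Phi^c$, and likewise $\overline{\Psi}^c = \Psi^c$.

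With this identification in hand, Theorem~\ref{theo:univbmax} applied to the normal channels $\overline{\Phi}$ and $\overline{\Psi}$ immediately yields the equivalence of (\ref{enum:3-2}), (\ref{enum:3-3}), (\ref{enum:3-5}), and (\ref{enum:3-7}), since $\overline{\Psi} \concp \overline{\Phi}^c$ is exactly (\ref{enum:3-5}) and $\overline{\Phi} \concp \overline{\Psi}^c$ is exactly (\ref{enum:3-7}). To fold in the remaining $\cocp$-conditions I would invoke Lemma~\ref{lemm:minex}, which gives $\Psi \cocp \Lambda \iff \overline{\Psi} \concp \Lambda$ for any normal channel $\Lambda$: taking $\Lambda = \Phi^c$ yields (\ref{enum:3-4})$\iff$(\ref{enum:3-5}), and taking $\Lambda = \Psi^c$ yields (\ref{enum:3-6})$\iff$(\ref{enum:3-7}). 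At this stage conditions (\ref{enum:3-2})--(\ref{enum:3-7}) are all mutually equivalent, and it remains only to tie in (\ref{enum:3-1}).

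For the first bridge, (\ref{enum:3-1})$\implies$(\ref{enum:3-4}), I would argue as in the step (\ref{enum:1-2})$\implies$(\ref{enum:1-3}) of Theorem~\ref{theo:univbmax}, but now with no appeal to normality. Given a $\max$-joint channel $\Theta \in \chset{\A \maxtensor \B}{\LHin}$ with a minimal Stinespring representation $(\cK_\Theta, \pi_\Theta, V_\Theta)$, set $\pi_\Phi(A) := \pi_\Theta(A \otimes \1_\B)$ and $\pi_\Psi(B) := \pi_\Theta(\1_\A \otimes B)$; these are representations with commuting ranges, $(\cK_\Theta, \pi_\Phi, V_\Theta)$ is a Stinespring representation of $\Phi$, and $\pi_\Psi(\B) \subseteq \pi_\Phi(\A)^\prime$. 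Writing $\Phi^c(C) = V_\Theta^\ast C V_\Theta$ for the associated commutant conjugate channel (which is $\eqncp$, hence $\cocp$-equivalent, to the canonical one by Proposition~\ref{prop:cconj1}), one computes $\Psi = \Phi^c \circ \pi_\Psi \cocp \Phi^c$, giving (\ref{enum:3-4}). For the closing bridge, (\ref{enum:3-2})$\implies$(\ref{enum:3-1}), the idea is to pull a $\max$-joint channel $\overline{\Theta}$ of $\overline{\Phi}, \overline{\Psi}$ back along the canonical $\ast$-homomorphism $j \colon \A \maxtensor \B \to \A^{\ast\ast} \maxtensor \B^{\ast\ast}$, $j(A \otimes B) = A \otimes B$, produced by the universal property of the maximal tensor product (equivalently by Proposition~\ref{prop:cpconti}.1 applied to the commuting channels $A \mapsto A \otimes \1_{\B^{\ast\ast}}$ and $B \mapsto \1_{\A^{\ast\ast}} \otimes B$ into $\A^{\ast\ast} \maxtensor \B^{\ast\ast}$); then $\overline{\Theta} \circ j$ is a channel on $\A \maxtensor \B$ with marginals $\Phi$ and $\Psi$. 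This closes the cycle (\ref{enum:3-1})$\implies$(\ref{enum:3-4})$\iff$(\ref{enum:3-5})$\iff$(\ref{enum:3-2})$\implies$(\ref{enum:3-1}), and the rest are already linked.

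The hard part, and the place where the choice of the \emph{maximal} tensor product is indispensable, is this last pullback: functoriality $\A \maxtensor \B \to \A^{\ast\ast} \maxtensor \B^{\ast\ast}$ is precisely what the universal property of $\maxtensor$ supplies, and it is what allows a joint channel at the enveloping-algebra level to descend to the $C^\ast$-level. The identification $\overline{\Phi}^c = \Phi^c$ is the other delicate point, resting on $\overline{\pi}(\A^{\ast\ast}) = \pi(\A)^{\prime\prime}$. Once these two facts are secured, every remaining implication is a routine application of Theorem~\ref{theo:univbmax} and Lemma~\ref{lemm:minex}.
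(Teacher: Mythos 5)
Your proof is correct, and it shares the paper's overall skeleton: identify $\overline{\Phi}^c$ with $\Phi^c$ via $\overline{\pi}(\A^{\ast\ast})=\pi(\A)^{\prime\prime}$, apply Theorem~\ref{theo:univbmax} to the normal extensions to get (\ref{enum:3-2})$\iff$(\ref{enum:3-3})$\iff$(\ref{enum:3-5})$\iff$(\ref{enum:3-7}), and use Lemma~\ref{lemm:minex} to convert between the $\cocp$- and $\concp$-forms. The one place you genuinely diverge is in how condition~(\ref{enum:3-1}) is tied into the cycle. The paper's hard step is (\ref{enum:3-1})$\implies$(\ref{enum:3-2}): it takes a $\max$-joint channel $\Theta\in\chset{\A\maxtensor\B}{\LHin}$, extends the two marginal representations $\pi_\Phi,\pi_\Psi$ to normal representations of $\A^{\ast\ast}$ and $\B^{\ast\ast}$, checks that the double commutants $\pi_\Phi(\A)^{\prime\prime}$ and $\pi_\Psi(\B)^{\prime\prime}$ still commute, and assembles a joint channel $\widetilde{\Theta}$ on $\A^{\ast\ast}\maxtensor\B^{\ast\ast}$ via the universal property. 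You instead prove (\ref{enum:3-1})$\implies$(\ref{enum:3-4}) directly at the \cstar-level: from the same marginal representations you read off $\Psi=\Phi^c\circ\pi_\Psi\cocp\Phi^c$, with no need for normality of $\pi_\Psi$ (which is exactly why Lemma~\ref{lemm:normalrep} can be dispensed with here --- the target is only $\cocp$, not $\concp$), and then close the loop with the easy pullback (\ref{enum:3-2})$\implies$(\ref{enum:3-1}), which coincides with the paper's use of $\Phi\cocp\overline{\Phi}$, $\Psi\cocp\overline{\Psi}$ and Lemma~\ref{lemm:ccconsis}. Your routing buys a slightly more economical argument --- no construction of $\widetilde{\Theta}$ and no appeal to commutation of the generated von Neumann algebras --- at the cost of rerunning the conjugate-channel computation once more rather than quoting Theorem~\ref{theo:univbmax} for every implication; both are sound, and the auxiliary facts you rely on (Proposition~\ref{prop:cconj1} to pass between Stinespring representations, Proposition~\ref{prop:cpconti} for the homomorphism $j$) are exactly the ones the paper itself uses.
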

\begin{proof}
(\ref{enum:3-1})$\implies$(\ref{enum:3-2}).
Assume~(\ref{enum:3-1}) and let 
$\Theta \in \chset{\A \maxtensor \B}{\LHin}$
be a $\max$-joint channel of $\Phi$ and $\Gamma$
with a minimal Stinespring representation
$(\cK_\Theta , \pi_\Theta , V_\Theta ) .$
Define representations
\begin{gather*}
	\pi_\Phi
	\colon
	\A \ni A
	\mapsto 
	\pi_\Theta 
	(A \otimes \1_\B)
	\in
	\cL (\cK_\Theta),
	\\
	\pi_\Psi
	\colon
	\B \ni B
	\mapsto 
	\pi_\Theta 
	(\1_\A \otimes B)
	\in
	\cL (\cK_\Theta),
\end{gather*}
whose ranges commute on $\cL (\cK_\Theta) .$
Then from the universality of the enveloping von Neumann algebras
$\A^{\ast \ast}$ and $\B^{\ast \ast} ,$
$\pi_\Phi$ and $\pi_\Psi$
extend to normal representations
$\overline{\pi}_{\Phi} \colon \A^{\ast \ast} \to \cL (\cK_\Theta)$
and
$\overline{\pi}_{\Psi} \colon \B^{\ast \ast} \to \cL (\cK_\Theta) ,$
respectively.
From 
$\overline{\Phi} (A) 
= V_\Theta^\ast 
\overline{\pi}_{\Phi}
(A)
V_\Theta
$
($A\in \A$)
and
$\overline{\Psi} (B) 
= V_\Theta^\ast 
\overline{\pi}_{\Psi}
(B)
V_\Theta
$
($B\in \B$),
$(\cK_\Theta , \overline{\pi}_\Phi , V_\Theta)$
and
$(\cK_\Theta , \overline{\pi}_\Psi , V_\Theta)$
are Stinespring representations of $\overline{\Phi}$
and $\overline{\Psi} ,$
respectively.
Since the ranges 
$\overline{\pi}_{\Phi} (\A^{\ast \ast}) = \pi_{\Phi} (\A)^{\prime \prime}$
and
$\overline{\pi}_{\Psi} (\B^{\ast \ast} ) = \pi_{\Psi} (\B)^{\prime \prime}$
also commute,
there exists a representation 
$
\tilde{\pi} \colon 
\A^{\ast \ast } \maxtensor \B^{\ast \ast}
\to
\cL (\cK_\Theta)
$
such that
$\tilde{\pi} (A^{\prime \prime} \otimes B^{\prime \prime} ) = 
\overline{\pi}_{\Phi} (A^{\prime \prime})
\overline{\pi}_{\Psi} (B^{\prime \prime})
$
($A^{\prime \prime}\in \A^{\ast \ast} , 
B^{\prime \prime} \in \B^{\ast \ast}$).
Now we define a channel
$\widetilde{\Theta} \in 
\chset{\A^{\ast \ast } \maxtensor \B^{\ast \ast }}{ \LHin}$
by
$\widetilde{\Theta} (X) = V_\Theta^\ast \tilde{\pi} (X) V_\Theta $
$(X \in \A^{\ast \ast} \maxtensor \B^{\ast \ast}) .$
Then 
for each 
$A^{\prime \prime}\in \A^{\ast \ast} $
and each 
$B^{\prime \prime} \in \B^{\ast \ast}$
we have
\begin{gather*}
	\widetilde{\Theta}
	(A^{\prime \prime} \otimes \1_{\B^{\ast \ast}})
	=
	V_{\Theta}^\ast
	\overline{\pi}_{\Phi} (A^{\prime \prime})
	V_\Theta
	=
	\overline{\Phi} (A^{\prime \prime}) ,
	\\
	\widetilde{\Theta}
	(\1_{\A^{\ast \ast} }\otimes B^{\prime \prime})
	=
	V_{\Theta}^\ast
	\overline{\pi}_{\Psi} (B^{\prime \prime})
	V_\Theta
	=
	\overline{\Psi} (B^{\prime \prime}) .
\end{gather*}
Thus $\widetilde{\Theta}$ is a $\max$-joint channel
of $\overline{\Phi}$ and $\overline{\Psi} ,$
which proves 
$\overline{\Phi} 
\maxcomp
\overline{\Psi} .$

(\ref{enum:3-2})$\implies$(\ref{enum:3-1})
follows from $\Phi \cocp \overline{\Phi} ,$
$\Psi \cocp \overline{\Psi},$
and Lemma~\ref{lemm:ccconsis}.

The equivalence
(\ref{enum:3-2})$\iff$(\ref{enum:3-3})$\iff$(\ref{enum:3-5})$\iff$(\ref{enum:3-7})
follows from Theorem~\ref{theo:univbmax}
by noting that the commutant conjugate channels 
$\overline{\Phi}^c$ and $\overline{\Psi}^c$
coincide with 
$\Phi^c$ and $\Psi^c ,$
respectively.

The equivalences 
(\ref{enum:3-4})$\iff$(\ref{enum:3-5})
and
(\ref{enum:3-6})$\iff$(\ref{enum:3-7})
are immediate from Lemma~\ref{lemm:minex}.
\end{proof}

\section{Compatibility of POVMs} \label{sec:povm}
In this section, we consider compatibility of POVM and channel
by identifying each POVM with the corresponding QC channel.
Throughout this section, we again consider the fully quantum input space
$\LHin .$

\subsection{POVM and QC channel}
A POVM on $\Hin$ is a triple $(\Omega , \Sigma , \oM)$
such that $(\Omega , \Sigma )$ is a measurable space
and
$\oM \colon \Sigma \to \LHin$ is a mapping satisfying
\begin{enumerate}[(i)]
\item
$\oM(E) \geq 0 $
for all $E \in \Sigma ;$
\item
$\oM(\Omega) = \1_{\Hin} ;$
\item
for any disjoint sequence $\{ E_n \} \subseteq \Sigma ,$
$\oM (\cup_n E_n) = \sum_n \oM (E_n)$
in the weak operator topology.
\end{enumerate}
If $\oM (E) $ is a projection for each $E \in \Sigma ,$
$\oM$ is called a projection valued measure (PVM).

For each density operator $\rho \in \SHin ,$
we define the outcome probability measure $P^{\oM}_\rho$
on $(\Omega , \Sigma)$ by
$P^{\oM}_\rho (E) := \tr [ \rho \oM (E)  ]$
$(E \in \Sigma) .$
Operationally, a POVM describes the statistics of the classical outcome 
of a general quantum measurement.

Let $(\Omega , \Sigma , \oM)$ be a POVM on $\Hin .$
A triple $(\cK , \oP , V)$ is called a
Naimark dilation~\cite{Naimark1940,davies1976quantum} 
of $\oM$ if
$\cK$ is a Hilbert space,
$(\Omega , \Sigma , \oP)$ is a PVM on $\cK ,$
and $V \colon \Hin \to \cK$ is a linear isometry
such that
$\oM(E) = V^\ast \oP(E) V$
for all $E \in \Sigma .$
A Naimark dilation $(\cK ,\oP , V)$ is called minimal
if the closed linear span of $\oP(\Sigma) V \Hin  $
coincides with $\cK .$
Naimark's dilation theorem~\cite{Naimark1940,davies1976quantum} 
states that 
every POVM has a minimal Naimark dilation unique up to unitary equivalence.

For a measurable space $(\Omega , \Sigma) ,$
we denote by $B(\Omega , \Sigma)$
the set of $\Sigma$-measurable, bounded, complex-valued functions on $\Omega .$
$B(\Omega , \Sigma)$ is a commutative \cstar-algebra
under the supremum norm
$\norm{f} := \sup_{\omega \in \Omega} |f(\omega)| .$

\begin{defi}[QC channel]
\label{defi:qcch}
Let $(\Omega , \Sigma , \oM)$ be a POVM on $\Hin  ,$
let $\gamma^\oM \in \chset{B(\Omega , \Sigma)}{\LHin}$
be the channel defined by
\[
	\gamma^\oM
	(f)
	:=
	\int_\Omega 
	f(\omega)
	d\oM (\omega)
	\quad
	(f \in B(\Omega , \Sigma)),
\]
called the pre-QC channel of $\oM,$
and let $(\cK , \pi , V)$ be a minimal Stinespring representation of $\gamma^\oM .$
We define the \emph{QC channel}
of $\oM$ by the normal channel
$\Gamma^\oM \in \nchset{\pi (B(\Omega , \Sigma))^{\prime \prime}}{\LHin}$
given by
$\Gamma^\oM (A) := V^\ast A V $
$(A \in \pi (B(\Omega , \Sigma))^{\prime \prime} ) .$
\end{defi}

\begin{rem}
\label{rem:qcch}
From Corollary~\ref{coro:cc}, the QC channel $\Gamma^\oM$
in Definition~\ref{defi:qcch} is normally concatenation equivalent 
to the normal extension 
$\overline{\gamma^\oM} \in \nchset{B(\Omega , \Sigma)^{\ast \ast}}{\LHin} $
of $\gamma^\oM .$
Moreover, if we define $\oP (E) := \pi (\chi_E)$
$(E \in \Sigma) ,$
where $\chi_E$ is the indicator function of $E ,$
$(\cK , \oP , V)$ is a minimal Naimark dilation of $\oM .$
(Indeed, by this construction of $\oP$, Naimark's dilation theorem 
immediately follows from Stinespring's dilation theorem~\cite{1955stinespring}).
\end{rem}

\begin{lemm}
\label{lemm:qcconj}
Let $(\Omega , \Sigma , \oM)$ be a POVM on $\Hin .$
Then $\Gamma^{\oM} \concp (\Gamma^{\oM})^c . $
\end{lemm}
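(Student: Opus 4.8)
The plan is to reduce the assertion to the self-$\max$-compatibility of the QC channel and then invoke Theorem~\ref{theo:univbmax}. Write $\M := \pi(B(\Omega , \Sigma))^{\prime \prime}$ for the outcome von Neumann algebra of $\Gamma^{\oM} \in \nchset{\M}{\LHin} ,$ so that $\Gamma^{\oM}(A) = V^\ast A V .$ The first observation is that $\M$ is commutative. Since $B(\Omega , \Sigma)$ is a commutative \cstar-algebra and $\pi$ is a $\ast$-homomorphism, the self-adjoint set $\pi(B(\Omega , \Sigma))$ is contained in its own commutant; taking double commutants (which reverses inclusions) yields $\M = \pi(B(\Omega , \Sigma))^{\prime \prime} \subseteq \pi(B(\Omega , \Sigma))^{\prime} = \M^\prime ,$ i.e.\ $\M \subseteq \M^\prime .$

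Using this commutativity I would build a $\max$-joint channel of $\Gamma^{\oM}$ with itself. Apply Proposition~\ref{prop:cpconti}.1 to $\Phi = \Psi = \id_\M \in \ch{\M} :$ the ranges $\id_\M(\M) = \M$ commute precisely because $\M$ is abelian, so $\id_\M \atensor \id_\M$ extends to a channel $m := \id_\M \maxtensor \id_\M \in \chset{\M \maxtensor \M}{\M}$ satisfying $m(A \otimes B) = AB .$ Setting $\Theta := \Gamma^{\oM} \circ m \in \chset{\M \maxtensor \M}{\LHin} ,$ the marginals compute to $\Theta(A \otimes \1_\M) = \Gamma^{\oM}(A \1_\M) = \Gamma^{\oM}(A)$ and $\Theta(\1_\M \otimes B) = \Gamma^{\oM}(B) ,$ so $\Theta$ is a $\max$-joint channel of $\Gamma^{\oM}$ and $\Gamma^{\oM} .$ Hence $\Gamma^{\oM} \maxcomp \Gamma^{\oM} .$

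Finally I would apply Theorem~\ref{theo:univbmax} with $\Lambda = \Gamma = \Gamma^{\oM} .$ The equivalence of condition~(\ref{enum:1-2}) ($\Lambda \maxcomp \Gamma$) and condition~(\ref{enum:1-5}) ($\Lambda \concp \Gamma^c$) converts $\Gamma^{\oM} \maxcomp \Gamma^{\oM}$ into $\Gamma^{\oM} \concp (\Gamma^{\oM})^c ,$ which is exactly the claim. Note that the particular choice of commutant conjugate channel is immaterial by Proposition~\ref{prop:cconj1}, so the notation $(\Gamma^{\oM})^c$ is unambiguous here.

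The only genuine content is the commutativity-driven existence of the copying (multiplication) channel $m ;$ everything else is bookkeeping through the already-established theorem. I expect the sole subtlety to be the invocation of Proposition~\ref{prop:cpconti}.1, whose commuting-ranges hypothesis is the exact place where abelianness of the outcome algebra enters --- mirroring the remark in the introduction that \cstar-compatibility admits a universal copying operation for commutative algebras. (One could equivalently describe $m$ as the $\ast$-homomorphism dual to the diagonal, but routing it through Proposition~\ref{prop:cpconti}.1 keeps the argument self-contained.)
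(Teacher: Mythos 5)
Your proposal is correct and follows essentially the same route as the paper: both exploit the commutativity of the outcome algebra of $\Gamma^{\oM}$ to obtain the multiplication channel $\M \maxtensor \M \to \M ,$ compose with $\Gamma^{\oM}$ to get a $\max$-joint channel of $\Gamma^{\oM}$ with itself, and then invoke Theorem~\ref{theo:univbmax}. The only difference is cosmetic --- you derive the multiplication map from Proposition~\ref{prop:cpconti}.1 and spell out why the double commutant is abelian, whereas the paper cites the universal property of the maximal tensor product directly.
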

\begin{proof}
Let $\A_{\oM}$ be the outcome space of $\Gamma^\oM .$
Since $\A_\oM $ is a commutative von Neumann algebra,
there exists a representation 
$\tilde{\pi} \colon \A_\oM \maxtensor \A_\oM \to \A_\oM$
such that
$\tilde{\pi} (A_1 \otimes A_2) = A_1 A_2 $
$(A_1 , A_2 \in \A_\oM) .$
Then $\Theta \in \chset{\A_\oM \maxtensor \A_\oM}{\LHin}$
defined by
$\Theta := \Gamma^{\oM} \circ \tilde{\pi}$
is a $\max$-joint channel of $\Gamma^{\oM} $
and $\Gamma^{\oM} .$
Therefore $\Gamma^{\oM} \concp (\Gamma^\oM)^c$ follows
from Theorem~\ref{theo:univbmax}.
\end{proof}

For a localizable~\cite{segal1951} measure space
$(\Omega, \Sigma , \mu) ,$
we denote the $L^\infty$ space of
$(\Omega, \Sigma , \mu) $ by $L^\infty (\mu)  .$
The notion of $\mu$-almost everywhere ($\mu$-a.e.) equality 
defines an equivalence relation for $\Sigma$-measurable 
functions and the equivalence class to which a measurable function $f$ belongs 
is denoted by $[f]_\mu .$
$L^\infty (\mu)$ is a commutative von Neumann algebra acting on the Hilbert space
$L^2 (\mu).$

The following proposition assures that 
the definition of the QC channel in Ref.~\onlinecite{kuramochi2017minimal}
for separable input Hilbert space $\Hin$ coincides with the present one up to normal isomorphism.

\begin{prop}
\label{prop:qceq}
Let $(\Omega , \Sigma , \oM)$ be a POVM 
on $\Hin .$
Suppose that $\Hin$ is separable and take a faithful density operator
$\rho_0 \in \SHin .$
Define a normal channel 
$\Gamma^\oM_0 \in \nchset{ L^\infty (\mu) }{\LHin}$
by
\begin{gather}
	\mu :=
	P^{\oM}_{\rho_0} ,
	\notag
	\\
	\Gamma_0^\oM ([f]_\mu)
	:=
	\int_{\Omega} f (\omega)
	d\oM (\omega) .
	\notag
\end{gather}
Then $\Gamma^\oM$ and $\Gamma^{\oM}_0$
are normally isomorphic.
\end{prop}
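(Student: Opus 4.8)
The plan is to exhibit an explicit normal $\ast$-isomorphism from the outcome algebra $L^\infty(\mu)$ of $\Gamma^{\oM}_0$ onto the outcome algebra $\pi(B(\Omega,\Sigma))^{\prime\prime}$ of $\Gamma^{\oM}$ that intertwines the two channels, thereby verifying Definition~\ref{defi:concatenation}.2.(iii) directly. The separability assumption enters only to guarantee that a faithful $\rho_0$ exists (a trace-class positive operator can have trivial kernel only on a separable space), and its faithfulness supplies the key dictionary between null sets: since $\oM(E)\ge 0$ and $\mu(E)=\tr[\rho_0\oM(E)]$, faithfulness of $\rho_0$ gives $\mu(E)=0\iff\oM(E)=0$ for $E\in\Sigma$. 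Because $\oM(E)=\gamma^{\oM}(\chi_E)=V^\ast\pi(\chi_E)V=(\pi(\chi_E)V)^\ast(\pi(\chi_E)V)$, the commutativity of $\pi(B(\Omega,\Sigma))$ and the minimality of $(\cK,\pi,V)$ upgrade $\oM(E)=0$ to $\oP(E)=\pi(\chi_E)=0$, so in fact $\mu(E)=0\iff\oP(E)=0$.

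First I would set $\tilde\pi([f]_\mu):=\pi(f)$ for $f\in B(\Omega,\Sigma)$ and check well-definedness and injectivity. If $f=g$ holds $\mu$-a.e., then $\int_\Omega|f-g|^2\,d\oM=0$ by the dictionary above, hence $V^\ast\pi(|f-g|^2)V=(\pi(f-g)V)^\ast(\pi(f-g)V)=0$, i.e.\ $\pi(f-g)V=0$; commutativity of $\pi(B(\Omega,\Sigma))$ together with minimality then gives $\pi(f-g)=0$ on $\cK=\cspan\,\pi(B(\Omega,\Sigma))V\Hin$, so $\pi(f)$ depends only on $[f]_\mu$. Running the computation in reverse yields injectivity. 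Since every element of $L^\infty(\mu)$ has a bounded measurable representative, $\tilde\pi$ is a well-defined unital injective $\ast$-homomorphism from $L^\infty(\mu)$ onto $\pi(B(\Omega,\Sigma))$, and the intertwining relation $\Gamma^{\oM}\circ\tilde\pi([f]_\mu)=V^\ast\pi(f)V=\gamma^{\oM}(f)=\Gamma^{\oM}_0([f]_\mu)$ is then immediate.

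The main obstacle is to show that $\tilde\pi$ is normal; once this is in hand, a normal injective $\ast$-homomorphism of von Neumann algebras is an isomorphism onto its ultraweakly closed image, and since that image $\pi(B(\Omega,\Sigma))$ generates $\pi(B(\Omega,\Sigma))^{\prime\prime}$, it must coincide with $\pi(B(\Omega,\Sigma))^{\prime\prime}$, delivering the required normal isomorphism. To prove normality I would check that $\tilde\pi$ admits a predual, i.e.\ that $\omega\circ\tilde\pi\in L^\infty(\mu)_\ast=L^1(\mu)$ for every normal functional $\omega$ on $\pi(B(\Omega,\Sigma))^{\prime\prime}$. For a vector state $\omega_\xi(A)=\langle\xi, A\xi\rangle$ one computes $\omega_\xi\circ\tilde\pi([f]_\mu)=\int_\Omega f\,d\nu_\xi$ with $\nu_\xi(E)=\langle\xi,\oP(E)\xi\rangle$ a finite positive measure; the dictionary $\mu(E)=0\Rightarrow\oP(E)=0$ forces $\nu_\xi\ll\mu$, so by the Radon--Nikodym theorem (applicable since $\mu$ is a probability measure) $\omega_\xi\circ\tilde\pi$ is integration against an $L^1(\mu)$ density and hence normal.

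Finally I would extend this from vector states to all normal functionals. Every normal functional on a von Neumann algebra has the form $\omega(A)=\sum_n\langle\xi_n,A\eta_n\rangle$ with $\sum_n\|\xi_n\|^2,\sum_n\|\eta_n\|^2<\infty$; by polarization each term $\langle\xi_n,\tilde\pi(\cdot)\eta_n\rangle$ is a combination of the normal functionals already treated, and the contractivity of the $\ast$-homomorphism $\tilde\pi$ gives the bound $\|\langle\xi_n,\tilde\pi(\cdot)\eta_n\rangle\|\le\|\xi_n\|\,\|\eta_n\|$, so the series converges in norm inside the closed subspace $L^1(\mu)\subseteq L^\infty(\mu)^\ast$. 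Hence $\omega\circ\tilde\pi\in L^1(\mu)$ for all normal $\omega$, which is exactly ultraweak continuity, i.e.\ normality, of $\tilde\pi$. Combining this with the injectivity and the intertwining identity established above completes the proof that $\Gamma^{\oM}$ and $\Gamma^{\oM}_0$ are normally isomorphic.
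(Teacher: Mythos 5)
Your proposal is correct, and its skeleton is the same as the paper's: both define the map $[f]_\mu \mapsto \pi(f)$ on a minimal Stinespring representation of $\gamma^{\oM}$, both establish well-definedness and injectivity from the same null-set dictionary $\mu(E)=0 \iff \oM(E)=0 \iff \oP(E)=0$ (faithfulness of $\rho_0$, the projection identity $V^\ast \oP(E) V = (\oP(E)V)^\ast(\oP(E)V)$, commutativity, and minimality), and both conclude by observing that the image is a von Neumann algebra equal to $\pi(B(\Omega,\Sigma))^{\prime\prime}$ and that the intertwining relation $\Gamma^{\oM}\circ\tilde\pi = \Gamma^{\oM}_0$ holds. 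The one step where you genuinely diverge is normality. The paper verifies complete additivity on orthogonal families of projections $([\chi_{E_i}]_\mu)_{i\in I}$, using the $\sigma$-finiteness of $L^\infty(\mu)$ to reduce to countable $I$ and then invoking the countable additivity of the PVM $\oP$; this is short but leans on the specific structure of $L^\infty(\mu)$ and on the criterion that complete additivity of a $\ast$-homomorphism implies normality. You instead exhibit a predual map: the measure $\nu_\xi(E)=\braket{\xi|\oP(E)\xi}$ is absolutely continuous with respect to the probability measure $\mu$, so Radon--Nikodym puts $\omega_\xi\circ\tilde\pi$ in $L^1(\mu)$, and the norm-closedness of $L^1(\mu)$ in $L^\infty(\mu)^\ast$ together with the standard series representation of normal functionals handles the general case. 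Both arguments are sound; yours is slightly longer but makes the ultraweak continuity entirely explicit and transfers verbatim to any situation where one only knows a null-set dictionary between a finite measure and a PVM, while the paper's is the more economical route given that the domain is already known to be $\sigma$-finite commutative.
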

\begin{proof}
Let $(\cK , \pi , V)$ be a minimal Stinespring representation 
of $\gamma^\oM$
and let $\widehat{\oP} (E) := \pi (\chi_E)$
$(E \in \Sigma) .$

We first show that the mapping
\[
	\pi_1
	\colon
	L^{\infty} (\mu)
	\ni
	[f]_\mu
	\mapsto
	\pi (f) =
	\int_{\Omega}
	f (\omega)
	d\widehat{\oP} (\omega)
	\in \LK
\]
is a well-defined faithful representation.
For this it is sufficient to prove that 
$\mu (E) = 0$ if and only if
$\widehat{\oP} (E) = 0$
for
$E \in \Sigma .$
This can be shown as follows:
\begin{align*}
	\mu(E) =0
	&\iff
	\oM(E) = 0
	\\
	&\iff
	V^\ast \widehat{\oP}(E) V = 0
	\\
	&\iff
	\widehat{\oP}(E) V = 0
	\\
	&\iff
	\widehat{\oP}(E) 
	\widehat{\oP}(\Sigma)
	V \Hin 
	= \{ 0 \}
	\\
	&\iff
	\widehat{\oP}(E) 
	=0.
\end{align*}
Here, the last equivalence follows from the minimality of the Stinespring representation
$(\cK , \pi , V) .$


We next show that $\pi_1$ is normal.
For this, it is sufficient to show that 
for each family of mutually orthogonal projections
$([\chi_{E_i}]_\mu)_{i \in I}$ in $L^\infty (\mu) $
$(E_i \in \Sigma) ,$
\begin{equation}
\pi_1 \Bigl( \sum_{i \in I}   [\chi_{E_i}]_\mu  \Bigr)
=
\sum_{i \in I} 
\pi_1 (   [\chi_{E_i}]_\mu  )
=
\sum_{i \in I} 
\widehat{\oP} ( E_i ) .
\label{eq:normality}
\end{equation}
From the $\sigma$-finiteness of $L^\infty (\mu) ,$
we may assume that $I$ is countable, 
and therefore that the family
$\{ E_i \}_{i \in I}$ is mutually disjoint.
Then equation~\eqref{eq:normality} is immediate from the
countable additivity of $\widehat{\oP} .$

From the definitions of $\Gamma^\oM_0$ and $\pi_1 ,$
we have $\Gamma^\oM_0 = \Gamma^\oM \circ \pi_1 .$
By noting that $\pi_1 (L^\infty (\mu))$ is a von Neumann algebra,
we have
$\pi (B (\Omega , \Sigma )) =  \pi_1 (L^\infty (\mu)) =  
\pi (B (\Omega , \Sigma ))^{\prime \prime} .
$
Therefore $\Gamma^\oM$ and $\Gamma^\oM_0$ are normally isomorphic.
\end{proof}

The following lemma is a generalization of 
Proposition~3 in Ref.~\onlinecite{kuramochi2017minimal}.

\begin{lemm}
\label{lemm:abelout}
Let $\A$ be a commutative von Neumann algebra
and let $\Lambda \in \nchset{\A}{\LHin}$
be a normal channel.
Then there exists a POVM $\oM$ on $\Hin$
such that $\Lambda \eqncp \Gamma^\oM .$
\end{lemm}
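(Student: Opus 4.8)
The plan is to realize $\A$ concretely as an $L^\infty$-algebra, read a POVM off $\Lambda$, and then play two concatenation inequalities against each other. First I would invoke the structure theorem for commutative von Neumann algebras (Ref.~\onlinecite{takesakivol1}) to fix a localizable measure space $(\Omega , \Sigma , \mu)$ and a $\ast$-isomorphism $\kappa \colon L^\infty (\mu) \to \A$. Since $\ast$-isomorphisms of von Neumann algebras are automatically normal, the channel $\Lambda^\prime := \Lambda \circ \kappa \in \nchset{L^\infty (\mu)}{\LHin}$ satisfies $\Lambda \cong \Lambda^\prime$, hence $\Lambda \eqncp \Lambda^\prime$ by Definition~\ref{defi:concatenation}. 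It therefore suffices to construct a POVM $\oM$ with $\Lambda^\prime \eqncp \Gamma^\oM .$

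Next I would define $\oM \colon \Sigma \to \LHin$ by $\oM (E) := \Lambda^\prime ([\chi_E]_\mu) .$ Positivity and the normalization $\oM (\Omega) = \Lambda^\prime (\1) = \1_{\Hin}$ are immediate, while weak-operator countable additivity follows from the normality of $\Lambda^\prime$ together with $\sum_n [\chi_{E_n}]_\mu = [\chi_{\cup_n E_n}]_\mu$ for disjoint $\{ E_n \} .$ Writing $q \colon B(\Omega , \Sigma) \to L^\infty (\mu) ,$ $q(f) := [f]_\mu ,$ for the canonical quotient channel, a comparison on simple functions followed by a norm-density argument gives $\gamma^\oM = \Lambda^\prime \circ q$ for the pre-QC channel of Definition~\ref{defi:qcch}. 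Because $q$ is a channel, this exhibits $\gamma^\oM \cocp \Lambda^\prime ,$ so Lemma~\ref{lemm:minex} yields $\overline{\gamma^\oM} \concp \Lambda^\prime ;$ combined with $\Gamma^\oM \eqncp \overline{\gamma^\oM}$ (Remark~\ref{rem:qcch}, or Corollary~\ref{coro:cc}) this gives $\Gamma^\oM \concp \Lambda^\prime ,$ one of the two required inequalities.

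The reverse inequality $\Lambda^\prime \concp \Gamma^\oM$ is the main obstacle, since the quotient $q$ has no section at the level of $B(\Omega , \Sigma) .$ I would circumvent this by passing to the enveloping algebra: let $\overline{q} \colon B(\Omega , \Sigma)^{\ast \ast} \to L^\infty (\mu)$ be the normal extension of $q$ (Lemma~\ref{lemm:normalextension}), a surjective normal $\ast$-homomorphism, and note that uniqueness of the normal extension forces $\overline{\gamma^\oM} = \Lambda^\prime \circ \overline{q} .$ As $\ker \overline{q}$ is an ultraweakly closed ideal it equals $(\1 - z) B(\Omega , \Sigma)^{\ast \ast}$ for a central projection $z ,$ and $\overline{q}$ restricts to a normal $\ast$-isomorphism of $z B(\Omega , \Sigma)^{\ast \ast}$ onto $L^\infty (\mu) .$ Fixing a normal state $\omega \in \Ss (L^\infty (\mu)) ,$ the map $\iota (a) := (\overline{q}|_{z B(\Omega , \Sigma)^{\ast \ast}})^{-1}(a) + \omega (a)(\1 - z)$ is a normal channel with $\overline{q} \circ \iota = \id_{L^\infty (\mu)} ,$ whence $\Lambda^\prime = \overline{\gamma^\oM} \circ \iota \concp \overline{\gamma^\oM} \eqncp \Gamma^\oM .$ Combining the two inequalities gives $\Lambda^\prime \eqncp \Gamma^\oM ,$ and therefore $\Lambda \eqncp \Lambda^\prime \eqncp \Gamma^\oM .$ (Alternatively, one may verify only $\Lambda^\prime \cocp \Gamma^\oM$ and upgrade it to $\concp$ via Theorem~\ref{theo:equiv}, but the section $\iota$ delivers the normal concatenation directly.)
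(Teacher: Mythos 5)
Your proof is correct, but it reaches the harder half of the equivalence by a genuinely different route than the paper. The paper first replaces $\Lambda$ by a concatenation-equivalent \emph{faithful} channel (invoking Lemma~2 of Ref.~\onlinecite{kuramochi2017minimal}); faithfulness makes the minimal Stinespring representation $\pi$ of $\Lambda$ a normal isomorphism onto $\pi (L^\infty (\mu)) ,$ so that $(\cK , \pi \circ q , V)$ is a minimal Stinespring representation of $\gamma^\oM ,$ the QC channel lives exactly on $\pi (L^\infty (\mu)) ,$ and $\Lambda = \Gamma^\oM \circ \pi$ is \emph{normally isomorphic} to $\Gamma^\oM$ --- both concatenation inequalities drop out simultaneously. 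You instead keep the possibly non-faithful $\Lambda$ and prove the two inequalities separately: the easy direction $\Gamma^\oM \concp \Lambda$ from $\gamma^\oM = \Lambda \circ q$ together with Lemma~\ref{lemm:minex} and Remark~\ref{rem:qcch}, and the nontrivial direction by splitting the normal surjection $\overline{q} \colon B(\Omega , \Sigma)^{\ast \ast} \to L^\infty (\mu)$ along the central projection complementary to its kernel and padding with a normal state --- the same device as the channel $\Theta^W$ in the proof of Proposition~\ref{prop:cconj1}. All the individual steps check out (uniqueness of the normal extension gives $\overline{\gamma^\oM} = \Lambda \circ \overline{q} ,$ and $\iota$ is indeed a normal unital CP section of $\overline{q}$). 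What your approach buys is self-containedness: you avoid outsourcing the faithful reduction to an external lemma, at the price of a longer second half and of obtaining only $\eqncp$ rather than the normal isomorphism $\Lambda \cong \Gamma^\oM$ that the paper's proof yields after its reduction (which is all the lemma asserts anyway). The parenthetical alternative of first proving only $\Lambda \cocp \Gamma^\oM$ and upgrading via Theorem~\ref{theo:equiv} would not actually be easier, since exhibiting even a non-normal CP section of $q$ is no simpler than your normal section of $\overline{q} ;$ but since you present it only as an aside, this does not affect the argument.
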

\begin{proof}
By using the construction given in Ref.~\onlinecite{kuramochi2017minimal} (Lemma~2),
we may assume that $\Lambda$ is faithful.
Since $\A$ is commutative,
there exists a localizable measure space
$(\Omega , \Sigma , \mu)$
such that $\A$ is normally isomorphic to $L^\infty (\mu) ,$
from which we may regard $\Lambda$ as a normal channel in
$\nchset{L^\infty (\mu)}{\LHin} .$
We define a POVM $(\Omega , \Sigma , \oM)$
by
$\oM(E) := \Lambda ([\chi_E]_\mu) $
$(E \in \Sigma) .$
Let $(\cK , \pi ,V)$ be a minimal Stinespring representation of $\Lambda  .$
From the faithfulness of $\Lambda ,$
$\pi$ is a normal isomorphism from $L^\infty (\mu)$
onto $\pi (L^\infty (\mu)) .$
If we define $\pi_0 \colon B (\Omega , \Sigma ) \to \LK$ by
$\pi_0 (f) :=  \pi (  [f]_\mu)$
$(f \in B (\Omega , \Sigma )) ,$
then $(\cK , \pi_0 , V)$ is a minimal Stinespring representation of
the pre-QC channel $\gamma^\oM  .$
Thus the QC channel $\Gamma^\oM \in \nchset{ \pi(L^\infty (\mu))  }{\LHin}$
is given by
$\Gamma^{\oM} (A) = V^\ast A V $
$( A \in \pi(L^\infty (\mu)) ) .$
Then we have $\Lambda = \Gamma^\oM \circ \pi .$
Therefore we obtain $\Lambda \cong \Gamma^\oM ,$
which completes the proof.
\end{proof}


Next we define the post-processing relation for POVMs
by the concatenation relation between the corresponding QC channels.

\begin{defi}
\label{defi:postp}
Let $(\Omega_1 , \Sigma_1 , \oM)$
and
$(\Omega_2 , \Sigma_2 , \oN)$
be POVMs on $\Hin .$
\begin{enumerate}
\item
$\oM$ is a \emph{post-processing} of $\oN ,$
written as $\oM \postp \oN ,$
if $\Gamma^{\oM} \concp \Gamma^{\oN} .$
\item
$\oM$ is post-processing equivalent to $\oN ,$
written as
$\oM  \posteq \oN ,$
if both $\oM \postp \oN$ and $\oN \postp \oM$ hold.
\end{enumerate}
\end{defi}

If $\Hin$ is separable,
it is known~\cite{kuramochi2017minimal} that
$\oM \postp \oN$ 
if and only if 
there exists a mapping
$\kappa \colon \Sigma_1 \times \Omega_2 \to [0,1]$
such that
\begin{enumerate}[(i)]
\item
$\kappa(E|\cdot)$ is $\Sigma_2$-measurable
for each $E \in \Sigma_1 ;$
\item
$\kappa (\Omega_1 | \omega_2) =1 ,$
$\oN(\omega_2)$-a.e.;
\item
for each disjoint sequence 
$\{ E_n \} \subseteq \Sigma_1 ,$
$\kappa (\cup_n E_n|\omega_2) = \sum_n \kappa (E_n|\omega_2) ,$
$\oN(\omega_2)$-a.e.;
\item
$\displaystyle
\oM(E)
=
\int_{\Omega_2}
\kappa (E|\omega_2)
d\oN (\omega_2)
$
for each $E \in \Sigma_1 .$
\end{enumerate}
A mapping $\kappa$ satisfying the conditions (i)-(iii) 
is called a weak Markov kernel~\cite{jencova2008}.
In Refs.~\onlinecite{Dorofeev1997349,jencova2008},
post-processing relations for POVMs are defined by weak Markov kernel.

If $\Hin$ is non-separable,
the post-processing relation in Definition~\ref{defi:postp}
is no longer consistent with the relation
defined by weak Markov kernel 
as shown in the following example.

\begin{exam}
\label{ex:nonseparable}
Let $\Hin$ be non-separable
and let $(\phi_\omega)_{\omega \in \Omega}$
be an uncountable orthonormal basis of $\Hin .$
We define PVMs 
$(\Omega, 2^\Omega , \oM_1)$
and 
$(\Omega, \Sigma , \oM_2)$
as follows:
\begin{gather*}
	2^\Omega\text{ is the power set of $\Omega;$}
	\\
	\oM_1 (E)
	:=
	\sum_{\omega \in E}
	\ket{\phi_\omega}
	\bra{\phi_\omega} ,
	\quad
	(E \in 2^\Omega) ;
	\\
	\Sigma
	:=
	\set{E \subseteq \Omega|
	\text{either $E$ or $\Omega \setminus E$ is countable}
	};
	\\
	\oM_2 (E)
	:=
	\oM_1 (E)
	,
	\quad
	(E \in \Sigma).
\end{gather*}
Then both of the QC channels
$\Gamma^{\oM_1}$ and $\Gamma^{\oM_2}$
are normally isomorphic to the following channel:
\[
	\Gamma
	\colon
	\ell^\infty (\Omega)
	\ni
	f
	\mapsto
	\sum_{\omega \in \Omega}
	f (\omega)
	\ket{\phi_\omega}
	\bra{\phi_\omega} 
	\in \LHin ,
\]
where 
$\ell^\infty (\Omega)$
is the set of bounded complex-valued functions on $\Omega .$
Thus we have $\oM_1 \posteq \oM_2 $
in the sense of Definition~\ref{defi:postp}.
On the other hand, we can show that there exists no weak Markov kernel $\kappa$
satisfying
\begin{equation}
	\oM_1 (E)
	=
	\int_\Omega 
	\kappa (E|\omega) d \oM_2 (\omega)
	=
	\sum_{\omega \in \Omega}
	\kappa (E|\omega)
	\ket{\phi_\omega}
	\bra{\phi_\omega} ,
	\quad
	(\forall E \in 2^\Omega) .
	\label{eq:impossible}
\end{equation}
Suppose there exists such $\kappa .$
Then from equation~\eqref{eq:impossible}, we should have
$\kappa(E|\omega) = \chi_E (\omega) $
$(E \in 2^\Omega) .$
If we take $E \in 2^\Omega \setminus \Sigma \neq \varnothing ,$
$\kappa(E|\cdot) = \chi_E (\cdot)$ is not $\Sigma$-measurable,
which contradicts the definition of weak Markov kernel.
Thus $\oM_1$ and $ \oM_2$ are not equivalent
in the sense of weak Markov kernel.
\end{exam}

In the rest of this section, we will see that 
the post-processing relation in Definition~\ref{defi:postp}
is a natural relation for POVMs as long as
we consider the compatibility of POVMs and channels.

\subsection{Compatibility of POVM and channel}
Now we consider compatibility of a POVM and a normal channel.

Let $\M$ and $\Min$ be von Neumann algebras
and let $(\Omega , \Sigma)$ be a measurable space.
An 
instrument%
~\cite{davieslewisBF01647093,:/content/aip/journal/jmp/25/1/10.1063/1.526000,1751-8121-46-2-025302,1751-8121-46-2-025303} 
is a mapping 
$\I \colon \M \times \Sigma \to \Min $ such that
\begin{enumerate}[(i)]
\item
$\M \ni A \mapsto \I (A , E) \in \Min$
is a normal CP linear map for each $E \in \Sigma ;$
\item
$\Sigma \ni E \mapsto \braket{\vph , \I (A , E)} \in \cmplx$
is a countably additive complex-valued measure
for each $A \in \M $ and each $\vph \in {\Min}_\ast ;$
\item
$\I (\1_\M , \Omega) = \1_{\Min} .$
\end{enumerate}
For an instrument 
$\I \colon \M \times \Sigma \to \Min  ,$
we define the marginal POVM 
$(\Omega , \Sigma , \oM^\I)$
and the marginal channel
$\Lambda^\I \in \nchset{\M}{\Min}$
of $\I$ by
\begin{gather*}
	\oM^\I (E)
	:=
	\I (\1_\M , E) ,
	\quad
	(E \in \Sigma);
	\\
	\Lambda^\I (A)
	:=
	\I (A , \Omega )
	,
	\quad
	(A \in \M) .
\end{gather*}

\begin{defi}[Compatibility of a POVM and a channel]
\label{defi:povmchcomp}
Let $(\Omega , \Sigma , \oM)$ be a POVM on $\Hin ,$
let $\M$ be a von Neumann algebra,
and let $\Lambda \in \nchset{\M}{\LHin}$
be a normal channel.
Then $\oM$ and $\Lambda$ are said to be 
\emph{compatible},
written as $\oM \comp \Lambda ,$
if there exists an instrument
$\I \colon \M \times \Sigma \to \LHin$
such that $ \oM= \oM^\I$
and 
$\Lambda = \Lambda^\I .$
Such an instrument $\I$ is called a joint instrument
of $\oM$ and $\Lambda .$
\end{defi}

\begin{theo}
\label{theo:povmch}
Let $(\Omega , \Sigma , \oM) ,\M ,$
and
$\Lambda \in \nchset{\M}{\LHin}$
be the same as in Definition~\ref{defi:povmchcomp}.
Then the following conditions are equivalent.
\begin{enumerate}[(i)]
\item
$\oM \comp \Lambda ;$
\item
$\Gamma^{\oM} \maxcomp \Lambda ;$
\item
$\Gamma^{\oM} \mincomp \Lambda ;$
\item
$\Lambda \concp (\Gamma^\oM )^c ;$
\item
$\Gamma^\oM \concp \Lambda^c .$
\end{enumerate}
\end{theo}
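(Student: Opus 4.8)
The plan is to dispose of the equivalences among (ii)--(v) in one stroke and then concentrate on the genuinely new content, the link between the instrument-level condition (i) and the rest. Since the outcome space $\A_\oM$ of $\Gamma^\oM$ is a commutative, hence nuclear, von Neumann algebra, the remark following Definition~\ref{defi:chcomp} gives (ii)$\iff$(iii) at once. Applying Theorem~\ref{theo:univbmax} to the two normal channels $\Gamma^\oM \in \nchset{\A_\oM}{\LHin}$ and $\Lambda \in \nchset{\M}{\LHin}$ (with the roles of $\Lambda$ and $\Gamma$ there played by $\Gamma^\oM$ and $\Lambda$) turns $\Gamma^\oM \maxcomp \Lambda$, $\Lambda \concp (\Gamma^\oM)^c$, and $\Gamma^\oM \concp \Lambda^c$ into equivalent statements, which are exactly (ii), (iv), and (v). So (ii)--(v) form one equivalence block, and it remains only to splice (i) into it.

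For (ii)$\implies$(i) I would argue directly. Let $\Theta \in \chset{\M \maxtensor \A_\oM}{\LHin}$ be a $\max$-joint channel of $\Lambda$ and $\Gamma^\oM$, let $(\cK,\pi,V)$ be the minimal Stinespring representation of $\gamma^\oM$, and let $\oP(E):=\pi(\chi_E)\in\A_\oM$ be the projections of the minimal Naimark dilation recorded in Remark~\ref{rem:qcch}. I would set $\I(A,E):=\Theta(A\otimes\oP(E))$. Complete positivity of $A\mapsto\I(A,E)$ follows by composing the CP map $\Theta$ with the CP map $A\mapsto A\otimes\oP(E)$ (which is CP since $\oP(E)\ge 0$); ultraweak continuity in $A$ and countable additivity in $E$ follow from the binormality of $\Theta$ (Lemma~\ref{lemm:mbnormal}) together with the ultraweak countable additivity of the PVM $\oP$; and the marginals are $\I(A,\Omega)=\Theta(A\otimes\1_{\A_\oM})=\Lambda(A)$ and $\I(\1_\M,E)=\Gamma^\oM(\oP(E))=V^\ast\oP(E)V=\oM(E)$. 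Hence $\I$ is a joint instrument and $\oM\comp\Lambda$.

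The substantive direction is (i)$\implies$(ii). Given a joint instrument $\I$, I would first invoke the dilation (structure) theorem for instruments to obtain a Hilbert space $\cK$, a normal representation $\sigma\colon\M\to\LK$, a PVM $\oP'$ on $\cK$ whose range commutes with $\sigma(\M)$, and an isometry $W\colon\Hin\to\cK$ with $\I(A,E)=W^\ast\sigma(A)\oP'(E)W$. Writing $\rho(f):=\int_\Omega f\,d\oP'$, the commuting representations $\sigma$ and $\rho$ of $\M$ and $B(\Omega,\Sigma)$ yield, by the universal property of the maximal tensor product, a representation $\pi_{12}\colon\M\maxtensor B(\Omega,\Sigma)\to\LK$ with $\pi_{12}(A\otimes f)=\sigma(A)\rho(f)$; then $\Theta_0:=W^\ast\pi_{12}(\cdot)W$ is a $\max$-joint channel of $\Lambda$ and the pre-QC channel $\gamma^\oM$, so $\gamma^\oM\maxcomp\Lambda$. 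To transfer this from $\gamma^\oM$ to $\Gamma^\oM$ I would use Theorem~\ref{theo:univmax}, which gives $\gamma^\oM\maxcomp\Lambda\iff\Lambda\cocp(\gamma^\oM)^c$, together with $(\gamma^\oM)^c\eqncp(\Gamma^\oM)^c$; the latter holds because $\Gamma^\oM\eqncp\overline{\gamma^\oM}$ (Remark~\ref{rem:qcch}, Corollary~\ref{coro:cc}), the normal extension shares the commutant conjugate of $\gamma^\oM$, and concatenation-equivalent channels have concatenation-equivalent conjugates by Theorem~\ref{theo:equiv}. This yields $\Lambda\cocp(\Gamma^\oM)^c$, i.e.\ condition (iv), already known to be equivalent to (ii).

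The hard part will be exactly this last direction. Two points demand care: producing a genuine joint channel out of the instrument (which forces the use of the instrument dilation, since the definition only supplies complete positivity for each fixed $E$), and reconciling the ``pre-QC'' algebra $B(\Omega,\Sigma)$ carried by the dilation with the canonical QC outcome algebra $\A_\oM$. The conjugate-channel equivalences of Theorems~\ref{theo:univmax} and \ref{theo:univbmax} are what let me perform this reconciliation through $\eqncp$ of commutant conjugates rather than by constructing an explicit normal map between $B(\Omega,\Sigma)^{\ast\ast}$ and $\A_\oM$.
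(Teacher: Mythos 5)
Your proof is correct, and its skeleton --- (ii)$\iff$(iii) by nuclearity, (ii)$\iff$(iv)$\iff$(v) by Theorem~\ref{theo:univbmax}, and (ii)$\implies$(i) by setting $\I(A,E):=\Theta(A\otimes\pi(\chi_E))$ and checking the instrument axioms via binormality --- coincides with the paper's; your verification of countable additivity and normality for that direction is in fact more explicit than the paper's one-line assertion. The real divergence is in (i)$\implies$(ii): the paper does \emph{not} invoke an instrument dilation theorem. It takes the minimal Stinespring representation $(\cK,\pi,V)$ of the marginal channel $\Lambda=\Lambda^\I$, applies the Radon--Nikodym theorem for CP maps to the family $\I(\cdot,E)\le\I(\cdot,\Omega)$ to produce positive operators $\widehat{\oM}(E)\in\pi(\M)'$ with $\I(A,E)=V^\ast\pi(A)\widehat{\oM}(E)V$, proves by hand (using minimality) that $\widehat{\oM}$ is a POVM, and then reads off $\gamma^\oM=\Lambda^c\circ\alpha\cocp\Lambda^c$ with $\alpha(f)=\int f\,d\widehat{\oM}$, landing on condition (v) via Remark~\ref{rem:qcch} and Lemma~\ref{lemm:minex}. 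Your route black-boxes exactly this construction as ``the dilation theorem for instruments'' and lands on (iv) instead; since (iv) and (v) sit in the same equivalence block, that is immaterial. What you should be careful about is the form of the dilation you cite: the PVM version (with $\oP'$ projection-valued and commuting with $\sigma(\M)$) in full generality --- arbitrary von Neumann algebra $\M$, arbitrary measurable space, non-separable $\Hin$ --- needs an extra covariant Naimark step beyond what Radon--Nikodym gives, and is precisely the kind of statement the paper chose to re-derive rather than cite. You can sidestep this entirely: the POVM form of the dilation suffices, because the channel $\alpha(f)=\int f\,d\widehat{\oM}$ has range in $\pi(\M)'$ and Proposition~\ref{prop:cpconti}.1 then extends $\pi\atensor\alpha$ to a channel on $\M\maxtensor B(\Omega,\Sigma)$ without ever needing a representation of $B(\Omega,\Sigma)$. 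With that adjustment your argument is a clean, slightly less self-contained variant of the paper's; what the paper's version buys is that it proves the needed dilation inline, in exactly the generality required.
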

\begin{proof}
The equivalence
(ii)$\iff$(iv)$\iff$(v)
follows from Theorem~\ref{theo:univbmax},
and the equivalence 
(ii)$\iff$(iii)
from that the outcome space of $\Gamma^\oM$ is nuclear.

(i)$\implies$(v).
Assume (i) and take a joint instrument
$\I \colon \M \times \Sigma \to \LHin$
of $\oM$ and $\Lambda .$
From Remark~\ref{rem:qcch}
and Lemma~\ref{lemm:minex}, we have only to show 
$\gamma^\oM \cocp \Lambda^c , $
where $\gamma^\oM \in \chset{B(\Omega ,\Sigma)}{\LHin}$
is the pre-QC channel of $\oM .$
Let $(\cK , \pi , V)$ be a minimal Stinespring representation 
of $\Lambda = \Lambda^\I .$
From the Radon-Nikodym theorem for CP maps~\cite{doi:10.1063/1.1615697},
for each $E \in \Sigma$
there exists a unique positive operator $\widehat{\oM} (E) \in \pi (\M)^\prime$
such that
$\I (A , E) = V^\ast \pi (A) \widehat{\oM} (E) V $
for all $A \in \M .$
Now we show that $(\Omega , \Sigma , \widehat{\oM})$
is a POVM on $\cK .$
For each disjoint countable family $\{ E_n \} \subseteq \Sigma ,$
each $A_1 , A_2 \in \M ,$
and each $\psi_1 , \psi_2 \in \Hin,$
we have 
\begin{align*}
	\braket{\pi(A_1)V \psi_1 | \widehat{\oM} (\cup_n E_n ) \pi (A_2) V \psi_2  }
	&=
	\braket{\psi_1 | V^\ast \pi(A_1^\ast A_2) \widehat{\oM} (\cup_n E_n )  V \psi_2  }
	\\
	&=
	\braket{\psi_1 | \I (A_1^\ast A_2 ,  \cup_n E_n )  \psi_2  }
	\\
	&=
	\sum_n
	\braket{\psi_1 | \I (A_1^\ast A_2 ,   E_n )  \psi_2  }
	\\
	&=
	\sum_n
	\braket{\pi(A_1)V \psi_1 | \widehat{\oM} (E_n ) \pi (A_2) V \psi_2  }.
\end{align*}
Thus from the minimality of the Stinespring representation $(\cK , \pi , V) ,$
$\widehat{\oM}$ is countably additive.
Since $\widehat{\oM} (\Omega) = \1_\cK$ is immediate from the definition,
$\widehat{\oM}$ is a POVM.
Therefore we can define a channel $\alpha \in \chset{B(\Omega , \Sigma)}{\pi (\M)^\prime}$
by
\[
	\alpha (f)
	:=
	\int_\Omega f (\omega) d \widehat{\oM} (\omega) ,
	\quad
	(f\in B (\Omega , \Sigma)) .
\]
Then we have 
$\gamma^\oM (\chi_E) = \oM(E) = V^\ast \widehat{\oM}(E) V = \Lambda^c \circ \alpha (\chi_E)$
$(E \in \Sigma),$
where $\Lambda^c \in \nchset{\pi(\M)^\prime}{\LHin} .$
Since the set of simple functions is norm dense in $B(\Omega , \Sigma) ,$
this implies
$\gamma^\oM = \Lambda^c \circ \alpha \cocp \Lambda^c ,$
which proves (v).

(ii)$\implies$(i).
Assume (ii).
Let $(\cK_1 , \pi_1 , V_1)$ be a minimal Stinespring representation
of $\gamma^{\oM} \in \chset{B(\Omega , \Sigma)}{\LHin} .$
Then 
$\Gamma^{\oM} \in \nchset{\pi_1 (B(\Omega , \Sigma))^{\prime \prime} }{\LHin}$
is the mapping defined by 
$\Gamma^\oM (B) = V_1^\ast B V_1 $
$(B \in \pi_1 (B(\Omega , \Sigma))^{\prime \prime} ) .$
From the assumption,
we can take a binormal $\max$-joint channel
$\Theta \in \chset{\M \maxtensor \pi_1 (B(\Omega , \Sigma))^{\prime \prime} }{\LHin}$
of  
$ \Lambda $
and $\Gamma^\oM .$
We define a mapping
$\I \colon \M \times \Sigma \to \LHin$
by
$\I (A , E ) := \Theta (A \otimes \pi_1 (\chi_E)) $
$(A \in \M , E \in \Sigma) .$
Then $\I$ is a joint instrument
of $\oM$ and $\Lambda ,$
which proves $\oM \comp \Lambda .$
\end{proof}

Theorem~\ref{theo:povmch} indicates that
for a POVM $\oM ,$ the conjugate channel
$(\Gamma^\oM)^c$ can be interpreted as 
the least-disturbing channel compatible with $\oM ,$
generalizing the result obtained in Ref.~\onlinecite{PhysRevA.88.042117}
for discrete POVMs.

\begin{rem}
\label{rem:theo4}
In Ref.~\onlinecite{1751-8121-46-2-025303}, the characterization was given 
for all \emph{instruments} that have a given POVM $\oM$ as their marginal POVM
(Theorem~1 and Corollary~1) by using direct-integral decomposition.~\cite{1751-8121-46-2-025302}
The channel $T$ appearing in Theorem~1 of 
Ref.~\onlinecite{1751-8121-46-2-025303}
satisfies, in the setting of our Theorem~\ref{theo:povmch},  
$\Lambda = (\Gamma^\oM)^c \circ T ,$
which corresponds to the condition (iv) of Theorem~\ref{theo:povmch}.
\end{rem}

\subsection{Compatibility of two POVMs}
\begin{defi}
\label{defi:povmscomp}
Let $(\Omega_1 , \Sigma_1 , \oM_1 )$
and $(\Omega_2 , \Sigma_2 , \oM_2)$
be POVMs on $\Hin .$
$\oM_1$ and $\oM_2$ are said to be \emph{compatible},
written as $\oM_1 \comp \oM_2 ,$
if there exists a POVM $\oN$ on $\Hin$
such that
$\oM_1 \postp \oN$ and $\oM_2 \postp \oN .$
\end{defi}
\begin{rem}
\label{rem:povmc}
If $\Hin$ is separable and the outcome spaces of $\oM_1$
and $\oM_2$ are standard Borel~\cite{kechris1995classical},
$\oM_1 \comp \oM_2$
if and only if
$\oM_1 $ and $ \oM_2$
are jointly measurable~\cite{busch2016quantum},
i.e.\ there exists a POVM
$(\Omega_1 \times \Omega_2 , \Sigma_1 \otimes \Sigma_2 , \oM_{12})$
such that
$\oM_{12}(E_1 \times \Omega_2) = \oM_1 (E_1)$
and
$\oM_{12}(\Omega_1 \times E_2) = \oM_2 (E_2)$
$(E_1 \in \Sigma_1 , E_2 \in \Sigma_2) .$
Here $\Sigma_1 \otimes \Sigma_2$ denotes 
the product $\sigma$-algebra of $\Sigma_1$ and $\Sigma_2 .$
\end{rem}

The compatibility of POVMs in the above definition is 
consistent with the compatibility of channels 
as shown in the following proposition.

\begin{prop}
\label{prop:comppovms}
Let $(\Omega_1 , \Sigma_1 , \oM_1 )$
and $(\Omega_2 , \Sigma_2 , \oM_2)$
be POVMs on $\Hin .$
Then the following conditions are equivalent.
\begin{enumerate}[(i)]
\item
$\oM_1 \comp \oM_2 ;$ 
\item
$\Gamma^{\oM_1} \maxcomp \Gamma^{\oM_2} ;$
\item
$\gamma^{\oM_1} \maxcomp \gamma^{\oM_2} ;$
\item
there exist Naimark dilations 
$(\cK , \oP_1 , V)$ 
and
$(\cK , \oP_2, V)$
of $\oM_1$ and $\oM_2$, respectively,
such that the ranges
$\oP_1 (\Sigma_1)$ 
and 
$\oP_2 (\Sigma_2)$ 
commute.
\end{enumerate}
\end{prop}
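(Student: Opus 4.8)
The plan is to establish the equivalences along the chain (i)$\iff$(ii)$\iff$(iii)$\iff$(iv), exploiting throughout that every outcome algebra occurring here is commutative. The general machinery of Section~\ref{sec:main} carries almost all the weight; the one genuinely new input is the \emph{copying} operation available for commutative outcome algebras together with the observation that a $C^\ast$-tensor product of commutative algebras stays commutative.

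First I would dispatch (ii)$\iff$(iii). By Remark~\ref{rem:qcch} the QC channel $\Gamma^{\oM_i}$ is normally concatenation equivalent, hence in particular $\eqcp$-equivalent, to the normal extension $\overline{\gamma^{\oM_i}}$ of the pre-QC channel $\gamma^{\oM_i}$. Applying Theorem~\ref{theo:univmax} to $\Phi = \gamma^{\oM_1}$ and $\Psi = \gamma^{\oM_2}$ gives the equivalence of $\gamma^{\oM_1}\maxcomp\gamma^{\oM_2}$ with $\overline{\gamma^{\oM_1}}\maxcomp\overline{\gamma^{\oM_2}}$, and Lemma~\ref{lemm:ccconsis} (claim~1, with $\gamma=\max$) lets me replace each $\overline{\gamma^{\oM_i}}$ by the $\eqcp$-equivalent $\Gamma^{\oM_i}$, yielding (ii)$\iff$(iii). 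Next, (iii)$\iff$(iv) is a Stinespring/Naimark argument. For (iii)$\Rightarrow$(iv) I take a minimal Stinespring representation $(\cK,\pi,V)$ of a $\max$-joint channel $\Theta\in\chset{B(\Omega_1,\Sigma_1)\maxtensor B(\Omega_2,\Sigma_2)}{\LHin}$ and set $\oP_1(E):=\pi(\chi_E\otimes\1)$ and $\oP_2(E):=\pi(\1\otimes\chi_E)$; these are PVMs on the common space $\cK$ with the common isometry $V$, they satisfy $V^\ast\oP_i(\cdot)V=\oM_i$, and their ranges commute because the two marginal representations of $\pi$ do. Conversely, given commuting Naimark dilations sharing $(\cK,V)$, the representations $f\mapsto\int f\,d\oP_i$ have commuting ranges, so the universal property of the maximal tensor product produces $\pi_{12}$ on $B(\Omega_1,\Sigma_1)\maxtensor B(\Omega_2,\Sigma_2)$, and $\Theta(X):=V^\ast\pi_{12}(X)V$ is a $\max$-joint channel of $\gamma^{\oM_1}$ and $\gamma^{\oM_2}$, giving (iii).

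It remains to link the POVM-theoretic condition (i) to the channel-theoretic ones. For (i)$\Rightarrow$(ii): if $\oM_1\postp\oN$ and $\oM_2\postp\oN$, i.e.\ $\Gamma^{\oM_i}\concp\Gamma^{\oN}$, I invoke the copying of the commutative QC channel $\Gamma^{\oN}$, namely $\Gamma^{\oN}\maxcomp\Gamma^{\oN}$, whose joint channel is $\Gamma^{\oN}$ composed with the multiplication $\tilde\pi\colon\A_{\oN}\maxtensor\A_{\oN}\to\A_{\oN}$ exactly as in the proof of Lemma~\ref{lemm:qcconj} (its two marginals commute since $\A_{\oN}$ is commutative). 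Then Lemma~\ref{lemm:ccconsis} (claim~1(i), with $\Phi_1=\Psi_1=\Gamma^{\oN}$, $\Phi_2=\Gamma^{\oM_1}$, $\Psi_2=\Gamma^{\oM_2}$, $\gamma=\max$) propagates compatibility along the post-processings, yielding $\Gamma^{\oM_1}\maxcomp\Gamma^{\oM_2}$. For the reverse (ii)$\Rightarrow$(i) I must actually produce the single POVM $\oN$. The crucial point is that the outcome algebra $\A_{\oM_1}\maxtensor\A_{\oM_2}$ of the joint channel $\Theta$ is again commutative: for commutative, hence nuclear, algebras the maximal and minimal tensor products coincide, and the minimal tensor product of commutative algebras is commutative. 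Passing to the normal extension $\overline{\Theta}$ over the (commutative) enveloping von Neumann algebra $(\A_{\oM_1}\maxtensor\A_{\oM_2})^{\ast\ast}$ and applying Lemma~\ref{lemm:abelout} produces a POVM $\oN$ with $\overline{\Theta}\eqncp\Gamma^{\oN}$. Since each marginal embedding $A\mapsto A\otimes\1$ is itself a channel, $\Gamma^{\oM_i}\cocp\Theta\cocp\overline{\Theta}\eqncp\Gamma^{\oN}$, and Corollary~\ref{coro:nnonn} upgrades this to $\Gamma^{\oM_i}\concp\Gamma^{\oN}$, that is $\oM_i\postp\oN$; hence $\oM_1\comp\oM_2$.

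The principal obstacle I anticipate is precisely this extraction step in (ii)$\Rightarrow$(i): one must combine the commutativity of the joint outcome algebra with Lemma~\ref{lemm:abelout} to realize the abstract joint channel as a channel that is concatenation equivalent to a genuine QC channel, and thereby recover a POVM. Everything else is bookkeeping with the universal properties and the equivalences already assembled in Sections~\ref{sec:prel}--\ref{sec:main}, and the copying operation for commutative algebras is the feature that makes the POVM and channel notions of compatibility match.
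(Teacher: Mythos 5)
Your proof is correct and uses the same toolkit as the paper (the multiplication representation on $\A_{\oN}\maxtensor\A_{\oN}$ for (i)$\Rightarrow$(ii), the Stinespring dilation of a joint channel to produce commuting Naimark dilations, and Lemma~\ref{lemm:abelout} to recover a POVM from a channel with commutative outcome von Neumann algebra), but you decompose the equivalences differently. The paper proves the single cycle (i)$\Rightarrow$(ii)$\Rightarrow$(iii)$\Rightarrow$(iv)$\Rightarrow$(i), and in the last step it builds the dominating commutative von Neumann algebra as $\widetilde{\A}=(\oP_1(\Sigma_1)\cup\oP_2(\Sigma_2))^{\prime\prime}$ inside the dilation space, feeding $\widetilde{\Lambda}(X)=V^\ast XV$ into Lemma~\ref{lemm:abelout}. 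You instead close the loop with a direct (ii)$\Rightarrow$(i): you observe that $\A_{\oM_1}\maxtensor\A_{\oM_2}$ is commutative, pass to the normal extension of the joint channel over the (commutative) enveloping von Neumann algebra, and apply Lemma~\ref{lemm:abelout} there, then upgrade $\cocp$ to $\concp$ via Corollary~\ref{coro:nnonn}. This is a legitimate alternative that makes (iv) a genuine corollary rather than a link in the chain; it also gives you the converse (iv)$\Rightarrow$(iii) for free from the universal property of the maximal tensor product, which the paper never states directly. Your bidirectional treatment of (ii)$\iff$(iii) via Theorem~\ref{theo:univmax} is likewise fine, whereas the paper only needs one direction. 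One point you gloss over: in (iii)$\Rightarrow$(iv) the maps $E\mapsto\pi_\Theta(\chi_E\otimes\1)$ are projection-valued and finitely additive automatically, but their countable additivity in the weak operator topology is not free; it requires the weak-convergence argument on the dense subspace $\pi_\Theta(\cdot)V\Hin$ used in Lemma~\ref{lemm:normalrep} (the paper explicitly invokes "a similar discussion as in Lemma~\ref{lemm:normalrep}"). You should cite or reproduce that step rather than simply asserting that the $\oP_i$ are PVMs.
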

\begin{proof}
(i)$\implies$(ii).
Assume $\oM_1 \comp \oM_2 .$
Then there exists a POVM
$(\Omega_0 , \Sigma_0 , \oN)$
on $\Hin$ satisfying $\Gamma^{\oM_1} \concp \Gamma^\oN $
and $\Gamma^{\oM_2} \concp \Gamma^\oN .$
We write the outcome spaces of $\Gamma^{\oM_1} , \Gamma^{\oM_2}$ and
$\Gamma^\oN$ as
$\A_{\oM_1} ,\A_{\oM_2} ,$ and $\A_{\oN} ,$
respectively.
By assumption, we can take normal channels
$\alpha_1 \in \nchset{\A_{\oM_1}}{\A_\oN}$
and
$\alpha_2 \in \nchset{\A_{\oM_2}}{\A_\oN}$
such that
$\Gamma^{\oM_1} = \Gamma^\oN \circ \alpha_1$
and
$\Gamma^{\oM_2} = \Gamma^\oN \circ \alpha_2 .$
Since $\A_\oN$ is commutative,
there exists a representation 
$\tilde{\pi} \colon  \A_\oN \maxtensor \A_\oN \to \A_\oN$
such that $\tilde{\pi} (B_1 \otimes B_2) = B_1 B_2 $
$(B_1 , B_2 \in \A_\oN) .$
If we define $\Theta \in \chset{ \A_{\oM_1} \maxtensor \A_{\oM_2} }{\LHin} $
by
$\Theta := \Gamma^\oN \circ \tilde{\pi} \circ (\alpha_1 \maxtensor \alpha_2),$
$\Theta$ is a $\max$-joint channel of $\Gamma^{\oM_1}$ and $\Gamma^{\oM_2} ,$
which implies $\Gamma^{\oM_1} \maxcomp \Gamma^{\oM_2} .$

(ii)$\implies$(iii) is immediate from 
$\gamma^{\oM_1} \cocp \Gamma^{\oM_1}$
and
$\gamma^{\oM_2} \cocp \Gamma^{\oM_2} .$

(iii)$\implies$(iv).
Assume $\gamma^{\oM_1} \maxcomp \gamma^{\oM_2}  .$
Then we can take a $\max$-joint channel
$
\Theta \in 
\chset{
B(\Omega_1 , \Sigma_1) \maxtensor B(\Omega_2 , \Sigma_2)
}{
\LHin
}
$
of 
$\gamma^{\oM_1} $ and $ \gamma^{\oM_2} .$
Let $(\cK_\Theta , \pi_\Theta , V_\Theta)$ be
a minimal Stinespring representation of $\Theta .$
Then for each $E_1 \in \Sigma_1$ and $E_2 \in \Sigma_2$ we have
\begin{gather*}
	\oM_1 (E_1)
	=
	\gamma^{\oM_1}
	(\chi_{E_1})
	=
	\Theta (\chi_{E_1} \otimes \chi_{\Omega_2})
	=
	V_\Theta^\ast 
	\oP_1 (E_1) 
	V_\Theta , \\
	\oM_2 (E_2)
	=
	\gamma^{\oM_2}
	(\chi_{E_2})
	=
	\Theta (\chi_{\Omega_1} \otimes \chi_{E_2})
	=
	V_\Theta^\ast 
	\oP_2 (E_2) 
	V_\Theta , 
\end{gather*}
where
$\oP_1 (E_1) := \pi_\Theta (\chi_{E_1} \otimes \chi_{\Omega_2})$
and
$\oP_2 (E_2) := \pi_\Theta (\chi_{\Omega_1} \otimes \chi_{E_2}) .$
By a similar discussion as in Lemma~\ref{lemm:normalrep}, we can show that
$(\Omega_1 , \Sigma_1 , \oP_1)$ 
and 
$(\Omega_2 , \Sigma_2 , \oP_2)$ 
are PVMs on $\cK_\Theta .$
Thus $(\cK_\Theta , \oP_1 , V_\Theta )$
and $(\cK_\Theta , \oP_2 , V_\Theta )$
are Nairmark dilations of $\oM_1$ and $\oM_2 ,$
respectively, with commuting ranges,
which proves (iv).

(iv)$\implies$(i).
Let $(\cK , \oP_1 , V)$ and $(\cK ,\oP_2 , V)$
be Naimark dilations of $\oM_1$ and $\oM_2 ,$
respectively, with commuting ranges.
We define representations
$\pi_1$ and $\pi_2$ by
\begin{gather*}
	\pi_1
	\colon
	B(\Omega_1 , \Sigma_1)
	\ni f_1
	\mapsto
	\int_{\Omega_1} 
	f_1 (\omega_1)
	d \oP_1 (\omega_1) 
	\in \LK,
	\\
	\pi_2
	\colon
	B(\Omega_2 , \Sigma_2)
	\ni f_2
	\mapsto
	\int_{\Omega_2} 
	f_2 (\omega_2)
	d \oP_2 (\omega_2) 
	\in \LK,
\end{gather*}
a commutative von Neumann algebra
$\widetilde{\A}$ by 
$( \oP_1 (\Sigma_1) \cup \oP_2 (\Sigma_2) )^{\prime \prime} ,$
and a normal channel
$\widetilde{\Lambda} \in \nchset{\widetilde{\A}}{\LHin} $
by
$\widetilde{\Lambda} (X) := V^\ast X V$
$(X \in \widetilde{\A}) .$
Since the ranges of $\pi_1$ and $\pi_2$
are contained in $\widetilde{\A} ,$
for each $E_j \in \Sigma_j$
$(j=1,2)$
we have
\[
\gamma^{\oM_j} (\chi_{E_j})
=
\oM_j (E_j)
=
V^\ast \oP_j (E_j ) V
=
\widetilde{\Lambda} 
\circ \pi_j
(\chi_{E_j}) ,
\]
which implies 
$
\gamma^{\oM_j} 
=
\widetilde{\Lambda} 
\circ \pi_j
\cocp \widetilde{\Lambda} .$
Thus from Remark~\ref{rem:qcch} and Lemma~\ref{lemm:minex},
we have $\Gamma^{\oM_j} \concp \widetilde{\Lambda} $
$(j=1,2) .$
Since $\widetilde{\A}$ is commutative,
from Lemma~\ref{lemm:abelout} we can take 
a POVM $\oN$ on $\Hin$ satisfying
$\widetilde{\Lambda} \eqncp \Gamma^\oN .$
Thus we obtain $\oM_j \postp \oN$
$(j=1,2) ,$
which implies the condition~(i).
\end{proof}

\begin{rem}
\label{rem:beneduci}
The equivalence of the condition~(iv) in Proposition~\ref{prop:comppovms}
and the joint measurability of $\oM_1$ and $\oM_2$
was shown in Ref.~\onlinecite{Beneduci2017197} (Theorem~6)
under some weak assumptions on the outcome measurable spaces.
\end{rem}

\subsection{Maximal POVM}
A POVM $\oM$ on $\Hin$ is called 
maximal~\cite{Dorofeev1997349,ISI:000236487200002}
if $\oM \postp \oN$ implies $\oN \postp \oM$
for any POVM $\oN$ on $\Hin .$
The following theorem gives a characterization of the maximality
of a POVM in terms of the conjugate channel of the QC channel.

\begin{theo}
\label{theo:maximalpovm}
Let $(\Omega , \Sigma , \oM)$ be a POVM on $\Hin .$
Then the following conditions are equivalent.
\begin{enumerate}[(i)]
\item
$\oM$ is maximal;
\item
there exist a commutative von Neumann algebra $\A_0$
and a normal channel $\Gamma_0 \in \nchset{\A_0}{\LHin}$
with a minimal Stinespring representation $(\cK_0 , \pi_0 , V_0)$
such that
$\Gamma^\oM \eqncp \Gamma_0$
and $\pi_0(\A_0)^\prime = \pi_0 (\A_0)$;
\item
$\Gamma^\oM \eqncp (\Gamma^\oM)^c .$
\end{enumerate}
\end{theo}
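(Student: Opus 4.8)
The plan is to establish the cyclic chain (i)$\implies$(ii)$\implies$(iii)$\implies$(i). Throughout I would fix a minimal Stinespring representation $(\cK , \pi , V)$ of the pre-QC channel $\gamma^\oM$ and write $\A_\oM := \pi (B(\Omega , \Sigma))^{\prime \prime}$ for the commutative outcome space of $\Gamma^\oM ,$ so that $\Gamma^\oM (A) = V^\ast A V$ for $A \in \A_\oM .$ Since $\pi (B(\Omega , \Sigma))$ is ultraweakly dense in $\A_\oM ,$ the triple $(\cK , \iota , V)$ with $\iota \colon \A_\oM \hookrightarrow \cL  (\cK)$ the inclusion is a minimal Stinespring representation of $\Gamma^\oM ,$ and the commutant conjugate $(\Gamma^\oM)^c$ is the map $C \mapsto V^\ast C V$ on $\A_\oM^\prime .$

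For (i)$\implies$(ii), the guiding idea is that maximality of $\oM$ should force $\A_\oM$ to be enlargeable to a maximal abelian algebra without leaving its concatenation class. First I would invoke Zorn's lemma to pick a maximal abelian von Neumann subalgebra $\A_1$ of $\cL  (\cK)$ with $\A_\oM \subseteq \A_1 ;$ then $\A_1^\prime = \A_1 ,$ and since $\A_1$ is abelian and contains $\A_\oM$ we get $\A_\oM \subseteq \A_1 \subseteq \A_\oM^\prime .$ Define the normal channel $\Lambda_1 \in \nchset{\A_1}{\LHin}$ by $\Lambda_1 (A) := V^\ast A V .$ Because $\A_\oM \subseteq \A_1 ,$ the span of $\A_\oM V \Hin$ is already dense in $\cK ,$ so the inclusion $\A_1 \hookrightarrow \cL  (\cK)$ together with $V$ is a minimal Stinespring representation of $\Lambda_1 .$ As $\A_1$ is commutative, Lemma~\ref{lemm:abelout} furnishes a POVM $\oN$ with $\Lambda_1 \eqncp \Gamma^\oN .$ The inclusion $\A_\oM \hookrightarrow \A_1$ shows $\Gamma^\oM \concp \Lambda_1 ,$ hence $\oM \postp \oN ;$ maximality of $\oM$ then gives $\oN \postp \oM ,$ i.e.\ $\Lambda_1 \concp \Gamma^\oM .$ Combining the two yields $\Gamma^\oM \eqncp \Lambda_1 ,$ and taking $\A_0 = \A_1 ,$ $\Gamma_0 = \Lambda_1$ with $\pi_0$ the inclusion establishes (ii), since $\pi_0 (\A_0)^\prime = \A_1^\prime = \A_1 = \pi_0 (\A_0) .$

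For (ii)$\implies$(iii), I would use that the hypothesis $\pi_0 (\A_0)^\prime = \pi_0 (\A_0)$ makes $\Gamma_0$ \emph{self-conjugate}. Starting from a minimal Stinespring representation $(\cK_0 , \pi_0 , V_0)$ of $\Gamma_0 ,$ the inclusion $\pi_0 (\A_0) \hookrightarrow \cL  (\cK_0)$ with $V_0$ is a minimal Stinespring representation of $(\Gamma_0)^c \colon \pi_0 (\A_0) \to \LHin$ (its minimality is exactly that of $(\cK_0 , \pi_0 , V_0)$). Since $\pi_0 (\A_0)^\prime = \pi_0 (\A_0) ,$ the commutant conjugate of $(\Gamma_0)^c$ associated with this representation is again $C \mapsto V_0^\ast C V_0 ,$ that is $(\Gamma_0)^c$ itself; Corollary~\ref{coro:cc} then gives $\Gamma_0 \eqncp (\Gamma_0)^c .$ Transporting along $\Gamma^\oM \eqncp \Gamma_0$ — conjugation respects $\eqncp$ by Theorem~\ref{theo:equiv} — I obtain $(\Gamma^\oM)^c \eqncp (\Gamma_0)^c \eqncp \Gamma_0 \eqncp \Gamma^\oM ,$ which is (iii).

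For (iii)$\implies$(i), let $\oN$ be any POVM with $\oM \postp \oN ,$ i.e.\ $\Gamma^\oM \concp \Gamma^\oN .$ By Theorem~\ref{theo:equiv} this is equivalent to $(\Gamma^\oN)^c \concp (\Gamma^\oM)^c ;$ together with Lemma~\ref{lemm:qcconj} ($\Gamma^\oN \concp (\Gamma^\oN)^c$) and the hypothesis (iii) ($(\Gamma^\oM)^c \eqncp \Gamma^\oM$), I would chain $\Gamma^\oN \concp (\Gamma^\oN)^c \concp (\Gamma^\oM)^c \eqncp \Gamma^\oM$ to conclude $\Gamma^\oN \concp \Gamma^\oM ,$ i.e.\ $\oN \postp \oM ,$ so $\oM$ is maximal. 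I expect the main obstacle to be the step (i)$\implies$(ii): the conceptual point is to recognize maximality of $\oM$ as the possibility of enlarging $\A_\oM$ to a maximal abelian subalgebra within the same concatenation class, and the technical care lies in verifying that the enlarged compression $\Lambda_1$ retains a minimal Stinespring representation, so that the identity $\pi_0 (\A_0)^\prime = \pi_0 (\A_0)$ holds exactly rather than merely up to equivalence.
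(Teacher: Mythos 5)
Your proof is correct, and the implications (ii)$\implies$(iii) and (iii)$\implies$(i) coincide with the paper's argument essentially verbatim. The interesting divergence is in (i)$\implies$(ii). The paper does \emph{not} enlarge the outcome algebra: it passes to the minimal sufficient representative $\Gamma_0$ of $\Gamma^\oM$, takes an arbitrary self-adjoint $B \in \pi_0(\A_0)^\prime$, forms the commutative algebra generated by $\pi_0(\A_0)$ and $B$, and uses maximality together with minimal sufficiency to produce a normal norm-one projection $\E$ onto $\pi_0(\A_0)$; Tomiyama's theorem and the minimality of the Stinespring representation then force $B = \E(B) \in \pi_0(\A_0)$, so the commutant condition holds for the minimal sufficient representative itself. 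Your route instead borrows the Zorn's-lemma construction that the paper reserves for Corollary~\ref{coro:maximalub}: you embed $\A_\oM$ into a maximal abelian subalgebra $\A_1$ of $\cL(\cK)$, observe that the compression $\Lambda_1$ inherits a minimal Stinespring representation (because $\A_\oM V \Hin$ already spans $\cK$), and then use maximality of $\oM$ via Lemma~\ref{lemm:abelout} to pull $\Lambda_1$ back into the concatenation class of $\Gamma^\oM$. Both arguments are valid; yours is shorter and avoids conditional expectations and Tomiyama's theorem entirely, at the cost of exhibiting a possibly different representative $\Gamma_0$ (a maximal abelian extension rather than the canonical minimal sufficient channel). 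The paper's version yields the slightly stronger structural fact that the minimal sufficient commutative representative is already maximal abelian in its own Stinespring dilation, which is the form used in the rank-one interpretation of Remark~\ref{rem:maximal}.
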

\begin{proof}
(i)$\implies$(ii). 
Assume (i).
From Ref.~\onlinecite{kuramochi2017minimal},
there exist a commutative von Neumann algebra $\A_0 $
and normal channel $\Gamma_0 \in \nchset{\A_0}{\LHin}$
such that $\Gamma_0 \eqncp \Gamma^\oM$ and
$\Gamma_0$ is minimal sufficient.
Let $(\cK_0 , \pi_0 , V_0)$ be a minimal Stinespring representation of 
$\Gamma_0$
and let $\widetilde{\Gamma}_0 \in \nchset{\pi_0 (\A_0)}{\LHin}$
be the normal channel defined by
$\widetilde{\Gamma}_0 (A) := V_0^\ast A V_0 $
$(A \in \pi_0 (\A_0)) .$
Since $\pi_0$ is a normal isomorphism from $\A_0$
onto $\pi_0 (\A_0) $
due to the faithfulness of $\Gamma_0 ,$
$\Gamma_0$ and $\widetilde{\Gamma}_0$ are normally isomorphic,
and therefore $\widetilde{\Gamma}_0$ is also minimal sufficient.
Now we show $\pi_0 (\A_0)^\prime = \pi_0 (\A_0) ,$
which implies the condition~(ii).
Take an arbitrary self-adjoint element $B \in \pi_0 (\A_0)^\prime$
and let $\widetilde{\A}_B$ denote the commutative von Neumann algebra 
generated by $\pi_0 (\A_0) \cup \{ B \} .$
We define a normal channel 
$\Lambda_B \in \nchset{\widetilde{\A}_B}{\LHin}$
by
$\Lambda_B (C) := V_0^\ast C V_0$
$(C \in \widetilde{\A}_B ) .$
Since $\pi_0 (\A_0) \subseteq \widetilde{\A}_B ,$
we have $\Gamma^\oM \eqncp \widetilde{\Gamma}_0 \concp \Lambda_B .$
On the other hand, 
since the outcome space of $\Lambda_B$ is commutative,
Lemma~\ref{lemm:abelout} implies $\Lambda_B \eqncp \Gamma^\oN$
for some POVM $\oN$ on $\Hin .$
Thus from the maximality of $\oM ,$
we obtain $\Lambda_B \concp \widetilde{\Gamma}_0 .$
Hence there exists a normal channel 
$\E \in \nchset{\widetilde{\A}_B}{\pi_0 (\A_0)}$
such that $\Lambda_B =  \widetilde{\Gamma}_0 \circ \E .$
Then we have 
$\widetilde{\Gamma}_0 (A) = \Lambda_B (A) = \widetilde{\Gamma}_0 \circ \E (A) $
for all $A \in \pi_0 (\A_0) .$
From the minimal sufficiency of $\widetilde{\Gamma}_0 ,$
this implies $\E (A) = A $ for all $A \in \pi_0 (\A_0) .$
Therefore $\E$ is a normal norm-$1$ projection from $\widetilde{\A}_B$ onto 
the subalgebra $\pi_0 (\A_0) .$
From $\Lambda_B = \widetilde{\Gamma}_0 \circ \E ,$
we obtain
$
	V_0^\ast C V_0 
	=
	V_0^\ast \E ( C ) V_0 
$
$(C \in \widetilde{\A}_B) .$
Hence, for each $A_1 , A_2 \in \pi_0 (\A_0)$
and
each $\psi_1 , \psi_2 \in \Hin ,$
we have
\begin{align} 
	\braket{A_1 V_0 \psi_1 | \E (B) A_2 V_0 \psi_2}
	&=
	\braket{\psi_1 |V_0^\ast A_1^\ast \E (B) A_2 V_0 \psi_2}
	\notag
	\\
	&=
	\braket{\psi_1 |V_0^\ast \E (  A_1^\ast B A_2 )  V_0 \psi_2}
	\label{eq:tomiyama}
	\\
	&=
	\braket{\psi_1 |V_0^\ast   A_1^\ast B A_2   V_0 \psi_2}
	\notag
	\\
	&=
	\braket{A_1 V_0 \psi_1 | B A_2 V_0 \psi_2} ,
	\notag
\end{align}
where we have used Tomiyama's theorem
(e.g.\ Ref.~\onlinecite{brown2008c}, Theorem~1.5.10)
in deriving the equality~\eqref{eq:tomiyama}.
From the minimality of the Stinespring representation
$(\cK_0 , \pi_0 , V_0),$
this implies 
$B = \E (B)  \in \pi_0 (\A_0) .$
Therefore we obtain $\pi_0 (\A_0) = \pi_0 (\A_0)^\prime .$

(ii)$\implies$(iii).
Assume (ii). Then from the definition of 
the commutant conjugate channel and Corollary~\ref{coro:cc},
we have 
$\Gamma_0 \eqncp \Gamma_0^{c} .$
Since we have $(\Gamma^\oM )^c \eqncp \Gamma_0^c$
from Theorem~\ref{theo:equiv},
we obtain $\Gamma^\oM \eqncp (\Gamma^\oM)^c .$

(iii)$\implies$(i).
Assume (iii) and take an arbitrary POVM $\oN$ on
$\Hin$ satisfying
$\oM \postp \oN .$
Then from 
Lemma~\ref{lemm:qcconj}, 
Theorem~\ref{theo:equiv},
and $\Gamma^\oM \concp \Gamma^\oN ,$
we have
$\Gamma^{\oN} \concp (\Gamma^\oN)^c \concp (\Gamma^\oM)^c \eqncp \Gamma^\oM ,$
which implies $\oN \postp \oM .$
Therefore $\oM$ is maximal.
\end{proof}
\begin{rem}
\label{rem:maximal}
In Ref.~\onlinecite{1751-8121-46-2-025303} (Section~4),
the concept of rank-$1$ POVM is introduced by using direct integrals.
Then, for separable $\Hin ,$ the equivalence (i)$\iff$(ii) in 
Theorem~\ref{theo:maximalpovm}
can be rephrased as
\lq\lq{}$\oM$ is maximal if and only if $\oM$ is of rank-$1$\rq\rq{},
which was first obtained in Ref.~\onlinecite{ISI:000236487200002}.
The equivalence (i)$\iff$(iii) in Theorem~\ref{theo:maximalpovm}
for discrete $\oM$ was first proved by Heinosaari and Miyadera
(T. Heinosaari and T. Miyadera, private communications).
\end{rem}
The following corollary was essentially obtained in Ref.~\onlinecite{1751-8121-46-2-025303} (Section~4) for separable $\Hin .$
\begin{coro}
\label{coro:maximalub}
For any POVM $(\Omega , \Sigma , \oM)$ on $\Hin ,$
there exists a maximal POVM $\oN$ on $\Hin$
satisfying $\oM \postp \oN .$
\end{coro}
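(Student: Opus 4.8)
The plan is to produce the desired maximal POVM $\oN$ by enlarging the outcome algebra of $\Gamma^\oM$ to a maximal abelian subalgebra of its dilation space and then appealing to the characterization of maximality in Theorem~\ref{theo:maximalpovm}. First I would fix a minimal Stinespring representation $(\cK , \pi , V)$ of the pre-QC channel $\gamma^\oM$ and set $\A_\oM := \pi(B(\Omega , \Sigma))^{\prime\prime}$, the commutative outcome von Neumann algebra of $\Gamma^\oM$, so that $\Gamma^\oM(A) = V^\ast A V$ for $A \in \A_\oM$ (Definition~\ref{defi:qcch}, Remark~\ref{rem:qcch}). Because $\pi(B(\Omega , \Sigma)) V \Hin$ is dense in $\cK$ and $\pi(B(\Omega , \Sigma)) \subseteq \A_\oM$, the triple $(\cK , \id_{\A_\oM} , V)$ is itself a minimal Stinespring representation of $\Gamma^\oM$.

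Next I would enlarge $\A_\oM$ to a maximal abelian von Neumann subalgebra $\B$ of $\LK$; such a $\B$ exists by a routine Zorn's-lemma argument, since the weak closure of the union of a chain of abelian algebras containing $\A_\oM$ is again abelian, and maximality forces $\B^\prime = \B$. I would then define the normal channel $\Lambda_\B \in \nchset{\B}{\LHin}$ by $\Lambda_\B(C) := V^\ast C V$. Because $\A_\oM \subseteq \B$, the closed linear span of $\B V \Hin$ already contains $\A_\oM V \Hin = \cK$, so $(\cK , \id_\B , V)$ is a minimal Stinespring representation of $\Lambda_\B$, and it satisfies $\id_\B(\B)^\prime = \B^\prime = \B = \id_\B(\B)$. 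Since $\B$ is commutative, Lemma~\ref{lemm:abelout} furnishes a POVM $\oN$ on $\Hin$ with $\Lambda_\B \eqncp \Gamma^\oN$; then $\Gamma^\oN \eqncp \Lambda_\B$ together with the minimal Stinespring representation just constructed verifies condition~(ii) of Theorem~\ref{theo:maximalpovm}, whence $\oN$ is maximal.

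Finally I would check $\oM \postp \oN$. The inclusion $\iota \colon \A_\oM \hookrightarrow \B$ is a normal unital $\ast$-homomorphism and hence a normal channel, and $\Lambda_\B \circ \iota (A) = V^\ast A V = \Gamma^\oM(A)$ for every $A \in \A_\oM$, so $\Gamma^\oM = \Lambda_\B \circ \iota \concp \Lambda_\B \eqncp \Gamma^\oN$, i.e.\ $\oM \postp \oN$ in the sense of Definition~\ref{defi:postp}. The only genuinely non-formal ingredients are the existence of the maximal abelian $\B$ and the observation that passing from $\A_\oM$ to $\B$ does not destroy minimality of the Stinespring representation; the latter is immediate precisely because $\A_\oM V \Hin$ already spans $\cK$, so enlarging the representing algebra introduces no new obstacle, and the whole argument reduces to the already-established characterization of maximal POVMs.
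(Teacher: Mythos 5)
Your proposal is correct and follows essentially the same route as the paper: extend the (commutative) outcome algebra of $\Gamma^\oM$ inside $\cL(\cK)$ to a maximal abelian subalgebra via Zorn's lemma, compress by $V$ to get a normal channel dominating $\Gamma^\oM$, invoke Lemma~\ref{lemm:abelout} to realize it as a QC channel $\Gamma^\oN$, and verify condition~(ii) of Theorem~\ref{theo:maximalpovm}. Your explicit check that enlarging the algebra preserves minimality of the Stinespring representation is a detail the paper leaves implicit, but the argument is the same.
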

\begin{proof}
Let 
$\A_\oM$ be the outcome space of the QC channel $\Gamma^\oM$
and let
$(\cK  , \pi , V )$
be a minimal Stinespring representation of $\Gamma^\oM .$
Then, by Zorn's lemma, 
$\pi (\A_\oM)$ is contained in 
a commutative von Neumann algebra 
$  \A_0  $ on $\cK $
maximal with respect to set inclusion $\subseteq .$
From the maximality of $\A_0$ we have
$\A_0^\prime = \A_0 .$
Now we define $\Gamma_0 \in \nchset{\A_0 }{\LHin}$
by
$\Gamma_0 (B) := V^\ast B V$
$(B \in \A_0)$
and take a POVM $\oN$
on $\Hin$ satisfying $\Gamma^\oN \eqncp \Gamma_0 .$
Then $\oN$ satisfies the condition~(ii) of Theorem~\ref{theo:maximalpovm},
and therefore is a maximal POVM.
Furthermore, we have $\Gamma^\oM \concp \Gamma_0 \eqncp \Gamma^\oN$
from the definition of $\Gamma_0 .$
Thus $\oM \postp \oN .$
\end{proof}

\section{Normal compatibility and tensor conjugate channel} \label{sec:normal}
In this section, we consider the normal compatibility relation $\ncomp .$

\begin{defi}
\label{defi:tensorst}
Let $\M$ be a von Neumann algebra acting on a Hilbert space $\Hout$, 
let $\Hin$ be a Hilbert space,
and let $\Lambda \in \nchset{ \M }{\LHin}$
be a normal channel.
A pair $(\cK , V)$ is called a \emph{tensor Stinespring representation} of $\Lambda$
if $\cK$ is a Hilbert space and $V \colon \Hin \to \Hout \otimes \cK$ is an isometry
such that
$
	\Lambda (A)
	=
	V^\ast
	(A \otimes \1_\cK)
	V 
$
for all
$ A \in \M  .$
\end{defi}
A tensor Stinespring representations $(\cK , V)$ can be regarded as 
a special Stinespring representation 
$(\Hout \otimes \cK , \pi_\cK , V) ,$
where
$\pi_\cK (A) := A \otimes \1_\cK .$

\begin{defi}
\label{defi:conj}
Let 
$\M,$ 
$\Hout ,$ 
$\Hin , $
and $\Lambda \in \nchset{ \M }{\LHin}$
be the same as in Definition~\ref{defi:tensorst}.
For a tensor Stinespring representation $(\cK , V)$ of $\Lambda ,$
the channel $\Gamma \in \chset{ \LK }{ \LHin}$ 
defined by
$
\Gamma(B)
:=
V^\ast( \1_{\Hout} \otimes B) V
$
$(B \in \LK)$
is called a 
\emph{tensor conjugate channel} of $\Lambda .$
A tensor conjugate channel is, by definition,
a fully quantum channel.
\end{defi}

Since we have $\cL (\Hout \otimes \cK) = \cL (\Hout) \ntensor \cL (\cK ) ,$
the following corollary is immediate from the definitions of
the tensor conjugate channel and the normal compatibility.
\begin{coro}
\label{coro:tconjcomp}
Let $\Lambda$ be a normal channel with a fully quantum input space
and let $\Gamma $ be a tensor conjugate channel of $\Lambda.$
Then we have $\Lambda \ncomp \Gamma .$
\end{coro}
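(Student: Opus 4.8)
The plan is to realize the required normal joint channel directly as the isometric compression furnished by the tensor Stinespring representation that defines $\Gamma .$ First I would fix a tensor Stinespring representation $(\cK , V)$ of $\Lambda ,$ so that $V \colon \Hin \to \Hout \otimes \cK$ is an isometry with $\Lambda (A) = V^\ast (A \otimes \1_\cK) V$ for all $A \in \M$ and $\Gamma (B) = V^\ast (\1_{\Hout} \otimes B) V$ for all $B \in \LK .$

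The key observation is the identification $\cL (\Hout \otimes \cK) = \cL (\Hout) \ntensor \LK$ recorded just above. Since the normal tensor product $\M \ntensor \LK$ is by definition the von Neumann algebra generated by $\M \atensor \LK$ inside $\cL (\Hout \otimes \cK) ,$ it is a von Neumann subalgebra of $\cL (\Hout \otimes \cK) .$ On this subalgebra I would define
\[
	\Theta \in \nchset{\M \ntensor \LK}{\LHin} ,
	\quad
	\Theta (X) := V^\ast X V
	\quad
	(X \in \M \ntensor \LK) .
\]
Because $V$ is an isometry, $\Theta$ is unital, since $\Theta (\1) = V^\ast V = \1_{\Hin} ;$ it is completely positive as an isometric compression; and it is normal, since compression by a fixed bounded operator is ultraweakly continuous. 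Hence $\Theta$ is indeed a normal channel.

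Finally I would verify the two marginal conditions of Definition~\ref{defi:chcomp}.2.(ii). Using $\1_\M = \1_{\Hout} ,$ for each $A \in \M$ and each $B \in \LK$ we have
\begin{gather*}
	\Theta (A \otimes \1_{\LK}) = V^\ast (A \otimes \1_\cK) V = \Lambda (A) , \\
	\Theta (\1_\M \otimes B) = V^\ast (\1_{\Hout} \otimes B) V = \Gamma (B) ,
\end{gather*}
so $\Theta$ is a normal joint channel of $\Lambda$ and $\Gamma ,$ which proves $\Lambda \ncomp \Gamma .$ There is essentially no obstacle here: the corollary is immediate once one records the identification $\cL (\Hout \otimes \cK) = \cL (\Hout) \ntensor \LK$ and recognizes that $X \mapsto V^\ast X V$ restricts to a normal channel on the von Neumann subalgebra $\M \ntensor \LK .$ The only point to state with care is the normality of $\Theta ,$ which follows from the ultraweak continuity of isometric compression.
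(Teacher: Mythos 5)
Your proof is correct and is exactly the argument the paper intends: the paper dispenses with the corollary as ``immediate from the definitions'' after recording the identification $\cL (\Hout \otimes \cK) = \cL (\Hout) \ntensor \cL (\cK)$, and you have simply written out the compression $X \mapsto V^\ast X V$ restricted to $\M \ntensor \cL(\cK)$ together with the marginal and normality checks. No gaps.
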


A tensor Stinespring representation always exists for any normal channel 
as the following proposition shows.

\begin{prop}
\label{prop:tensorrep}
Let 
$\M,$ 
$\Hout ,$ 
$\Hin , $
and $\Lambda \in \nchset{ \M }{\LHin}$
be the same as in Definition~\ref{defi:tensorst}.
Then there exists a tensor Stinespring representation of $\Lambda .$
\end{prop}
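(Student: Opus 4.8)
The plan is to reduce the existence of a tensor Stinespring representation to the amplification theorem for normal representations of von Neumann algebras. First I would take a minimal Stinespring representation $(\cK_0 , \pi , W)$ of $\Lambda ,$ so that $\pi \colon \M \to \cL (\cK_0)$ is a representation, $W \colon \Hin \to \cK_0$ is an isometry, and $\Lambda (A) = W^\ast \pi (A) W$ for all $A \in \M .$ Since $\Lambda$ is normal and the representation is minimal, $\pi$ is a normal representation of $\M$ (as recalled in Section~\ref{sec:prel}). The task is then to realize this normal representation spatially inside an amplification $A \mapsto A \otimes \1_\cK$ of the defining action of $\M$ on $\Hout .$

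The key step is the following structural fact: because $\M$ acts on $\Hout$ and $\pi$ is normal, there exist a Hilbert space $\cK$ and an isometry $u \colon \cK_0 \to \Hout \otimes \cK$ intertwining $\pi$ with the amplified defining representation, i.e.\ $u \pi (A) = (A \otimes \1_\cK) u$ for all $A \in \M$ (equivalently, the range projection $u u^\ast$ lies in the commutant of $\Set{ A \otimes \1_\cK | A \in \M }$). This is precisely the amplification theorem asserting that every normal representation of $\M$ is unitarily equivalent to a subrepresentation of an amplification of the identity representation (Ref.~\onlinecite{takesakivol1}). Multiplying the intertwining relation on the left by $u^\ast$ and using $u^\ast u = \1_{\cK_0}$ gives $\pi (A) = u^\ast (A \otimes \1_\cK) u$ for every $A \in \M .$

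Finally I would set $V := u W \colon \Hin \to \Hout \otimes \cK .$ As a composition of two isometries $V$ is an isometry, and for all $A \in \M$ one computes
\[
	V^\ast (A \otimes \1_\cK) V
	=
	W^\ast u^\ast (A \otimes \1_\cK) u W
	=
	W^\ast \pi (A) W
	=
	\Lambda (A) ,
\]
so that $(\cK , V)$ is the desired tensor Stinespring representation of $\Lambda .$

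I expect the main obstacle to be the second step, namely the amplification theorem realizing the abstract normal representation $\pi$ as a subrepresentation of an amplification of the concrete action of $\M$ on $\Hout ;$ once the intertwining isometry $u$ is in hand, the construction of $V$ and its verification are purely formal. This amplification theorem is standard (Ref.~\onlinecite{takesakivol1}); if a self-contained argument is preferred, it can be proved directly from the fact that, up to unitary equivalence, every normal representation of a von Neumann algebra is obtained from its defining representation by amplification and compression by a projection in the commutant.
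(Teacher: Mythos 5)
Your proposal is correct and follows essentially the same route as the paper's proof: the paper likewise starts from a minimal Stinespring representation of $\Lambda$ and invokes Theorem~IV.5.5 of Ref.~\onlinecite{takesakivol1} (every normal representation of $\M$ is, up to spatial isomorphism, the compression of an amplification $A \mapsto A \otimes \1_\cK$ of the defining representation by a projection in the commutant), then composes the resulting intertwining isometry with the Stinespring isometry to obtain $V .$ The only cosmetic difference is that the paper phrases the amplification theorem via a projection $E$ and a unitary onto $E (\Hout \otimes \cK) ,$ whereas you package the same data as a single intertwining isometry $u$ with $u u^\ast = E .$
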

\begin{proof}
From Stinespring's dilation theorem, we can take a minimal 
Stinespring representation $(\cK_0 , \pi_0 , V_0 )$
of $\Lambda .$
According to Theorem~IV.5.5 of Ref.~\onlinecite{takesakivol1},
there exist a Hilbert space $\cK , $
a projection $E \in \M^\prime \ntensor \LK = (\M \ntensor \mathbb{C} \1_\cK)^\prime ,$
and an isometry 
$U \colon E (\Hout \otimes \cK) \to \cK_0$
such that
\begin{equation}
	\pi_0 (A)
	=
	U 
	(A \otimes \1_\cK)
	E
	U^\ast
	\quad
	(\forall A \in \M) .
	\notag
\end{equation}
Then
$V := EU^\ast V_0  $
is an isometry such that
$\Lambda (A) =  V^\ast (A \otimes \1_\cK) V$
$(\forall A \in \M) .$
Thus $(\cK , V)$ is a tensor Stinespring representation of $\Lambda$.
\end{proof}

Now let us consider fully quantum channels.
Let $\Lambda \in \nchset{\LHout}{\LHin}$ be a fully quantum channel
with a minimal Stinespring representation $(\cK , \pi , V) .$
Then, since $\pi$ is normal, $\cK$ and $\pi$ can be taken 
such that $\cK = \mathcal{H}_{\mathrm{out}} \otimes \cK^\prime$
and $\pi (A) = A \otimes \1_{\cK^\prime}$
(e.g.\ Ref.~\onlinecite{davies1976quantum}, Lemma~9.2.2).
Thus we have $\Lambda (A) = V^\ast (A \otimes \1_{\cK^\prime})V$
$(A \in \LHout)$ and
the concepts of minimal and tensor Stinespring representations coincide in this case. 
Furthermore, for a fully quantum channel, 
tensor and commutant conjugate channels coincide up to concatenation equivalence
as in the following proposition.
\begin{prop}
\label{prop:pq}
Let $\Hin$ and $\Hout$ be Hilbert spaces 
and let 
$\Lambda \in \nchset{ \LHout }{ \LHin  }$ 
be a fully quantum channel.
Then any tensor and commutant conjugate channels of $\Lambda$
are normally concatenation equivalent. 
\end{prop}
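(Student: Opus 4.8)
The plan is to reduce the whole statement to Proposition~\ref{prop:cconj1} by showing that an arbitrary tensor conjugate channel of $\Lambda$ is normally isomorphic to a commutant conjugate channel of $\Lambda$. First I would fix a tensor Stinespring representation $(\cK , V)$ of $\Lambda$ (which exists by Proposition~\ref{prop:tensorrep}) and let $\Gamma \in \chset{\LK}{\LHin}$, $\Gamma(B) = V^\ast(\1_{\Hout} \otimes B)V$, be the associated tensor conjugate channel. As observed just after Definition~\ref{defi:tensorst}, the triple $(\Hout \otimes \cK , \pi_\cK , V)$ with $\pi_\cK(A) := A \otimes \1_\cK$ is a Stinespring representation of $\Lambda$; it is unital, hence a genuine non-degenerate representation, so Definition~\ref{defi:cconj} applies to it even though it need not be minimal.

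The key step is the commutant computation. Since the output space is fully quantum, $\LHout$ is a factor acting on $\Hout$ with $(\LHout)^\prime = \cmplx\1_{\Hout}$, and the commutation theorem for the normal tensor product yields $\pi_\cK(\LHout)^\prime = (\LHout \ntensor \cmplx\1_\cK)^\prime = \cmplx\1_{\Hout} \ntensor \LK = \1_{\Hout} \otimes \LK$. Consequently the commutant conjugate channel $\Psi$ of $\Lambda$ associated with $(\Hout \otimes \cK , \pi_\cK , V)$---the normal channel on $\1_{\Hout} \otimes \LK$ given by $\1_{\Hout} \otimes B \mapsto V^\ast(\1_{\Hout} \otimes B)V$---equals $\Gamma$ precomposed with the normal $\ast$-isomorphism $j \colon \LK \ni B \mapsto \1_{\Hout} \otimes B$. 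Thus $\Gamma = \Psi \circ j$ exhibits $\Gamma$ as normally isomorphic to $\Psi$, whence $\Gamma \eqncp \Psi$. By Proposition~\ref{prop:cconj1}, $\Psi$ is normally concatenation equivalent to every commutant conjugate channel of $\Lambda$; since $(\cK , V)$ was arbitrary, all tensor and all commutant conjugate channels of $\Lambda$ are mutually normally concatenation equivalent.

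The main obstacle is the commutant identity $\pi_\cK(\LHout)^\prime = \1_{\Hout} \otimes \LK$: it is precisely here that the fully quantum (factor) hypothesis on the output space enters, and it must be justified via the commutation theorem rather than by a naive manipulation of tensor products. Everything else---that $\pi_\cK$ is a unital representation, that $j$ is a normal $\ast$-isomorphism, and that normal isomorphism implies normal concatenation equivalence (take $j$ and $j^{-1}$ as the witnessing channels)---is routine.
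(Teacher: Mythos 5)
Your argument is correct and is essentially the paper's own proof: both identify $(\Hout\otimes\cK,\pi_\cK,V)$ as a Stinespring representation of $\Lambda$, use the commutant identity $\pi_\cK(\cL(\Hout))^\prime=\cmplx\1_{\Hout}\ntensor\cL(\cK)$ to see that the associated commutant conjugate channel differs from the tensor conjugate channel only by the normal isomorphism $B\mapsto\1_{\Hout}\otimes B$, and then invoke Proposition~\ref{prop:cconj1}. Your explicit appeal to the commutation theorem simply makes precise a step the paper states without justification.
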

\begin{proof}
Let $(\cK_1 , V_1)$ be a tensor Stinespring representation of $\Lambda $
and
let $\Gamma \in \chset{ \cL (\cK_1) }{\LHin}$
be the corresponding tensor conjugate channel of $\Lambda. $
Then $(\Hout \otimes \cK_1 , \pi , V_1)$ is a Stinespring representation of $\Lambda ,$
where $\pi (A) := A \otimes \1_{\cK_1} $ 
$(A \in \LHout) ,$
and the corresponding commutant conjugate channel is the map
$\Lambda^c \colon \pi(\LHout)^\prime =  \mathbb{C} \1_{\Hout} \ntensor \cL (\cK_1) \to \LHin   $
given by
\begin{equation*}
	\Lambda^c (\1_{\Hout} \otimes B ) = V_1^\ast ( \1_{\Hout}  \otimes B  ) V_1,
	\quad
	(B \in \cL (\cK_1)) .
\end{equation*}
Then we have $\Gamma = \Lambda^c \circ \pi_1$
and $\Lambda^c = \Gamma \circ \pi_1^{-1},$
where 
\begin{equation*}
	\pi_1 \colon \LKone \ni B \mapsto 
	\1_{\Hout} \otimes B 
	\in \mathbb{C} \1_{\Hout} \ntensor \LKone
\end{equation*}
is a normal isomorphism from
$\LKone$
onto
$\mathbb{C} \1_{\Hout} \ntensor \LKone .$
Thus we have $\Lambda^c \eqncp \Gamma .$
Since Proposition~\ref{prop:cconj1} assures that 
any commutant conjugate channels of $\Lambda$ are mutually equivalent,
this proves the assertion.
\end{proof}

The normal and \cstar-compatibility relations 
for a fully quantum channel coincide as shown in the following corollary.

\begin{coro}
\label{coro:pqconj}
Let $\cH_1 $ and $\Hin$ be Hilbert spaces,
let $\N$ be a von Neumann algebra,
let $\Lambda \in \nchset{\LHone}{ \LHin}$ be a fully quantum channel,
and let $\Gamma \in \nchset{\N}{\LHin}$ be a normal channel.
Then the following conditions are equivalent.
\begin{enumerate}[(i)]
\item
$\Lambda \ncomp \Gamma ;$
\item
$\Lambda \maxcomp \Gamma ;$
\item
$\Lambda \concp \Gamma^c ;$
\item
$\Gamma \concp \Lambda^c .$
\end{enumerate}
\end{coro}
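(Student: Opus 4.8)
The plan is to observe that almost all of the asserted equivalences follow immediately from results already established, so that the only genuinely new content is the passage from $\max$-compatibility back to \emph{normal} compatibility, which is exactly where the fully quantum structure of $\Lambda$ enters. First I would dispose of the routine implications. The implication (i)$\implies$(ii) is immediate from the chain $\ncomp \implies \mincomp \implies \bcomp \implies \maxcomp$ of Lemma~\ref{lemm:cimplication}.2. Since both $\Lambda$ and $\Gamma$ are normal channels with the fully quantum input space $\LHin$, Theorem~\ref{theo:univbmax} applies verbatim and yields (ii)$\iff$(iii)$\iff$(iv), upon identifying the present conditions $\Lambda \maxcomp \Gamma$, $\Lambda \concp \Gamma^c$, and $\Gamma \concp \Lambda^c$ with conditions (ii), (v), and (iii) of that theorem. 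Thus it remains only to close the cycle, e.g.\ by proving (iv)$\implies$(i).

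For this step I would exploit that $\Lambda$ is fully quantum. By Proposition~\ref{prop:tensorrep} the channel $\Lambda$ admits a tensor Stinespring representation, and I would fix a corresponding tensor conjugate channel $\Gamma_0$ of $\Lambda$. Proposition~\ref{prop:pq} then gives $\Lambda^c \eqncp \Gamma_0$, while Corollary~\ref{coro:tconjcomp} gives the normal compatibility $\Lambda \ncomp \Gamma_0$. Assuming (iv), namely $\Gamma \concp \Lambda^c$, and combining it with $\Lambda^c \eqncp \Gamma_0$, I obtain $\Gamma \concp \Gamma_0$.

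Finally I would transfer normal compatibility along the concatenation preorder using Lemma~\ref{lemm:ccconsis}.2.(i): taking $\Lambda_1 = \Lambda_2 = \Lambda$, $\Gamma_1 = \Gamma_0$, and $\Gamma_2 = \Gamma$, the hypotheses $\Lambda \concp \Lambda$ and $\Gamma \concp \Gamma_0$ together with $\Lambda \ncomp \Gamma_0$ yield $\Lambda \ncomp \Gamma$, which is (i), completing the cycle. The point deserving care is that the essential input---the existence of a tensor conjugate channel that is both normally compatible with $\Lambda$ and concatenation equivalent to $\Lambda^c$---holds precisely because $\Lambda$ is fully quantum; for a general outcome algebra the implication (ii)$\implies$(i) fails, as the QC examples of the later sections show. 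So the hard part is conceptual rather than computational: recognizing that Proposition~\ref{prop:pq} together with Corollary~\ref{coro:tconjcomp} supply exactly the missing normal joint channel, after which Lemma~\ref{lemm:ccconsis} does the remaining bookkeeping.
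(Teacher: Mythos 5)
Your proof is correct and follows essentially the same route as the paper: the routine implications via Lemma~\ref{lemm:cimplication} and Theorem~\ref{theo:univbmax}, and then (iv)$\implies$(i) from $\Lambda \ncomp \Lambda^c$, obtained by combining Corollary~\ref{coro:tconjcomp} with Proposition~\ref{prop:pq} and transferring along the preorder with Lemma~\ref{lemm:ccconsis}. You merely spell out the bookkeeping that the paper leaves implicit.
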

\begin{proof}
The implications 
(i)$\implies$(ii)$\iff$(iii)$\iff$(iv)
follow from Lemma~\ref{lemm:cimplication} and 
Theorem~\ref{theo:univbmax}.
(iv)$\implies$(i) follows from 
$\Lambda \ncomp \Lambda^c ,$
which is immediate from Corollary~\ref{coro:tconjcomp} and 
Proposition~\ref{prop:pq}.
\end{proof}

The following theorem states that 
for a given normal channel $\Lambda, $
the class of normally compatible channels 
is upper bounded by 
the class of tensor conjugate channels of $\Lambda .$

\begin{theo}
\label{theo:tconj}
Let $\M_1$ be a von Neumann algebra acting on a Hilbert space $\cH_1 $
and let $\Lambda_1 \in \nchset{ \M_1 }{ \LHin }$ be a normal channel.
Then for any normal channel $\Lambda_2$ 
satisfying $\Lambda_1 \ncomp \Lambda_2 ,$
there exists a tensor conjugate channel $\Gamma$ of $\Lambda_1$ such that
$\Lambda_2 \concp \Gamma.$
\end{theo}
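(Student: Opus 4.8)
The plan is to feed the normal joint channel directly into the tensor Stinespring representation of Proposition~\ref{prop:tensorrep}, rather than constructing a tensor Stinespring representation of $\Lambda_1$ by hand. Write $\Lambda_2\in\nchset{\N}{\LHin}$ with $\N$ acting on a Hilbert space $\cH_2$, so that $\M_1\ntensor\N$ acts on $\cH_1\otimes\cH_2$. Since $\Lambda_1\ncomp\Lambda_2$, there is a normal joint channel $\Theta\in\nchset{\M_1\ntensor\N}{\LHin}$ with $\Theta(A\otimes\1_\N)=\Lambda_1(A)$ and $\Theta(\1_{\M_1}\otimes B)=\Lambda_2(B)$ for $A\in\M_1$, $B\in\N$.

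First I would apply Proposition~\ref{prop:tensorrep} to the normal channel $\Theta$ itself, whose outcome von Neumann algebra $\M_1\ntensor\N$ acts on $\cH_1\otimes\cH_2$. This produces a tensor Stinespring representation $(\cK,W)$ of $\Theta$: a Hilbert space $\cK$ and an isometry $W\colon\Hin\to(\cH_1\otimes\cH_2)\otimes\cK$ with $\Theta(X)=W^*(X\otimes\1_\cK)W$ for all $X\in\M_1\ntensor\N$. The key step is then to regroup the ancilla. Setting $\cK':=\cH_2\otimes\cK$ and $V:=W\colon\Hin\to\cH_1\otimes\cK'$, the first marginal gives $\Lambda_1(A)=\Theta(A\otimes\1_\N)=W^*(A\otimes\1_{\cH_2}\otimes\1_\cK)W=V^*(A\otimes\1_{\cK'})V$, so $(\cK',V)$ is a tensor Stinespring representation of $\Lambda_1$. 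Let $\Gamma\in\chset{\cL(\cK')}{\LHin}$, defined by $\Gamma(Y):=V^*(\1_{\cH_1}\otimes Y)V$, be the associated tensor conjugate channel of $\Lambda_1$.

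It then remains to exhibit $\Lambda_2$ as a normal concatenation of $\Gamma$. I would define the normal unital $*$-homomorphism $\Xi\colon\N\to\cL(\cK')$ by $\Xi(B):=B\otimes\1_\cK$, which is a normal channel since $\cK'=\cH_2\otimes\cK$. For every $B\in\N$ one computes $\Gamma\circ\Xi(B)=V^*(\1_{\cH_1}\otimes B\otimes\1_\cK)V=W^*(\1_{\cH_1}\otimes B\otimes\1_\cK)W=\Theta(\1_{\M_1}\otimes B)=\Lambda_2(B)$, using the second marginal of $\Theta$. Hence $\Lambda_2=\Gamma\circ\Xi\concp\Gamma$, which is the desired conclusion.

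The main obstacle is conceptual rather than analytic: one must recognize that the right tensor Stinespring representation of $\Lambda_1$ is the one \emph{inherited} from the joint channel $\Theta$, not an arbitrarily chosen one. A tensor conjugate channel built from an unrelated tensor Stinespring representation of $\Lambda_1$ carries the full commutant $\M_1'\ntensor\LK$ in its dilation, and the representation of $\N$ need not factor through the ancilla algebra $\LK$; by contrast, passing through $\Theta$ forces the $\N$-marginal to live inside $\cL(\cK')$ via the elementary embedding $B\mapsto B\otimes\1_\cK$, after which the concatenation $\Lambda_2=\Gamma\circ\Xi$ is immediate and requires no further estimates.
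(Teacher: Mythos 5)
Your proposal is correct and is essentially the paper's own proof: both take the normal joint channel, apply Proposition~\ref{prop:tensorrep} to it, regroup the ancilla as $\cH_2\otimes\cK$ to read off a tensor Stinespring representation of $\Lambda_1$, and factor $\Lambda_2$ through the resulting tensor conjugate channel via $B\mapsto B\otimes\1_\cK$. Only the notation differs.
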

\begin{proof}
Let $\M_2$ be the outcome space of $\Lambda_2 $ acting on a Hilbert space $\cH_2 ,$
and let $\Lambda_{12} \in \nchset{ \M_1 \ntensor \M_2  }{\LHin}$
be a normal joint channel of $\Lambda_1$ and $\Lambda_2.$
Proposition~\ref{prop:tensorrep} implies that 
there exists a tensor Stinespring representation $(\cK , V)$ of $\Lambda_{12} . $
Then for each $A \in \M_2 $ we have
\begin{align*}
	\Lambda_2 (A)
	&=
	\Lambda_{12} (\1_{\cH_1} \otimes A)
	=
	V^\ast 
	(\1_{\cH_1}  \otimes A \otimes \1_{\cK} )
	V.
\end{align*}
Thus if we define a normal channel 
$\Gamma \in \nchset{ \cL (\cH_2 \otimes \cK )  }{\LHin}$
by
$ \Gamma (B)  :=  V^\ast (\1_{\cH_1}   \otimes B)   V $
$(B \in \cL (\cH_2 \otimes \cK ) ) ,$ 
$\Gamma $ is a tensor conjugate channel of $\Lambda_1$ and satisfies
$\Lambda_2 = \Gamma \circ \Theta ,$
where 
$\Theta \in \nchset{\M_2}{ \cL (\cH_2 \otimes \cK )}$
is defined by
$\Theta (A)  :=  A \otimes \1_{\cK} $
$(A \in \M_2 ) .$
Hence we have $\Lambda_2 \concp \Gamma .$ 
\end{proof}

\begin{coro}
\label{coro:fq}
Let $\Lambda \in \nchset{\M}{\LHin}$ be a normal channel
with the outcome von Neumann algebra $\M .$
Then
$\Lambda \ncomp \Lambda^c$
if and only if
$\Lambda$ is normally concatenation equivalent to
a fully quantum channel.
\end{coro}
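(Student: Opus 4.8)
The plan is to prove the two implications separately. For the \emph{if} direction I will first verify the statement for fully quantum channels and then transfer it along $\eqncp$; for the \emph{only if} direction I will use Theorem~\ref{theo:tconj} to produce a fully quantum tensor conjugate channel dominating $\Lambda^c$, upgrade the resulting one-sided concatenation to an equivalence, and finally pass to conjugates. Throughout, recall that $\Lambda$ has a fully quantum input space $\LHin$, so Theorem~\ref{theo:tconj} and Corollary~\ref{coro:tconjcomp} are available.

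For the \emph{if} direction, suppose $\Lambda \eqncp \Lambda_0$ with $\Lambda_0 \in \nchset{\LHone}{\LHin}$ fully quantum. I would first show $\Lambda_0 \ncomp \Lambda_0^c$ by applying Corollary~\ref{coro:pqconj} with the normal channel $\Gamma := \Lambda_0^c$: its condition~(iv), namely $\Lambda_0^c \concp \Lambda_0^c$, holds trivially, so condition~(i), $\Lambda_0 \ncomp \Lambda_0^c$, follows. To transfer this to $\Lambda$, note that $\Lambda \eqncp \Lambda_0$ implies $\Lambda^c \eqncp \Lambda_0^c$ by Theorem~\ref{theo:equiv}; then Lemma~\ref{lemm:ccconsis}.2.(ii) gives $\Lambda \ncomp \Lambda^c$ if and only if $\Lambda_0 \ncomp \Lambda_0^c$, and hence $\Lambda \ncomp \Lambda^c$.

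For the \emph{only if} direction, assume $\Lambda \ncomp \Lambda^c$. Applying Theorem~\ref{theo:tconj} with $\Lambda_1 = \Lambda$ and $\Lambda_2 = \Lambda^c$ yields a tensor conjugate channel $\Gamma$ of $\Lambda$ --- in particular a fully quantum channel --- such that $\Lambda^c \concp \Gamma$. For the reverse concatenation, Corollary~\ref{coro:tconjcomp} gives $\Lambda \ncomp \Gamma$, whence $\Lambda \maxcomp \Gamma$ by Lemma~\ref{lemm:cimplication}, and the implication (ii)$\Rightarrow$(iii) of Theorem~\ref{theo:univbmax} gives $\Gamma \concp \Lambda^c$. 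Combining the two yields $\Gamma \eqncp \Lambda^c$. Now I would invoke Theorem~\ref{theo:equiv} in its conjugate form: the equivalence $\Gamma \eqncp \Lambda^c$ translates into $\Lambda^{cc} \eqncp \Gamma^c$, and since $\Lambda \eqncp \Lambda^{cc}$ by Corollary~\ref{coro:cc}, this gives $\Lambda \eqncp \Gamma^c$. Finally, because $\Gamma$ is fully quantum, Proposition~\ref{prop:pq} identifies $\Gamma^c$ up to $\eqncp$ with a tensor conjugate channel of $\Gamma$, which is itself a fully quantum channel; therefore $\Lambda$ is normally concatenation equivalent to a fully quantum channel.

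The main obstacle is closing the loop in the \emph{only if} direction: Theorem~\ref{theo:tconj} supplies only the one-sided bound $\Lambda^c \concp \Gamma$, so the equivalence $\Gamma \eqncp \Lambda^c$ must be obtained by independently deriving the reverse relation $\Gamma \concp \Lambda^c$ from the universality of the conjugate channel (Theorem~\ref{theo:univbmax}). The remaining delicate point is the bookkeeping when passing from $\Gamma \eqncp \Lambda^c$ to $\Lambda \eqncp \Gamma^c$ through the conjugation-reversing equivalence of Theorem~\ref{theo:equiv}, which relies on $\Lambda \eqncp \Lambda^{cc}$ to re-express the double conjugate of $\Lambda$; once this is in place, recognizing $\Gamma^c$ as equivalent to a fully quantum channel via Proposition~\ref{prop:pq} is routine.
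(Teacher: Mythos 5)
Your proof is correct and follows essentially the same route as the paper's: Theorem~\ref{theo:tconj} plus the universality of the conjugate channel to get $\Gamma \eqncp \Lambda^c$, then Corollary~\ref{coro:cc} and Theorem~\ref{theo:equiv} to conclude $\Lambda \eqncp \Gamma^c$ with $\Gamma^c$ fully quantum, and Lemma~\ref{lemm:ccconsis} for the converse. The only cosmetic difference is that you route the fact $\Lambda_0 \ncomp \Lambda_0^c$ for fully quantum $\Lambda_0$ through Corollary~\ref{coro:pqconj}, whereas the paper cites it directly from Corollary~\ref{coro:tconjcomp} and Proposition~\ref{prop:pq}.
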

\begin{proof}
Assume $\Lambda \ncomp \Lambda^c .$
Then Theorem~\ref{theo:tconj} implies that
there exists a tensor conjugate channel
$\Gamma \in \nchset{\LK}{\LHin}$ of $\Lambda $
satisfying
$\Lambda^c \concp \Gamma .$
From $\Lambda \maxcomp \Gamma ,$
we have $\Gamma \eqncp \Lambda^c .$
Thus from Corollary~\ref{coro:cc}
we obtain $\Lambda \eqncp \Lambda^{cc} \eqncp \Gamma^c .$
Since $\Gamma^c$ can be taken to be fully quantum,
$\Lambda$ is equivalent to a fully quantum channel.
Conversely, if $\Lambda \eqncp \Lambda_0$ for some fully quantum channel,
we have $\Lambda_0 \ncomp \Lambda_0^c .$
Thus from Lemma~\ref{lemm:ccconsis} we obtain $\Lambda \ncomp \Lambda^c .$
\end{proof}

\section{Comparison of the compatibility relations} \label{sec:comparison}
In this section, we compare the \cstar- and normal compatibility relations
for QC channels.
For this purpose we first show the following two propositions.

\begin{prop}
\label{prop:ccomm}
Let $\A$ be a \cstar-algebra and let $\A \gtensor \A$ be a \cstar-tensor product.
Then the following conditions are equivalent.
\begin{enumerate}[(i)]
\item
$\A$ is commutative;
\item
there exits a
channel
$\Lambda \in \chset{ \A \gtensor \A}{\A}$
satisfying
$
\Lambda (A \otimes \1_\A ) = 
\Lambda (\1_\A \otimes  A ) =
A  
$
for all $A \in \A .$
\end{enumerate}
Furthermore, the map $\Lambda$ in the condition~(ii) is
the representation such that
$\Lambda (A \otimes B) = AB$
for all $A , B \in \A .$
\end{prop}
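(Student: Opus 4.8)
The plan is to prove the two implications separately, reading off the explicit form of $\Lambda$ as a by-product of the second one.

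For (i)$\implies$(ii): Since $\A$ is commutative it is nuclear, so the \cstar-norm on $\A \atensor \A$ is unique and hence $\A \gtensor \A = \A \maxtensor \A$ for whichever tensor product we are handed. The two copies of the identity channel $\id_\A \in \ch{\A}$ have ranges equal to $\A$, which commute precisely because $\A$ is commutative, so Proposition~\ref{prop:cpconti} (claim~1) extends the algebraic multiplication map $A \otimes B \mapsto AB$ to a channel $\Lambda := \id_\A \maxtensor \id_\A \in \chset{\A \maxtensor \A}{\A} = \chset{\A \gtensor \A}{\A}$. This $\Lambda$ visibly satisfies $\Lambda (A \otimes \1_\A) = \Lambda (\1_\A \otimes A) = A$, giving~(ii).

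For (ii)$\implies$(i): The key observation is that the embedded copy $\A \otimes \1_\A$ lands in the multiplicative domain $M_\Lambda$. Indeed, for $X = A \otimes \1_\A$ the hypothesis gives $\Lambda (X^\ast X) = \Lambda (A^\ast A \otimes \1_\A) = A^\ast A = \Lambda (X)^\ast \Lambda (X)$, and likewise $\Lambda (X X^\ast) = \Lambda (X) \Lambda (X)^\ast$; these are exactly the two defining conditions for membership in $M_\Lambda$. I then invoke the characterization of the multiplicative domain recalled in the preliminaries: $A \otimes \1_\A \in M_\Lambda$ forces $\Lambda ((A \otimes \1_\A) Y) = A \Lambda (Y)$ and $\Lambda (Y (A \otimes \1_\A)) = \Lambda (Y) A$ for every $Y \in \A \gtensor \A$. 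Specializing to $Y = \1_\A \otimes B$ and using the identity $(A \otimes \1_\A)(\1_\A \otimes B) = A \otimes B = (\1_\A \otimes B)(A \otimes \1_\A)$, I obtain $\Lambda (A \otimes B) = A B$ from the first relation and $\Lambda (A \otimes B) = B A$ from the second. Hence $AB = BA$ for all $A, B \in \A$, so $\A$ is commutative. The same computation yields $\Lambda (A \otimes B) = AB$ on the algebraic tensor product $\A \atensor \A$, and by continuity of the channel $\Lambda$ on the dense subalgebra $\A \atensor \A$ this determines $\Lambda$ on all of $\A \gtensor \A$, establishing the final \emph{Furthermore} claim (and that $\Lambda$ is the multiplication representation).

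The main obstacle is the conceptual step in (ii)$\implies$(i): recognizing that the marginal conditions $\Lambda (A \otimes \1_\A) = A$ are strong enough to place the two embedded copies of $\A$ into the multiplicative domain. Once this is seen, the bimodule property of $M_\Lambda$ does all the work, simultaneously forcing commutativity and pinning down the explicit form of $\Lambda$. A minor point worth checking is that $(A \otimes \1_\A)(\1_\A \otimes B) = (\1_\A \otimes B)(A \otimes \1_\A)$ holds in \emph{any} \cstar-tensor product — it does, since the two marginal embeddings always have commuting ranges — and it is precisely this commutation that produces the two-sided equality $AB = BA$ out of the single element $A \otimes B$.
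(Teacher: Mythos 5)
Your proposal is correct and follows essentially the same route as the paper: for (i)$\implies$(ii) it uses nuclearity of commutative $\A$ and the extension of the multiplication map to $\A \maxtensor \A$, and for (ii)$\implies$(i) it places $\A \otimes \1_\A$ in the multiplicative domain $M_\Lambda$ and derives $AB = \Lambda(A \otimes B) = BA$, which is exactly the paper's argument. The identification of $\Lambda$ as the multiplication representation falls out of the same computation in both versions.
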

\begin{proof}
Assume~(i).
Since we have $\A \gtensor \A = \A \maxtensor \A ,$
the representation $\Lambda \colon \A \maxtensor \A \to \A$
given by
$\Lambda (A \otimes B) =AB$
$(A,B \in \A)$
satisfies the condition~(ii).

Assume~(ii).
Then we have
\begin{gather*}
	\Lambda ((A \otimes \1_\A )^\ast (A \otimes \1_\A ))
	=A^\ast A 
	=
	\Lambda ((A \otimes \1_\A )^\ast )
	\Lambda ( (A \otimes \1_\A )) ,
	\\
	\Lambda ((A \otimes \1_\A ) (A \otimes \1_\A )^\ast)
	=A A^\ast 
	=
	\Lambda ((A \otimes \1_\A ) )
	\Lambda ( (A \otimes \1_\A )^\ast) 
\end{gather*}
for all $A \in \A .$
Thus $\A \otimes \1_\A $ is contained in the multiplicative domain 
of $\Lambda .$
Therefore for each $A , B \in \A ,$ we have
\begin{align*}
	AB
	&=
	\Lambda (A \otimes \1_\A)
	\Lambda (\1_\A \otimes B)
	\\
	&=
	\Lambda (A \otimes B)
	\\
	&=
	\Lambda (\1_\A \otimes B)
	\Lambda (A \otimes \1_\A)
	\\
	&=BA .
\end{align*}
Hence $\A$ is commutative.
This also shows that $\Lambda$ satisfies the last part of the claim.
\end{proof}

A commutative von Neumann algebra $\A$ is called atomic if 
$\A$ is generated by minimal projections.
A commutative von Neumann algebra $\A$ is atomic if and only if $\A$ is normally isomorphic to
$\ell^\infty (\Omega)$ for a set $\Omega .$

\begin{prop}
\label{prop:ncomm}
Let $\A$ be a von Neumann algebra.
Then the following conditions are equivalent.
\begin{enumerate}[(i)]
\item
$\A$ is an atomic commutative von Neumann algebra;
\item
there exits a normal channel
$\Lambda \in \nchset{ \A \ntensor \A }{ \A}$
satisfying
$
\Lambda (A \otimes \1_\A ) = 
\Lambda (\1_\A \otimes  A ) =
A  
$
for all $A \in \A .$
\end{enumerate}
Furthermore, the map $\Lambda$ in the condition~(ii) is
the representation such that
$\Lambda (A \otimes B) = AB$
for all $A , B \in \A .$
\end{prop}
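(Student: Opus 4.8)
The plan is to reproduce the structure of the proof of Proposition~\ref{prop:ccomm} for the forward implication and for the commutativity part of the converse, and to isolate the genuinely new content of the present statement---\emph{atomicity}---in the normality hypothesis on $\Lambda .$ For the implication (i)$\implies$(ii), I would use that an atomic commutative von Neumann algebra is normally isomorphic to $\ell^\infty (\Omega)$ for some set $\Omega ,$ so that $\A \ntensor \A \cong \ell^\infty (\Omega \times \Omega) .$ The natural candidate for $\Lambda$ is the restriction to the diagonal, $\Lambda (F)(\omega) := F(\omega , \omega) ,$ which satisfies $\Lambda (f \otimes g) = fg$ and hence the required marginal conditions. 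Its normality is immediate, since suprema in $\ell^\infty$ are computed pointwise, so $\Lambda$ preserves suprema of bounded increasing nets; this also exhibits $\Lambda$ as the multiplication representation, giving the final assertion in this direction.

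For (ii)$\implies$(i), I would first argue, exactly as in Proposition~\ref{prop:ccomm}, that the marginal conditions place $\A \otimes \1_\A$ and $\1_\A \otimes \A$ in the multiplicative domain $M_\Lambda .$ Since $\A \ntensor \A$ is a $C^\ast$-algebra and $\Lambda$ is a channel, the multiplicative-domain machinery of the preliminaries then forces $\A$ to be commutative and $\Lambda$ to be the multiplication homomorphism $\Lambda (A \otimes B) = AB ;$ this is the unique normal extension of multiplication, and it already settles the \lq\lq{}furthermore\rq\rq{} part. The remaining and principal task is to deduce \emph{atomicity} from the normality of this multiplication map.

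Suppose, for contradiction, that $\A$ is commutative but not atomic. Then $\A$ has a nonzero diffuse (atom-free) direct summand, and passing to the support projection of a suitable cyclic vector one may assume there is a nonzero countably decomposable projection $e$ with $e \A$ diffuse; fixing a faithful normal state on $e \A$ realizes it as $L^\infty (\mu)$ for a non-atomic probability measure $\mu .$ Using non-atomicity (Sierpi\'nski--Lyapunov) I would build a dyadic tree of projections $p^n_1 , \dots , p^n_{2^n} \in e \A$ with $\sum_j p^n_j = e ,$ each $p^n_j$ splitting as $p^{n+1}_{2j-1} + p^{n+1}_{2j}$ and $\mu (p^n_j) = \mu (e) / 2^n .$ Setting
\[
	P_n := \sum_{j=1}^{2^n} p^n_j \otimes p^n_j \in \A \ntensor \A ,
\]
the $P_n$ form a decreasing sequence of projections; in the corner $(e \otimes e)(\A \ntensor \A)(e \otimes e) = e\A \ntensor e\A \cong L^\infty (\mu \times \mu)$ their infimum is the indicator of the diagonal, so $\inf_n P_n = 0$ because the diagonal is $\mu \times \mu$-null. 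On the other hand $\Lambda (P_n) = \sum_j p^n_j p^n_j = \sum_j p^n_j = e$ for every $n ,$ whence $\inf_n \Lambda (P_n) = e \neq 0 = \Lambda (\inf_n P_n) ,$ contradicting the normality of $\Lambda .$ Therefore $\A$ must be atomic.

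The hard part will be this last step: establishing $\inf_n P_n = 0$ cleanly in the general, possibly non-separable, setting. This rests on the reduction to a countably decomposable diffuse corner together with the identification $e\A \ntensor e\A \cong L^\infty (\mu \times \mu) ,$ so that the infimum may be computed pointwise almost everywhere and recognized as the null diagonal; once that is in place, the computation $\Lambda (P_n) = e$ and the resulting contradiction with normality are routine.
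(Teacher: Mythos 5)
Your proposal is correct, but it takes a different route from the paper on the substantive point. The paper's own proof is very short: it runs the multiplicative-domain argument of Proposition~\ref{prop:ccomm} to get commutativity and the identity $\Lambda (A \otimes B) = AB ,$ exactly as you do, and then outsources the entire atomicity equivalence (both directions) to Proposition~3.6 of Ref.~\onlinecite{Kaniowski2015}. You instead prove both directions from scratch: for (i)$\implies$(ii) you identify $\A \ntensor \A$ with $\ell^\infty (\Omega \times \Omega)$ (valid, since both are maximal abelian on $\ell^2 (\Omega \times \Omega)$ and one contains the other) and take the diagonal restriction; for the converse you run the dyadic-tree argument showing that the putative multiplication map cannot be normal on a diffuse corner. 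This is essentially a rigorous version of the informal observation the paper itself makes in Section~\ref{sec:comparison} (the copied distribution $\widetilde{P}$ concentrates on the diagonal and is not absolutely continuous with respect to $\mu \otimes \mu$), so your argument buys self-containedness at the cost of length, and it is the kind of argument the cited reference carries out. One small imprecision: the set-theoretic intersection $\bigcap_n \bigcup_j \left( A^n_j \times A^n_j \right)$ need not equal the diagonal (it is the set of pairs not separated by the filtration, and the diagonal itself need not even lie in $\Sigma \otimes \Sigma$); but this is harmless, since $(\mu \times \mu) \bigl( \bigcup_j A^n_j \times A^n_j \bigr) = \mu (e)^2 / 2^n \to 0$ already gives $\inf_n P_n = 0 ,$ which is all you use. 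The remaining steps --- $\Lambda (P_n) = e$ via the multiplicative domain, and the contradiction with normality applied to the increasing net $e \otimes e - P_n$ --- are sound.
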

\begin{proof}
Assume~(ii).
Then a similar proof as in Proposition~\ref{prop:ccomm}
yields that $\A$ is commutative and 
$\Lambda$ is the normal representation satisfying
$\Lambda (A \otimes B) = AB$
for all $A , B \in \A .$
The rest of the claim follows from Proposition~3.6 of
Ref.~\onlinecite{Kaniowski2015}.
\end{proof}

Suppose that, in Proposition~\ref{prop:ccomm},
$\A$ is the commutative von Neumann algebra
$L^\infty (\mu)$ for some $\sigma$-finite measure space
$(\Omega ,\Sigma , \mu) .$
Now we show that $\Lambda$ in Proposition~\ref{prop:ccomm}
corresponds to the copying operation of 
the classical outcome $ \omega \in \Omega .$
Take an arbitrary normal state $\vph$ on $L^\infty (\mu) .$
Then there exists a unique probability measure $P$ on $(\Omega , \Sigma)$
absolutely continuous with respect to $\mu$ such that
\[
	\vph ([f]_\mu)
	=
	\int_\Omega
	f(\omega) dP(\omega)
\]
for all
$[f]_\mu \in L^\infty (\mu) .$
By the copying operation 
$
\Omega \ni \omega 
\mapsto 
(\omega , \omega) 
\in 
\Omega \times \Omega ,
$
the probability measure $P$ is transformed into the probability measure
$\widetilde{P}$ on
$(\Omega \times \Omega , \Sigma \otimes \Sigma )$
defined by
$
	\widetilde{P}(E) := 
	P (
	\set{\omega \in \Omega | (\omega , \omega ) \in E}
	)
$
for $E \in \Sigma \otimes \Sigma .$
Then for each $E_1 , E_2 \in \Sigma $ we have
\begin{align*}
	\widetilde{P}(E_1 \times E_2)
	&=
	P(E_1 \cap E_2)
	\\
	&=
	\int_\Omega
	\chi_{E_1} (\omega)
	\chi_{E_2} (\omega)
	dP (\omega)
	\\
	&=
	\braket{
	\vph ,
	[\chi_{E_1}]_\mu  [\chi_{E_2}]_\mu 
	}
	\\
	&=
	\braket{
	\vph ,
	\Lambda ( [\chi_{E_1}]_\mu \otimes  [\chi_{E_2}]_\mu ) 
	}
	\\
	&=
	\braket{
	\Lambda^\ast \vph ,
	 [\chi_{E_1}]_\mu \otimes  [\chi_{E_2}]_\mu 
	} .
\end{align*}
Therefore the state $\Lambda^\ast \vph$ corresponds to the probability distribution
$\widetilde{P}$ obtained from the copying operation.

On the other hand, Proposition~\ref{prop:ncomm} indicates that
such copying operations are possible only for discrete spaces
if we identify the composite outcome space as the normal tensor product.
We can also see this from the following observation.
Let $(\Omega , \Sigma ,\mu) , $ $\vph , $ 
$P,$ and $\widetilde{P}$ be the same as the above.
The normal tensor product 
$L^\infty (\mu) \ntensor  L^\infty (\mu)$ is normally isomorphic to
$L^\infty (\mu \otimes \mu) ,$
where $\mu \otimes \mu$ is the direct product measure on
$(\Omega\times \Omega , \Sigma \otimes \Sigma ) .$
Since the measure $\widetilde{P}$ is concentrated
on the diagonal set
$
D := 
\set{ 
(\omega_1 , \omega_2) \in \Omega \times \Omega
|
\omega_1 = \omega_2
} ,
$
$\widetilde{P}$ is not absolutely continuous
with respect to $\mu \otimes \mu $
unless $\mu$ is discrete.
Thus $\widetilde{P}$ does not generally correspond to a normal state 
on $L^\infty (\mu) \ntensor  L^\infty (\mu). $
Therefore the copying operations for the commutative algebras 
are not realizable if we define 
the compatibility relation by the normal tensor product.

Finally, to explicitly show the difference of the compatibility relations 
$\maxcomp$ and $\ncomp ,$
we show the following proposition.

\begin{prop}
\label{prop:dpovm}
Let $(\Omega , \Sigma , \mu)$ be a $\sigma$-finite measure space
and let $(\Omega , \Sigma , \oP)$ be the PVM on 
$\Hin = L^2 (\mu)$ defined by
$\oP (E) [f]_\mu := [\chi_E f]_\mu$
$(E \in \Sigma , [f]_\mu \in L^2 (\mu) ) .$
Then $\Gamma^{\oP} \ncomp \Gamma^{\oP}$ if and only if
$L^\infty (\mu)$ is atomic.
\end{prop}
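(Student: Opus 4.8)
The plan is to identify $\Gamma^{\oP}$ explicitly and then reduce the normal self-compatibility to the existence of a normal copying channel, so that Proposition~\ref{prop:ncomm} applies directly. First I would set $\A := L^\infty (\mu)$ and observe that, since $\oP$ already acts on $\Hin = L^2 (\mu)$, the triple $(\Hin , \pi , \1_{\Hin})$ with $\pi (g) := M_g$ (the multiplication operator $M_g [f]_\mu := [g f]_\mu$) is a minimal Stinespring representation of the pre-QC channel $\gamma^{\oP}$: indeed $\gamma^{\oP}(g) = \int_\Omega g \, d\oP = M_g = \1_{\Hin}^\ast \pi(g) \1_{\Hin}$, and $\pi (B(\Omega , \Sigma)) \1_{\Hin} \Hin$ is total in $\Hin$ because it already contains $\pi(\chi_\Omega)\Hin = \Hin$. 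Consequently $\pi (B(\Omega , \Sigma))^{\prime \prime}$ is the maximal abelian multiplication algebra $\A = L^\infty (\mu)$ acting on $L^2 (\mu)$, and $\Gamma^{\oP} \in \nchset{\A}{\LHin}$ is simply the inclusion $A \mapsto A$. Thus $\Gamma^{\oP} \ncomp \Gamma^{\oP}$ amounts to asking for a normal channel $\Theta \in \nchset{\A \ntensor \A}{\LHin}$ with $\Theta (A \otimes \1_\A) = A$ and $\Theta (\1_\A \otimes B) = B$ for all $A , B \in \A$.

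For the direction ``atomic $\Rightarrow$ compatible'', suppose $\A$ is atomic. By the implication (i)$\Rightarrow$(ii) of Proposition~\ref{prop:ncomm} there is a normal channel $\Lambda \in \nchset{\A \ntensor \A}{\A}$ satisfying $\Lambda (A \otimes \1_\A) = \Lambda (\1_\A \otimes A) = A$. I would then set $\Theta := \Gamma^{\oP} \circ \Lambda$, which is a normal channel in $\nchset{\A \ntensor \A}{\LHin}$ as a composition of normal channels, and whose marginals are $\Theta (A \otimes \1_\A) = \Gamma^{\oP}(A)$ and $\Theta (\1_\A \otimes B) = \Gamma^{\oP}(B)$. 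Hence $\Theta$ is a normal joint channel of $\Gamma^{\oP}$ with itself, giving $\Gamma^{\oP} \ncomp \Gamma^{\oP}$.

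For the converse, suppose $\Gamma^{\oP} \ncomp \Gamma^{\oP}$ with a normal joint channel $\Theta$ as above. The key step is to show that $\Theta$ actually takes values in $\A$, so that it corestricts to a copying channel. First I would check that $A \otimes \1_\A$ and $\1_\A \otimes B$ lie in the multiplicative domain $M_\Theta$: for instance $\Theta((A \otimes \1_\A)^\ast (A \otimes \1_\A)) = \Theta(A^\ast A \otimes \1_\A) = A^\ast A = \Theta(A \otimes \1_\A)^\ast \Theta(A \otimes \1_\A)$, and similarly on the other side. Using the multiplicative-domain identity quoted after the definition of $M_\Lambda$, this yields $\Theta(A \otimes B) = \Theta(A \otimes \1_\A)\Theta(\1_\A \otimes B) = AB \in \A$ for all $A , B \in \A$, so $\Theta$ maps the algebraic tensor product $\A \atensor \A$ into $\A$. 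Since $\A \atensor \A$ is ultraweakly dense in $\A \ntensor \A$, $\Theta$ is normal (ultraweakly continuous), and $\A$ is ultraweakly closed, it follows that $\Theta(\A \ntensor \A) \subseteq \A$. The corestriction $\Lambda \in \nchset{\A \ntensor \A}{\A}$ is then a normal channel with $\Lambda(A \otimes \1_\A) = \Lambda(\1_\A \otimes A) = A$, and the implication (ii)$\Rightarrow$(i) of Proposition~\ref{prop:ncomm} forces $\A = L^\infty(\mu)$ to be atomic.

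I expect the main obstacle to be precisely the corestriction argument in the converse direction: verifying membership in the multiplicative domain and then upgrading $\Theta(\A \atensor \A) \subseteq \A$ to $\Theta(\A \ntensor \A) \subseteq \A$ via normality and ultraweak density. The identification of $\Gamma^{\oP}$ with the inclusion of $L^\infty(\mu)$ into $\cL(L^2(\mu))$ is routine but must be stated carefully, since the whole reduction to Proposition~\ref{prop:ncomm} hinges on the outcome space of $\Gamma^{\oP}$ being $\A$ itself.
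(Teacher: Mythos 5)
Your proof is correct, but it takes a genuinely different route from the paper's. You reduce the whole statement to Proposition~\ref{prop:ncomm}: in the forward direction you compose the normal broadcasting channel on $\A = L^\infty(\mu)$ with the inclusion $\Gamma^{\oP}$, and in the converse you show that any normal joint channel $\Theta$ of $\Gamma^{\oP}$ with itself corestricts to such a broadcasting channel, using that $\A \otimes \1_\A$ and $\1_\A \otimes \A$ lie in the multiplicative domain of $\Theta$ (because $\Gamma^{\oP}$ is multiplicative) and then upgrading $\Theta(\A \atensor \A) \subseteq \A$ to $\Theta(\A \ntensor \A) \subseteq \A$ by ultraweak density and normality. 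All of these steps check out; in particular the identification of $(\Hin , \Gamma^{\oP} , \1_{\Hin})$ as a minimal Stinespring representation and of the outcome space as the maximal abelian algebra $L^\infty(\mu)$ agrees with the paper. The paper, by contrast, never invokes Proposition~\ref{prop:ncomm} here: it first establishes that $\oP$ is a \emph{maximal} POVM via Theorem~\ref{theo:maximalpovm} (using $L^\infty(\mu)^\prime = L^\infty(\mu)$), so that $\Gamma^{\oP} \eqncp (\Gamma^{\oP})^c$, and then applies Corollary~\ref{coro:fq}: in the atomic case it exhibits an explicit fully quantum channel normally concatenation equivalent to $\Gamma^{\oP}$ to get $\Gamma^{\oP} \ncomp (\Gamma^{\oP})^c$, and in the converse it deduces from $\Gamma^{\oP} \ncomp (\Gamma^{\oP})^c$ that $\Gamma^{\oP}$ is equivalent to a fully quantum channel, whence atomicity follows from minimal sufficiency and the argument of Theorem~5 of Ref.~\onlinecite{kuramochi2017minimal}. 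Your argument is shorter and more self-contained, but ultimately rests on the broadcasting result of Ref.~\onlinecite{Kaniowski2015} underlying Proposition~\ref{prop:ncomm}; the paper's argument avoids that proposition and instead exercises the conjugate-channel and maximality machinery, and as a by-product records the slightly stronger fact that normal self-compatibility of $\Gamma^{\oP}$ is equivalent to $\Gamma^{\oP} \ncomp (\Gamma^{\oP})^c$.
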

\begin{proof}
We first show that $\oP$ is a maximal POVM
on $L^2 (\mu) .$
The QC channel $\Gamma^\oP$
is, up to normal isomorphism, 
the faithful representation in $\nchset{L^\infty (\mu)}{\cL (L^2 (\mu) )}$
given by
$
\Gamma^{\oP}([f]_\mu) [g]_\mu
:=
[fg]_\mu
$
$
(
[f]_\mu \in L^\infty (\mu) ,
\,
[g]_\mu \in L^2 (\mu)
) ,
$
from which we regard $L^\infty (\mu) \subseteq \cL (L^2 (\mu)) .$
Since $(L^2 (\mu) , \Gamma^\oP , \1_{L^2 (\mu)})$
is a minimal Stinespring representation of $\Gamma^\oP$
and $L^\infty(\mu)^\prime = L^\infty (\mu) ,$
Theorem~\ref{theo:maximalpovm} implies that
$\oP$ is a maximal POVM on $L^2 (\mu).$

Assume that $L^\infty (\mu)$ is atomic
and take a sequence of orthogonal projections
$\{ P_n\}_{n \in \natn} \subseteq L^\infty (\mu)$
such that
$
L^\infty (\mu) 
=
\Set{\sum_{n\in \natn} a_n P_n | a_n \in \cmplx ,\, \sup_{n \in \natn} |a_n| < \infty}.
$
Then $\Gamma^\oP$ is normally concatenation equivalent to the fully quantum channel
$\Gamma \in \nchset{ \cL ( \ell^2 (\natn) )}{\cL  ( L^2 (\mu)  )}$
defined by
$
\Gamma (A)
:=
\sum_{n\in \natn}
\braket{\delta_n | A \delta_n}
P_n ,
$
where $\ell^2 (\natn)$ is the Hilbert space of square-summable 
complex-valued functions on $\natn$
and
\[
	\delta_n (m):=
	\begin{cases}
	1, & (m=n);
	\\
	0, & (m \neq n).
	\end{cases}
\] 
Therefore from Corollary~\ref{coro:fq} we have 
$\Gamma^\oP \ncomp  (\Gamma^\oP)^c ,$
which implies $\Gamma^\oP \ncomp \Gamma^\oP$
since $\Gamma^\oP \concp ( \Gamma^\oP)^c .$

Assume $\Gamma^\oP \ncomp \Gamma^\oP .$
Since $\Gamma^\oP \eqncp (\Gamma^\oP)^c$
from the maximality of $\oP ,$
we have $\Gamma^\oP \ncomp  (\Gamma^\oP)^c .$
Thus Corollary~\ref{coro:fq} implies that
$\Gamma^{\oP}$ is normally concatenation equivalent to a 
fully quantum channel.
Furthermore, $\Gamma^\oP$ is minimal sufficient since
$\Gamma^\oP$ is injective.
Therefore, by applying a similar discussion in Theorem~5 of 
Ref.~\onlinecite{kuramochi2017minimal},
we conclude that
$L^\infty(\mu)$ is atomic.
\end{proof}

From Proposition~\ref{prop:dpovm}, we can conclude that
$\Gamma^\oP$ is not normally compatible with itself
if the space $L^\infty (\mu)$ is not atomic.
An example of non-atomic $ \mu$ is 
the Lebesgue measure on the real line 
$(\realn , \B (\realn)) .$
In this case, the PVM $\oP$ corresponds to the position measurement 
of a $1$-dimensional quantum particle.
Thus we have shown that the normal compatibility relation 
$\ncomp$ is strictly stronger
than the \cstar-compatibility relation $\maxcomp $
even for QC channels.

\section{Concluding remarks} \label{sec:conclusion}

In this paper, we have developed the theory of 
quantum incompatibility for channels with outcome operator algebras
and found that the concept of the commutant conjugate channel
plays a crucial role for the $\max$- and $\mathrm{bin}$-compatibility relations
$\maxcomp $ and $\bcomp .$
Still many problems remain unsolved,
some of which we list in the following.
\begin{enumerate}
\item
We have not obtained any non-trivial result for
the $\min$-compatibility relation $\mincomp .$
For example, the characterization of the class of $\min$-compatible
channels for a given channel remains to be unsolved.
\item
The operational meanings of the compatibility relations
$\maxcomp$ and $\mincomp$ for 
channels with non-commutative outcome spaces 
are also unclear.
\item
The definition of the commutant conjugate channel
strongly depends on the assumption 
that the input space is the full operator algebra 
$\LHin .$
For general \emph{input} von Neumann algebra $\Min$,
can we still have any generalization of the conjugate channel?
We can also ask how the class of compatible channels is characterized
in this case.
Note that the case of commutative $\Min$ is trivial because
the identity channel $\id_{\Min}$ is compatible with itself
in the sense of any \cstar-tensor product.
(The joint channel in this case is the representation $\Lambda$ 
given in Proposition~\ref{prop:ccomm}).
\item
Let $\mathfrak{C}^\sigma_\Lambda$ denote the class of 
normal channels normally compatible with a normal channel $\Lambda .$
For normal channels $\Lambda $ and $\Gamma ,$
we may ask whether the conditions 
$\Lambda \concp \Gamma$
and 
$\mathfrak{C}^\sigma_\Gamma \subseteq \mathfrak{C}^\sigma_\Lambda $
are equivalent.
If it is true for any QC channels,
this would be considered as another generalization of
the qualitative information-disturbance relation
proved for discrete POVMs in Ref.~\onlinecite{PhysRevA.88.042117}.
\end{enumerate}

\begin{acknowledgments}
The author would like to thank Teiko Heinosaari and Takayuki Miyadera
for helpful discussions and for sharing their research notes.
He also would like to thank Erkka Haapasalo for comments on the 
first version of this paper.
This work was supported by the National Natural Science Foundation of China 
(Grants No.\ 11374375 and No.\ 11574405).
\end{acknowledgments}

\end{document}